\tikzset{initial text={}}
\begin{document}
\title{Regular Abstractions for Array Systems}

\author{Chih-Duo Hong}
\orcid{0000-0002-4064-8413}
\affiliation{%
  \institution{National Chengchi University}
  \city{Taipei}
  \country{Taiwan}
}
\email{chihduo@nccu.edu.tw}
\authornote{Corresponding author}

\author{Anthony W. Lin}
\orcid{0000-0003-4715-5096}
\affiliation{%
  \institution{University of Kaiserslautern-Landau}
  \city{Kaiserslautern}
  \country{Germany}
}
\affiliation{%
  \institution{MPI-SWS}
  \city{Kaiserslautern}
  \country{Germany}
}
\email{awlin@mpi-sws.org}

\newlist{steps}{enumerate}{1}
\setlist[steps, 1]{leftmargin=1.2cm, label = Step \arabic*:}

\newcommand{\OMIT}[1]{}
\newcommand{\assign}[1]{\Sigma(#1)}
\newcommand{\ii}{\mathtt{i}}

\newcommand{\size}[1]{\ensuremath{\lvert #1 \rvert}}
\newcommand{\op}[1]{\operatorname{#1}}
\newcommand{\const}[1]{{\sf #1}}
\newcommand{\iset}[1]{\langle #1 \rangle}
\newcommand{\arr}[1]{\langle #1 \rangle}
\newcommand{\pred}[1]{\mathcal{#1}}

\newcommand{\NatInt}[2]{\ensuremath{\{#1,\ldots,#2\}}}
\renewcommand{\mod}[2]{\ensuremath{#1~{\rm mod}~#2}}
\newcommand{\floor}[1]{\lfloor #1 \rfloor}
\newcommand{\ceil}[1]{\lceil #1 \rceil}
\newcommand{\lowerbnd}[1]{\lfloor{#1}\rfloor}
\newcommand{\upperbnd}[1]{\lceil{#1}\rceil}
\newcommand{\norm}[1]{\lVert#1\rVert}
\newcommand{\nonneg}[1]{#1 \ge 1}

\newcommand{\voc}{\mathcal{V}}
\newcommand{\AP}{\mathsf{AP}}
\newcommand{\ACT}{\mathsf{ACT}}
\newcommand{\nondet}[2]{\mathsf{ndet}(#1,#2)}
\newcommand{\regularPTS}{\iset{S, P, r_+, \{r_a \}_{a\in\ACT}}}
\newcommand{\PTS}{\iset{S, P, +, \{\delta_a \}_{a\in\ACT}}}
\newcommand{\LTS}{\iset{S, \{\to_a \}_{a\in\ACT}}}
\newcommand{\TranSystemTriple}{\voc, \phi_I, \phi_T}
\newcommand{\TranSystem}{\ensuremath{(\TranSystemTriple)}}
\newcommand{\TranSystemSafe}{\ensuremath{(\TranSystemTriple, \phi_{bad})}}
\newcommand{\TranSystemFair}{\ensuremath{(\TranSystemTriple, \phi_{fair})}}

\newcommand{\UNIV}{\ensuremath{\textrm{FO}_\textrm{REG}}}
\newcommand{\args}[1]{\overline{#1}}
\newcommand{\ialphabet}{\ensuremath{\Sigma}}
\newcommand{\struct}{\mathfrak{S}}
\newcommand{\sem}[1]{\llbracket #1 \rrbracket}
\newcommand{\FO}[1]{\mathrm{FO}(#1)}
\newcommand{\MSO}[1]{\mathrm{MSO}(#1)}
\newcommand{\WMSO}[1]{\mathrm{WMSO}(#1)}
\newcommand{\zero}{\mathtt{0}}
\newcommand{\one}{\mathtt{1}}
\newcommand{\true}{\mathsf{true}}
\newcommand{\false}{\mathsf{false}}
\newcommand{\subst}[3]{#1\, [#2 / #3]}
\newcommand{\eqlen}{\ensuremath{eqL}}
\newcommand{\univ}{\,\mathfrak{U}\,}
\newcommand{\univDesc}{\langle \ialphabet^*\!,\: \preceq,\: {\eqlen,}\: \{\prec_a\}_{a \in \ialphabet}\rangle}
\newcommand{\lang}[1]{\mathcal{L}(#1)}
\newcommand{\rel}[1]{\sem{#1}}
\newcommand{\allows}{\,\rhd\,}
\newcommand{\bound}{n}
\newcommand{\theory}{\mathcal{T}}

\newcommand{\abs}{\kappa}

\makeatletter
\DeclareRobustCommand{\circle}{\mathord{\mathpalette\is@circle\relax}}
\newcommand\is@circle[2]{%
    \begingroup
    \sbox\z@{\raisebox{\depth}{$\m@th#1\bigcirc$}}%
    \sbox\tw@{$#1\square$}%
    \resizebox{!}{\ht\tw@}{\usebox{\z@}}%
    \endgroup
}
\makeatother
\newcommand{\Mydiamond}[1]{\ensuremath{\langle #1 \rangle}}
\def\G{\mathbf{G} \,}
\newcommand{\F}{\mathbf{F} \,}
\newcommand{\GF}{\mathbf{GF} \,}
\newcommand{\FG}{\mathbf{FG} \,}
\newcommand{\Until}{\,\mathbf{U}\,}
\newcommand{\Next}{\circle\,}

\newcommand{\Nat}{\mathbb{N}}
\newcommand{\Real}{\mathbb{R}}
\newcommand{\Rat}{\mathbb{Q}}
\newcommand{\Int}{{\mathbb{Z}}}

\newcommand{\T}{\mathsf{T}}
\newcommand{\N}{\mathsf{N}}
\newcommand{\W}{\mathsf{W}}
\newcommand{\R}{\mathsf{R}}
\renewcommand{\S}{\mathsf{S}}
\newcommand{\blank}{\mathtt{\#}}

\newcommand{\structA}{\mathfrak{A}}
\newcommand{\structB}{\mathfrak{B}}
\newcommand{\structW}{\mathfrak{W}}
\newcommand{\structT}{\mathfrak{T}}
\newcommand{\structM}{\mathfrak{M}}
\newcommand{\structR}{\mathfrak{R}}
\newcommand{\structH}{\mathfrak{H}}

\newcommand{\bbA}{\mathbb{A}}
\newcommand{\bbE}{\mathbb{E}}
\newcommand{\bbD}{\mathbb{D}}
\newcommand{\bbP}{\mathbb{P}}
\newcommand{\bbQ}{\mathbb{Q}}
\newcommand{\bbI}{\mathbb{I}}
\newcommand{\bbR}{\mathbb{R}}
\newcommand{\bbS}{\mathbb{S}}
\newcommand{\bbZ}{\mathbb{Z}}

\newcommand{\cA}{\mathcal{A}}
\newcommand{\cB}{\mathcal{B}}
\newcommand{\cC}{\mathcal{C}}
\newcommand{\cD}{\mathcal{D}}
\newcommand{\cI}{\mathcal{I}}
\newcommand{\cJ}{\mathcal{J}}
\newcommand{\cE}{\mathcal{E}}
\newcommand{\cF}{\mathcal{F}}
\newcommand{\cG}{\mathcal{G}}
\newcommand{\cV}{\mathcal{V}}
\newcommand{\cM}{\mathcal{M}}
\newcommand{\cQ}{\mathcal{Q}}
\newcommand{\cR}{\mathcal{R}}
\newcommand{\cT}{\mathcal{T}}
\newcommand{\cP}{\mathcal{P}}
\newcommand{\cL}{\mathcal{L}}
\newcommand{\cS}{\mathcal{S}}

\newcommand{\mT}{\mathfrak{T}}
\newcommand{\mS}{\mathfrak{S}}
\newcommand{\mU}{\mathfrak{U}}

\newcommand{\langA}{A}
\newcommand{\langB}{B}
\newcommand{\langT}{T}
\newcommand{\langI}{I}
\newcommand{\langR}{R}
\newcommand{\langE}{Z}

\newcommand{\concat}{\cdot}
\newcommand{\+}{\ensuremath{\!\cdot\!}}

\newcommand{\ModelRun}{\ensuremath{\pi}}
\newcommand{\empseq}{\ensuremath{\varepsilon}}
\newcommand{\tran}[1]{\ensuremath{\stackrel{#1}{\longrightarrow}}}

\newcommand{\MEM}{\mathit{Mem}}
\newcommand{\EQ}{\mathit{Equ}}

\newcommand{\Fp}{F_Y}
\newcommand{\myvec}[1]{
\begin{bmatrix}
	#1
\end{bmatrix}	
}

\newcommand{\toss}{{\sf toss}}
\newcommand{\tail}{{\sf tail}}
\newcommand{\head}{{\sf head}}
\newcommand{\reset}{{\sf \bot}}

\newcommand{\libalf}[0]{LibAlf}

\newcommand{\eval}[1]{\llbracket#1\rrbracket}

\newcommand{\constr}[1]{\mathsf{cstr}(#1)}

\def\qed {{%
        \parfillskip=0pt        %
        \widowpenalty=10000     %
        \displaywidowpenalty=10000  %
        \finalhyphendemerits=0  %
        \leavevmode             %
        \unskip                 %
        \nobreak                %
        \hfil                   %
        \penalty50              %
        \hskip.2em              %
        \null                   %
        \hfill                  %
        \qedsymbol\kern-.6pt%
        \par}}                  %
\def\qedsymbol{$\square$}

\newcommand{\note}[1]{{\color{red} #1}}
\newcommand{\sidechihduo}[1]{\todo[backgroundcolor=blue!20]{{\bf C} #1}}
\newcommand{\sideanthony}[1]{\todo[backgroundcolor=green!20]{{\bf A} #1}}

\begin{abstract}
    Verifying safety and liveness over array systems is a highly challenging problem. Array systems naturally capture parameterized systems such as distributed protocols with an unbounded number of processes. Such distributed protocols often exploit process IDs during their computation, resulting in array systems whose element values range over an infinite domain. In this paper, we develop a novel framework for proving safety and liveness over array systems. The crux of the framework is to overapproximate an array system as a string rewriting system (i.e.~over a finite alphabet) by means of a new predicate abstraction that exploits the so-called indexed predicates. This allows us to tap into powerful verification methods for string rewriting systems that have been heavily developed in the last two decades or so (e.g.~regular model checking).
We demonstrate how our method yields simple, automatically verifiable proofs of safety and liveness properties for challenging examples, including Dijkstra's self-stabilizing protocol and the Chang-Roberts leader election protocol.

\end{abstract}

\keywords{Array theory, Regular model checking, Infinite-state model checking, Abstract interpretation, Predicate abstraction, Distributed protocol verification}

\maketitle

\section{Introduction}

Over the past few decades, extensive research efforts
(e.g.~\cite{ma2019i4,gurfinkel2016smt,mann2022counterexample,felli2021smt,cimatti2021universal})
have been devoted to the verification of \emph{array systems}.
Array systems are natural models of sequential programs
manipulating linear data structures such as arrays and lists.
In addition, they have also been used as convenient abstractions
of parameterized concurrent systems with local and shared variables
(cf.~\cite{alberti2017framework,sasha-book}).
Specifically, many distributed protocols require each process to maintain
a numerical local variable as its process identifier,
rendering these protocols suitable to be modeled as array systems.

Despite the amount of work on array systems verification, the problem
remains highly challenging. Difficulty arises from the following two aspects, among others:
(1) array elements range over an \emph{infinite domain} (e.g.~the set of integers), and
(2) many properties of interest require \emph{quantification} over the array elements.
For instance, the property ``array $a$ is sorted in ascending order''
is expressed by a universally quantified formula stating that
each element in the array $a$ is no smaller than its preceding elements.
For this reason, in order to verify the correctness of an array system,
it is often necessary to reason about quantified formulae,
which is generally undecidable over arrays (cf.~\cite{kroening-book,bradley-book}).

Owing to the undecidability of quantified array theories,
existing techniques for handling array systems employ a mixture of SMT, model checking, synthesis, and/or abstraction.
Most effort in developing SMT over arrays concentrates on providing more support of universal quantification \cite{mann2022counterexample,cimatti2021universal,ge2009complete,bradley2006,habermehl2008logic}.
To lift SMT over arrays to the verification of array systems, one may exploit model checking techniques like IC3/PDR (e.g.~\cite{ic3-array-bjorner,cimatti2016infinite,gurfinkel2018quantifiers}), backward reachability (e.g.~\cite{ranise2010backward,abdulla2009approximated}), synthesis (e.g.~SyGuS \cite{fedyukovich2019quantified}), eager abstraction (e.g.~\cite{mcmillan2018eager,mann2022counterexample}),
interpolation (e.g.~\cite{mcmillan2008quantified,hoenicke2018efficient,ghilardi2021interpolation}), or predicate abstraction
\cite{flanagan2002predicate,lahiri2004indexed,lahiri2007predicate}, among others. 
It should be remarked that, besides a handful of work (e.g.~\cite{padon2017reducing,padon2021temporal}), most existing work on array systems verification concerns \emph{safety verification}, with liveness still posing a significant challenge.

\paragraph{Contributions.}
This paper presents a new method for reasoning about array systems. We demonstrate its efficacy for verifying safety and liveness of challenging examples from array programs and distributed protocols, including Dijkstra's self-stabilizing protocol and the Chang-Roberts leader election protocol. The method has two main ingredients. Firstly, we provide a new \emph{predicate abstraction} (using \emph{indexed predicates}) that overapproximates an array system as a \emph{string rewriting system}.
String rewriting systems can be construed as array systems, whose array elements range over a \emph{finite domain} (a.k.a.~alphabet). The second ingredient answers why we use string rewriting systems as abstractions: this subclass of array systems has been more amenable to solutions than the general case of arrays. In fact,
several powerful verification methods over string rewriting systems (for \emph{both} safety and liveness, among others) have been developed in the last decade or so, which include methods in the framework \emph{regular model checking} \cite{RMC,bouajjani2004abstract,lin2022regular,Parosh12}.
This framework relies on the decidable first-order theory over words with prefix-of-relation, regular constraints, and the equal-length predicate (a.k.a.~the universal automatic structure \cite{blumensath2000automatic,benedikt2003definable}). The theory of universal automatic structure can be construed as a kind of array theory, whose elements range over a finite domain (e.g. bitvectors). In stark contrast to
decidable array theories like~\cite{habermehl2008logic,bradley2006}, however, the theory has no restrictions on the use of quantifiers, but restricts the manner in which the indices are related.
The set of solutions definable by such formulae are precisely those that are captured by synchronized automata running on tuples of words (a.k.a.~\emph{regular relations}).

\begin{table}
    \centering
    \caption{Comparison between our approach and the classical methods.
    The first column is the setting of the classical (indexed) predicate abstraction
    (e.g.~\cite{flanagan2002predicate,lahiri2007predicate,jhala2018predicate}).
    The second column provides our setting of indexed predicate abstraction (Section~\ref{sec:indexed-predicate-abstraction}).
    The last column is our approximation of indexed predicate abstraction using regular languages (Section~\ref{sec:regular-approximation}), which we refer to as
    \emph{regular abstraction}.
    }
    \scalebox{0.9}{
    \begin{tabular}{|l||c|c|c|}
         \hhline{-||---} 
         & \thead{Predicate\\Abstraction} & \thead{Indexed Predicate\\Abstraction} & \thead{Regular\\Abstraction}\\
         \hhline{=::===}
         {\small Abstract Set} & \thead{finite set of\\bitvectors} & \thead{infinite set of words\\over bitvectors} & \thead{finite-state\\automaton} \\
         \hhline{-||---}
         {\small Abstract Relation} & \thead{finite relation\\over bitvectors} & \thead{infinite relation\\over words} & \thead{finite-state\\transducer} \\
         \hhline{-||---}
         {\small Abstract System} & \thead{finite Boolean\\program} & \thead{infinite-state\\transition system} & \thead{regular\\transition system} \\
         \hhline{-||---}
    \end{tabular}}
    \label{fig:comparisons}
\end{table}

We now provide some details of our method. We use a variant of \emph{First-Order Linear Temporal Logic (FO-LTL)}~\cite{abadi1989power,hodkinson1999decidable} restricted to a quantified array theory,
in order to specify both an array system and properties to be verified.
This formalism may also be regarded as a variant of \emph{indexed LTL}
\cite{clarke1986reasoning,german1992reasoning}
with atomic propositions replaced by expressions from the array theory. Our predicate abstraction reduces this FO-LTL model checking problem over array systems to a verification problem over string rewriting systems that can be modeled in the framework of regular model checking. We discuss next our notion of predicate abstraction by means of indexed predicates. Indexed predicates have been used in previous work (e.g.~in \cite{flanagan2002predicate} for invariant generation and in \cite{lahiri2007predicate} for abstracting into a finite system), but never in the context of computing another infinite-state system that is more amenable to analysis. To this end, the right logic instantiating the indexed predicates has to be devised that suits our abstraction as string rewriting systems. In our case,
each \emph{indexed predicate} is an atomic formula (a.k.a.~atom) with a designated variable $i$, e.g., $a[i] \le a[n]$.
Given indexed predicates $\iset{P_1,\ldots,P_m}$, interpretations of an array formula $\phi$ are mapped into a word $w$ over the alphabet $\{0,1\}^m$ of $m$-bitvectors. This word $w$ summarizes the truth values of the predicates over all positions in the arrays interpreting $\phi$.

Although we have chosen to abstract away the actual array values by means of such indexed predicates, our abstraction maintains the concrete values of array sizes and index variables in the abstract domain. This strategy often allows us to compute a precise enough
abstraction based on a small set of predicates. Given an interpretation $\sigma$ of an array formula (as numbers and arrays), the set $\alpha(\sigma)$ of abstract values corresponding to $\sigma$ (as tuples of numbers and words over a finite alphabet) is computable.
We propose next to overapproximate sets and relations in the abstract domain with regular sets and relations.
We show that abstractions of quantified array formulae and array systems can be suitably approximated
using regular language as a symbolic representation.
This yields our notion of \emph{regular abstractions} for array formulae and systems.
Figure~\ref{fig:comparisons} compares this approach with the classical predicate abstraction.

We show that regular abstractions are sufficiently precise for verification purposes
when the quantified array formulae used in the FO-LTL specifications
belong to the so-called \emph{Singly-Indexed Array Logic (SIA)} \cite{habermehl2008logic}. To evaluate our approach,
we consider non-trivial safety and liveness properties
for several case studies,
including selection and merge sort algorithms,
Dijkstra's self-stabilizing algorithm~\cite{dijkstra1982self},
and the Chang-Roberts leader election algorithm~\cite{chang1979improved}.
We report promising experimental results for these case studies
using off-the-shelf regular model checkers as backend inference engines.

\section{An illustrative example}
\label{sec:illustrative-example}
Dijkstra's self-stabilizing algorithm~\cite{dijkstra1982self}
assumes a ring of $n \ge 2$ processes with identifiers $1,\ldots,n$
and $n$ local variables $x_1, \ldots, x_n$ over $\{0, \ldots,k-1\}$,
where $n$ and $k$ are finite but unbounded parameters satisfying $n \le k$. 
For $i \in \NatInt{1}{n}$,
Process $i$ is~\emph{privileged} if
(1) $i=1$ and $x_1 = x_n$, or
(2) $i>1$ and $x_i \neq x_{i-1}$.
At each time step, a privileged process, say Process $i$, is scheduled,
and the corresponding variable $x_i$ is updated by setting
(1) $x_i := x_i + 1~\mathrm{mod}~k$ when $i = 1$, or
(2) $x_i := x_{i-1}$ when $i>1$.
Process $i$ thus loses its privilege after the update.
Dijkstra's self-stabilizing algorithm can be construed as a kind of token-passing algorithm, whereby a process holds a token if and only if it is privileged.
The algorithm is self-stabilizing in the sense that eventually the system will contain a single token forever.

Let us identify each variable $x_i$ with an array\footnote{
In this paper, array indices always start from 1. We use $a[i]$ to denote the $i$-th element of array $a$,
and use $\size{a}$ to denote the size of $a$.
We often use $[a_1,\dots, a_n]$ to represent an unnamed array $a$ with $\size{a}=n$ and $a[i] = a_i$ for $i \in \NatInt{1}{n}$.
} value $a[i]$, and define a formula
\[
    \mathsf{Priv}(i) :=  (i = 1 \wedge a[i] = a[{n}]) \vee (i > 1 \wedge i \le n \wedge a[i] \neq a[i-1])
\]
indicating that Process $i$ is privileged.
One interesting property of Dijkstra's algorithm is that Process $1$ will be privileged infinitely often,
regardless of what the initial state is and how the privileged processes are scheduled.
To prove this,
we may use an index variable $pid$ to represent the scheduled process,
namely, $pid = 1$ means that Process $1$ is scheduled. %
Then the property ``Process 1 is privileged infinitely often''
can be expressed in LTL as $\GF \mathsf{Priv}(1)$.
We would like to prove that this property holds for Dijkstra's algorithm
under the scheduling assumption $\G \mathsf{Priv}(pid)$,
namely, under the assumption that the scheduler always selects a privileged process.

Inspired by the definition of $\mathsf{Priv}(i)$,
we may abstract the local state of each Process $i$ using predicates
$\iset{a[i] = a[n], a[i] = a[i-1]}$.
The induced abstract state space is then
$\Nat \times \Nat \times \Sigma^*$ with $\Sigma := \{\zero\zero,\one\zero,\zero\one,\one\one\}$.
Each abstract state $(pid,n,w)$ comprises the valuations of $pid$ and $n$,
and a word $w$ over the alphabet $\Sigma$.
The $i$-th letter of $w$ essentially corresponds to the \emph{predicate abstraction} of~$a$ at position~$i$.
For example, $w[2] = \zero\one$ means that $a$ satisfies $\neg (a[2] = a[n]) \wedge (a[2] = a[1])$.
The abstraction function $\alpha$ then maps a concrete state, say $s := (1,3,[1,2,1])$,
to a set of abstract states, say
$\alpha(s) = \{(1,3,[\one\zero,\zero\zero,\one\zero], (1,3,[\one\one,\zero\zero,\one\zero])\}$.
Observe that we may naively overapproximate the abstraction of (the concrete states satisfying) $\mathsf{Priv}(1)$
by $F := \Nat \times \Nat \times (\one\zero + \one\one)\!\cdot\!\Sigma^*$.
Here, $F$ is an overapproximation in the sense that 
$\alpha(s) \not\subseteq F$ implies $s \not\models \mathsf{Priv}(1)$,
whereas $\alpha(s) \subseteq F$ implies nothing.
(For example, given $s := (1,2,[1,2])$,
we can deduce $s \not\models \mathsf{Priv}(1)$
by observing that $\alpha(s) = \{ (1,2,[\zero\zero, \one\zero]), (1,2,[\zero\one, \one\zero]) \} \not\subseteq F$.)
Similarly, the abstraction of $\mathsf{Priv}(pid)$ may be overapproximated by
$S := \left(\{1\} \times \Nat \times (\one\zero + \one\one)\!\cdot\!\Sigma^*\right) \uplus \bigcup_{i \ge 2} \left(\{i\} \times \Nat \times \Sigma^{i-1}\!\cdot\!(\zero\zero + \one\zero)\!\cdot\!\Sigma^*\right)$. 
Thus, $\GF \mathsf{Priv}(1)$ holds for Dijkstra's algorithm under the scheduling assumption $\G \mathsf{Priv}(pid)$
if there does not exist a maximal abstract path~$\pi \in S^*$ satisfying $\FG (S \setminus F)$,
namely, $\pi$ consists of states in $S$ and eventually stays away from $F$ forever.

In this manner, we can reduce verification problems
of an array manipulating system to those of a string manipulating system,
for which effective techniques and highly optimized tools exist
(e.g.~\cite{abdulla2012regular,chen2017learning,lin2016liveness,klarlund2001mona,klarlund2002mona,fiedor2017lazy,schuppan2006liveness}).
Unfortunately, the approximations $S$ and $F$ given above are too rough to verify
the desired property, in the sense that they will produce spurious counterexample paths after the reduction. We shall later show that an appropriate choice of predicates and approximations will allow us to verify this property.

\section{Preliminaries}
\label{sec:array-systems}

\noindent
\textbf{Notation.}
We adopt the standard notation of many-sorted logic
(see e.g.~\cite{van1994logic}), and
consistently use $\voc$ to denote a set of sorted first-order variables.
We use $\assign{\voc}$ to denote
the set of sort-consistent interpretations of $\voc$.
Given two interpretations
$\sigma_1 \in \assign{\voc_1}$ and
$\sigma_2 \in \assign{\voc_2}$
with $\voc_1 \cap \voc_2 =\emptyset$,
we use
$\sigma_1 \cdot \sigma_2$
to denote the interpretation
$\sigma \in \assign{\voc_1 \uplus \voc_2}$
such that $\sigma(v) = \sigma_1(v)$ for $v \in \voc_1$ and
$\sigma(u) = \sigma_2(u)$ for $u \in \voc_2$.
We write $v'$ for the primed version of variable $v$,
and define $\voc' := \{ v' : v \in \voc \}$.
Given $\sigma \in \assign{\voc}$, we use $\sigma'$ to denote
the interpretation $\{ v' \mapsto \sigma(v) : v \in \voc\}$ of $\voc'$.
Given a formula $\phi$, we use $\phi[t/x]$ to denote the formula
obtained by substituting $t$ for all free occurrences of $x$ in $\phi$.
Finally, for $\phi$ defined over variables $\voc$, we let
$\sem{\phi} := \{ \sigma \in \assign{\voc} : \sigma \models \phi \}$,
and write $\phi \equiv \psi$ when $\sem{\phi} = \sem{\psi}$.

\smallskip
\noindent
\textbf{Indexed array logic.}
An array theory typically combines two one-sorted theories:
an \emph{index theory}, which is used to express relations between indices,
and an \emph{element theory}, which is used to express properties of the array entries.
The \emph{indexed array logic} is a fragment of quantified array theory that
(i) uses Difference Arithmetic as the index theory, and
(ii) allows quantification only over index variables.
To make our expositions concrete, we shall focus on an instantiation
of indexed array logic that uses Difference Arithmetic as the element theory.
(However, we note that our abstraction techniques can be directly applied
to any indexed array logic with a decidable quantifier-free fragment,
which allows for a rich set of element theories such as Linear Integer Arithmetic (cf.~\cite{kroening-book}).)

Throughout this paper, we shall fix an indexed array logic $\theory$ with syntax given by
\begin{align*}
I & ::= I + n \mid i \mid n \mid \size{a} && \text{(index terms)}\\
E & ::= E + n \mid v \mid n \mid a[I] && \text{(data terms)}\\
F & ::= I \bowtie I \mid E \bowtie E \mid \neg F \mid F \vee F \mid F \wedge F \mid \exists i.\,F \mid \forall i.\,F && \text{(formulae)}
\end{align*}
where
$\bowtie\ \in \{=, \neq, \le, <, \ge, >\}$,
$n$ is a concrete integer,
$v$ is a data variable,
$i$ is an index variable,
$a$ is an array variable,
$\size{a}$ represents the array length of $a$,
and
$a[I]$ represents the array element of $a$ stored at position $I$.
A formula in form of $I \bowtie I$ or $E \bowtie E$ is called
an \emph{atomic formula}, or an \emph{atom} for short.
We extend the logic with logical connectives such as
$\Rightarrow$ (implies) and $\Leftrightarrow$ (if and only if)
in the usual way.
An interpretation $\sigma$ of a formula in the indexed array logic is intuitive: $\sigma$ maps each integer/index variable to an integer, and each array variable $a$ to a function $f_a: \mathbb{Z} \to \mathbb{Z}$. The notion of satisfaction of a formula with respect to this interpretation is now standard.
Although arrays are infinite, it is easy to confine an array $a$ to a finite array by focusing on the index range $\NatInt{1}{\size{a}}$. For example, a formula $\forall i.\,\phi(i,a)$ can simply be rewritten to $\forall i.\,1 \leq i \wedge i \le \size{a} \Rightarrow \phi(i,a)$.

\smallskip
\noindent
\textbf{First-order array system.}
A \emph{first-order array system}
$\structT := \TranSystem$ is a triple
where $\voc$ is a finite set of first-order variables,
and $\phi_I$ and $\phi_T$ are array formulae over
variables $\voc$ and $\voc \uplus \voc'$, respectively. 
The variables in~$\voc$ are grouped into \emph{state variables}
and \emph{system parameters}.
A state variable has either an index sort, an element sort, or an array sort.
A system parameter is a special index variable
whose value is determined in an initial state
and immutable during system execution.
For each array variable $a \in \voc$,
we identify a system parameter in $\voc$ with the array size $\size{a}$.
This effectively stipulates that the array size is not changeable after the array is allocated.

The semantics of an array system is determined by the array logic $\theory$
and interpretations for $\voc$. A \emph{state}
of an array system $\structT := \TranSystem$ is a sort-consistent interpretation
$s \in \assign{\voc}$ for the variables in $\voc$.
The formula $\phi_I$ specifies the set of initial states.
The formula $\phi_T$ specifies the system's evolution
by relating the current variables $\voc$ to their updated counterparts $\voc'$.
A (finite or infinite) sequence $\pi := s_0 \cdots s_n$ of states
with $n \in \Nat \cup \{\omega\}$
is a~\emph{path} if $s_0 \models \phi_I$
and $s_{i-1} \+ s_{i}' \models \phi_T$ for $i \in \NatInt{1}{n}$.
Since system parameters are immutable,
$\pi$ is a legitimate path only if
$s_{i-1}(v) = s_{i}(v)$ for all system parameters $v \in \voc$ and each $i \in \NatInt{1}{n}$.

\smallskip
\noindent
\textbf{A guarded modeling language.}
We describe a simple guarded command language
for specifying transitions of array systems.
While this language is less expressive than the full-fledged array logic,
it is more concise and understandable,
and is already expressive enough to capture all examples considered in this paper.
A guarded command is in form of \textit{guard $\allows$ update}.
A command is \emph{enabled} if its guard is satisfied.
At each time step, the system nondeterministically selects
an enabled command and updates the current state accordingly.
If no command is enabled, the system halts.
More precisely, a guarded command is in form of
$\phi \allows \beta_1,\dots, \beta_m$,
where $\phi$ is a (possibly quantified) array formula,
and each $\beta_i$ is an assignment in one of the two forms:
\begin{itemize}
    \item $x := I$ or $x := *$, where $x$ is an index variable,
    $I$ is an index term, and $*$ is a special symbol;
    \item $a[I] := E$ or $a[I] := *$, where $I$ is an index term, $E$ is a data term, and $*$ is a special symbol.
\end{itemize}
The assignments $\beta_1,\dots, \beta_m$ are executed in parallel.
The symbol ``$*$'' stands for a nondeterministic value,
which ranges over the index domain when the left-hand side is $x$,
and ranges over the data domain when the left-hand side is $a[I]$.
For simplicity, assignments to data variables are not directly supported by our modeling language.
This however is not an essential restriction, since
data variables can be formally translated to array elements stored at dedicated positions, thereby allowing assignments as array elements.

\begin{example}
\label{eg:dijkstra}
The transitions of Dijkstra's algorithm in the illustrative example may be given as
\begin{flalign*}
& pid = 1 \wedge a[pid] = a[{n}] \wedge a[pid] < {k}-1 \allows a[pid] := a[pid] + 1,\: pid := *\\
& pid = 1 \wedge a[pid] = a[{n}] \wedge a[pid] = {k}-1 \allows a[pid] := 0,\: pid := *\\
& pid > 1 \wedge pid \le n \wedge a[pid] \neq a[pid-1] \allows a[pid] := a[pid-1],\: pid := *
\end{flalign*}
The array system $\structT := \TranSystem$ of Dijkstra's algorithm is specified by
\begin{align}
\label{eqn:dijkstra-first-order-spec}
    \voc & := \{ pid, a, {n}, {k} \} \nonumber\\
    \phi_I & := 2 \le n \wedge {n} \le {k} \wedge (\forall i.\: 1\le i \wedge i \le n \Rightarrow 0 \le a[i] \wedge a[i] < {k}) \nonumber\\
    \phi_T & := \psi_1 \vee \psi_2 \vee \psi_3 \nonumber
\end{align}
where $pid$ is an index variable, $a$ is an array variable, ${n}, {k}$ are system parameters,
and $\size{a}$ is identified with $n$.
The transition formula $\phi_T$ is defined by three formulae $\psi_1$, $\psi_2$, and $\psi_3$, which are in turn derived from the three guarded commands listed above. For example, $\psi_1$ can be formally derived from the first command as $pid = 1 \wedge a[pid] = a[{n}] \wedge a[pid] < {k}-1 \wedge a'[pid] = a[pid]+1~\wedge$ $1 \le pid' \wedge pid' \le n \wedge (\forall i.\: 1\le i \wedge i \le n \wedge i \neq pid \Rightarrow a'[i] = a[i])$.
\qed
\end{example}
\section{Regular languages as abstractions}
\label{sec:regulation-abstraction}

In this section, we define a novel predicate abstraction of array formulae and array systems using indexed predicates. Since each array will be mapped to a word (over a finite alphabet) in the abstraction, we shall use regular languages to represent sets of words, and use regular relations to represent relations over words. We start by recalling a logical framework of regular relations.

\subsection{A First-Order Framework of Regular Relations}

Let $\Sigma_m := \{\zero,\one\}^m$ denote the set of bitvectors of length $m$.
For $m \ge 1$, we define a two-sorted structure\footnote{
{We note that this structure is equally expressive in first-order logic
to the so-called \emph{universal automatic structures}}~\cite{benedikt2003definable,blumensath2004finite,colcombet2007transforming}.
}
    $\struct_m := ( \Sigma_m^*,~\Nat,~succ,~pred,~\cdot[\cdot],~\size{\cdot},~\Delta_1, \ldots, \Delta_m )$,
where $\Sigma_m^* := \{ v_1\cdots v_n : n\ge 0$ and $v_i \in \Sigma_m$ for each $i \}$
are words over bitvectors; $succ, pred : \Nat \to \Nat$ are defined by
$succ(n) := n+1$ and $pred(n) := \max\{0, n - 1\}$, respectively;
$\size{\cdot} : \Sigma_m^* \to \Nat$ is word length
(i.e.~$\size{w}$ is the length of $w$);
$\cdot[\cdot] : \Sigma_m^* \times \Nat \to \Sigma_m$
offers letter-level access for words,
i.e., $w[i]$ is the $i$th letter of $w$ for $1 \le i \le \size{w}$;
each $\Delta_k \subseteq \Nat \times \Sigma_m^*$
offers bit-level access for words,
such that $(i, w) \in \Delta_k$ if and only if $1 \le i \le \size{w}$
and the $k$th bit of $w[i]$ is $\one$.
We shall use $n + k$ and $n - k$ as syntactic sugar for $succ^k(n)$ and $pred^k(n)$,
respectively.

Given two words $w_1 \in \Sigma_{m_1}^*$ and $w_2 \in \Sigma_{m_2}^*$, we define
the \emph{convolution} $w_1 \odot w_2$ of $w_1$ and $w_2$
as the word $w \in \Sigma_{m_1 + m_2}^*$
such that $\size{w} = \max\{\size{w_1},\size{w_2}\}$ and
\begin{equation}
    w[i] = \begin{cases}
        w_1[i] :: w_2[i], & 1 \le i \le \min\{\size{w_1},\size{w_2}\} ;\\
        \zero^{m_1} :: w_2[i], & \size{w_1} < i \le \size{w_2};\\
        w_1[i] :: \zero^{m_2}, & \size{w_2} < i \le \size{w_1},
    \end{cases}
\end{equation}
where $u :: v$ denotes the concatenation of the bitvectors $u$ and $v$.
Intuitively, $w_1 \odot w_2$ is obtained by juxtaposing $w_1$ and $w_2$ (in a left-aligned manner),
and padding the shorter word with enough $\zero$s.
For example, for two words $\zero\one, \zero\zero\zero\one$ over $\Sigma_{1}$,
$\zero\one \odot \zero\zero\zero\one$ is the word
$(\zero,\zero)(\one,\zero)(\zero,\zero)(\zero,\one)$ over $\Sigma_{2}$.
We lift $\odot$ to languages by defining
$L_1 \odot L_2 := \{ w_1 \odot w_2 : w_1 \in L_1,~w_2 \in L_2 \}$.

We say that a relation
is \emph{regular} if its language representation is regular under unary encoding.
Formally, suppose w.l.o.g.~that $R \subseteq \Nat^r \times (\Sigma_{m}^*)^l$.
We define the \emph{language representation} of $R$
as $\lang{R} := \{ w_1 \odot \cdots \odot w_{r+l} \in \Sigma_{r\,+\,l\cdot m}^* :
(w_1,\ldots,w_{r+l}) \in R \}$
by identifying each $n \in \Nat$ with $\one^n \in \Sigma_1^*$.
The following result states that a relation is regular if and only if it is first-order definable in $\struct_m$,
see e.g., \cite{benedikt2003definable,blumensath2004finite,colcombet2007transforming}.

\begin{proposition}
    \label{prop:logic-automata}
    Given a first-order formula $\phi$ defined over $\struct_m$,
    we can compute a finite automaton
    recognizing $\lang{\sem{\phi}}$.
    Conversely, given a finite automaton recognizing
    $\mathcal{L} \subseteq N_1 \odot \cdots \odot N_r \odot L_1 \odot \cdots \odot L_l$,
    where $N_i \subseteq \one^*$ and $L_i \subseteq \Sigma_m^*$ for each $i$,
    we can compute a first-order formula
    $\phi$ over $\struct_m$ such that $\mathcal{L} = \lang{\sem{\phi}}$.
\end{proposition}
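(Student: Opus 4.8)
The plan is to prove the two directions separately, both by reducing to standard closure properties of regular languages under the convolution/unary encoding. For the forward direction (formula $\Rightarrow$ automaton), I would proceed by structural induction on $\phi$, maintaining as invariant that for every subformula $\psi$ with free number-sort variables $x_1,\dots,x_r$ and free word-sort variables $y_1,\dots,y_l$ there is a computable finite automaton recognizing $\lang{\sem{\psi}}$ over the alphabet $\Sigma_{r+l\cdot m}$, where tuples are encoded by $\odot$ with the $\Nat$-components written in unary. The base cases are the atoms built from the primitives of $\struct_m$: for each of $succ(x)=y$, $pred(x)=y$, $\size{w}=n$, $w[i]=c$ (for a fixed letter $c\in\Sigma_m$, decided by reading the bits via the $\Delta_k$), and $\Delta_k(i,w)$, one exhibits a small automaton directly; the index terms $x+k$ and $x-k$ are iterated $succ/pred$ and hence reduce to these, and equalities and inequalities between terms are obtained by composing with the graphs above. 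For the inductive step, $\neg,\vee,\wedge$ are realized by complementation, union, and intersection. The only bookkeeping here is cylindrification: when two subformulas have different free-variable tuples, their automata must first be lifted to a common alphabet by inserting ``don't-care'' tracks, which is a regular operation provided one is careful that adding a track is compatible with the left-aligned zero-padding of $\odot$.

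The heart of the forward direction is the quantifier case, handled by projection: $\exists i.\,\psi$ (and dually $\forall i.\,\psi \equiv \neg\exists i.\,\neg\psi$) corresponds to erasing the track carrying $i$, which turns a deterministic automaton into a nondeterministic one recognizing the projected language. I expect this to be the main obstacle, because projection does not respect the canonical encoding: the definition of $\lang{R}$ insists that $\size{w_1\odot\cdots\odot w_{r+l}}$ equal the maximum component length with all shorter components zero-padded, and deleting a component can change which component is longest, leaving spurious trailing padding. The fix is to \emph{normalize} after projecting --- intersect with the regular ``well-formed encoding'' language and close the automaton under adding or removing trailing $\zero$-letters on the remaining tracks --- so that the result recognizes exactly $\lang{\sem{\exists i.\,\psi}}$ rather than a language agreeing with it only up to padding. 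Establishing this normalization carefully, and checking that it commutes with the Boolean steps, is where the real work lies.

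For the converse (automaton $\Rightarrow$ formula), I would first determinize the given automaton to $\mathcal{A}=(Q,\Sigma_{r+l\cdot m},\delta,q_0,F)$ and then write a first-order sentence over $\struct_m$ asserting that the convolution word encoding the argument tuple admits an accepting run. Since $Q$ is finite, a run is encoded by $t:=\ceil{\log_2\size{Q}}$ auxiliary word variables $\rho_1,\dots,\rho_t$, existentially quantified over the word sort, whose designated bits at position $i$ (read off via the $\Delta_k$) spell the state reached after $i$ letters. The formula then asserts: each $\rho_j$ has the same length as the input (directly expressible with $\size{\cdot}$ and equality on $\Nat$); the state after position $1$ agrees with $\delta(q_0,w[1])$; for every $i$ with $1\le i<\size{w}$ the triple $(\mathit{state}_i,\,w[i],\,\mathit{state}_{i+1})$ lies in $\delta$, a finite disjunction using $i+1=succ(i)$ together with letter and bit access; and the state after position $\size{w}$ is final. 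This uses only first-order quantification over the two sorts, so it is a legitimate formula of $\struct_m$, and by construction its solution set has language representation $\mathcal{L}$; the hypothesis $\mathcal{L}\subseteq N_1\odot\cdots\odot N_r\odot L_1\odot\cdots\odot L_l$ merely guarantees that the input is a well-typed encoding, which the formula may assume. Alternatively, one may invoke the footnote's first-order interdefinability of $\struct_m$ with the universal automatic structure and cite the corresponding classical result, reducing the converse to the known fact that regular relations are first-order definable there.
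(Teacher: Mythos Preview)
The paper does not actually prove this proposition: it is stated as a known fact, with the surrounding text pointing to the literature on universal automatic structures and the footnote observing that $\struct_m$ is first-order equi-expressive with those structures. So there is no ``paper's own proof'' to compare against; your proposal is a (correct) reconstruction of the standard argument from the cited references.

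Your forward direction is exactly the usual inductive translation, and you have put your finger on the only real subtlety --- renormalizing after projection so that the zero-padding convention of $\odot$ is restored. Your converse direction is also sound: encoding a run by finitely many existentially quantified word variables whose bits (read via the $\Delta_k$) spell the state at each position is precisely the standard device. Two minor points you would want to make explicit in a full write-up: (i) in the converse, the symbol you call ``$w[i]$'' is not a single word but the letter of the \emph{convolution} at position $i$, which must be reconstructed componentwise --- bit $j$ for $1\le j\le r$ is $\one$ iff $i\le n_j$ (unary), and the remaining bits come from $\Delta_k(i,w_j)$ for the word components, with zeros past their individual lengths --- all first-order in $\struct_m$; (ii) the boundary case where the convolution is empty (all numeric arguments $0$ and all word arguments $\varepsilon$) needs the clause ``accept iff $q_0\in F$''. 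Neither point is a gap in the approach; both are bookkeeping. Your closing remark that one may alternatively appeal directly to the equi-expressivity with universal automatic structures is in fact exactly what the paper does.
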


\subsection{Abstraction of Array Formulae}
\label{sec:indexed-predicate-abstraction}
We use indexed predicates to express constraints on arrays.

\begin{definition}[Indexed predicate]
  An \emph{indexed predicate}
  is an atom (in the indexed array logic) containing a designated index variable $\ii$.
  A set of indexed predicates $\pred{P} := \iset{ P_1, \ldots, P_m }$
  is said to be defined over variables $\voc$ if
  each predicate $P_k \in \pred{P}$ is defined over variables $\voc \uplus \{\ii\}$.
\end{definition}

For convenience, we shall fix a set of variables $\voc$ and a set of indexed predicates
$\pred{P} := \iset{ P_1, \ldots, P_m }$ over variables $\voc$
throughout the rest of this section.
For each bitvector $v := b_1 \cdots b_m \in \Sigma_{m}$ comprising $m$ bits,
we define a quantifier-free array formula $\phi_v^{\pred{P}}$ over variables $\voc \uplus \{\ii\}$ as follows:
\begin{eqnarray}
\label{def:valuation-at-i}
    \phi_v^{\pred{P}} := \bigwedge\nolimits_{1 \le k \le m} \tilde{P_k},
    \ \mbox{where}\ \tilde{P_{k}} =\begin{cases}
    P_{k}, & b_k = \one ;\\
    \neg P_{k}, & b_k = \zero.
    \end{cases}
\end{eqnarray}
Formula $\phi_v^{\pred{P}}$ effectively regards each bit $b_k$ of the bitvector $v$
as the truth value of predicate $P_k$ at the parametric position $\ii$.
We furthermore use a word over bitvectors
to encode the truth values of the predicates ranging over an interval of positions:
for each word $w \in \Sigma_{m}^*$, we define an array formula
\begin{equation}
  \label{def:valuation}
  \Phi_{w}^{\pred{P}} := \bigwedge\nolimits_{1 \le i \le \size{w}} \phi_{w[i]}^{\pred{P}} [i/\ii].
\end{equation}
For example, if $\pred{P} = \iset{x[\ii] = x[n], x[\ii] \neq x[\ii-1]}$ and $w = [\one\zero,\zero\one]$, then
$\Phi_{w}^{\pred{P}}$ is $x[1]=x[n] \wedge \neg (x[1]\neq x[0]) \wedge \neg (x[2]=x[n]) \wedge x[2] \neq x[1]$.
By construction, for $\sigma \in \assign{\voc}$,
$\sigma \models \Phi_{w}^{\pred{P}}$ holds if and only if
for $i \in \NatInt{1}{\size{w}}$ and $k \in \NatInt{1}{m}$,
the truth value of $\sigma \models P_k[i/\ii]$ coincides with the $k$th bit
of the bitvector $w[i]$.

Fix an array system $\structT := \TranSystem$.
Let $\voc_{int}$ denote the set of parameters \emph{and} index variables in $\voc$,
and $\voc_{par}$ denote the set of parameters in $\voc$.
Let $r := \size{\voc_{int}}$ and $m := \size{\pred{P}}$.
We shall identify an interpretation $\sigma \in \assign{\voc_{int}}$
with an $r$-tuple $(\sigma(x_1),\ldots,\sigma(x_r)) \in \Nat^r$
where $x_i$ denotes the $i$th variable in $\voc_{int}$.
We define the \emph{concrete domain} of the system as $\assign{\voc}$,
and the \emph{abstract domain} as $\Nat^r \times \Sigma_{m}^*$.
Each abstract state
$(t, w) \in \Nat^r \times \Sigma_{m}^*$
comprises a tuple $t \in \Nat^r$ representing an interpretation of $\voc_{int}$,
and a word $w \in \Sigma_{m}^*$ encoding the valuations of $\pred{P}$
at positions $1,\dots,\size{w}$.

Given a set $S_C \subseteq \assign{\voc}$ of concrete states, we define
\begin{align*}
    \alpha^{\pred{P}}(S_C) & :=
        \{ (s_{int}, w) \in \Nat^r \times \Sigma_{m}^* :
        \mbox{there exists $s \in S_C$ s.t.}~\size{w}=\max_{x \in \voc_{par}} s(x)\ \mbox{and}\ s \models \Phi_{w}^{\pred{P}}\,\}.
\end{align*}
Here, $s_{int}$ denotes the interpretation obtained by restricting $s$ to $\voc_{int}$.
Given a set $S_A \subseteq \Nat^r \times \Sigma_{m}^*$ of abstract states, we define
\begin{align*}
    \gamma^{\pred{P}}(S_A) & := \{ s \in \assign{\voc} : 
     \mbox{for all $w \in \Sigma_{m}^*$ s.t.}~\size{w}=\max_{x \in \voc_{par}} s(x),\ s \models \Phi_{w}^{\pred{P}}\ \mbox{implies}\ (s_{int}, w) \in S_A \}.
\end{align*}
{Notice that $\alpha^{\pred{P}}(S_C)$ and $\gamma^{\pred{P}}(S_A)$ are computable
when $S_C$ and $S_A$ are finite.}
To simplify the notation,
we may omit the superscript ${\pred{P}}$ from $\alpha^{\pred{P}}$ and $\gamma^{\pred{P}}$ when it is clear from the context.
Also, for a single state $s$,
we shall write $\alpha(s)$ and $\gamma(s)$ instead of $\alpha(\{s\})$
and $\gamma(\{s\})$ when there is no danger of ambiguity.
Finally, for a formula $\phi$,
we shall write $\alpha(\phi)$ and $\gamma(\phi)$ instead of $\alpha(\sem{\phi})$ and $\gamma(\sem{\phi})$.

\begin{example}
Recall the illustrative example in Section~\ref{sec:illustrative-example}. With indexed predicates
$\pred{P} := \iset{a[\ii] = a[n], a[\ii] = a[\ii-1]}$, we can define $\alpha$ and $\gamma$ as
\begin{align*}
    \alpha(S_C) & := \{ (pid,n,w) \in \Nat^2 \times \Sigma_2^* : \text{there exists}~(pid,n,a) \in S_C~\text{s.t.}~\size{w}=n~\text{and}~(pid,n,a) \models \Phi_w^{\pred{P}} \};\\
    \gamma(S_A) & := \{ (pid,n,a) \in \Nat^2 \times \Int^\omega : \text{for all}~w \in \Sigma_2^n,~(pid,n,a) \models \Phi_w^{\pred{P}}~\text{implies}~(pid,n,w)\in S_A \}.
\end{align*}
For instance, given a concrete state $s := (1,3,[1,2,1])$, we can compute
$\alpha(s) = \{(1,3,[\one\zero,\zero\zero,\one\zero]),$ $(1,3,[\one\one,\zero\zero,\one\zero])\}$
and $\gamma(\alpha(s)) = \{ (1,3,a) : a \in \Int^\omega,~a[1] = a[3] \neq a[2] \}$. %
\qed
\end{example}

\begin{lemma}
\label{lemma:dist-law}
For a countable collection $\{S_i : i \in I\}$ of concrete state sets,
it holds that $\alpha(\bigcup_{i \in I} S_i) = \bigcup_{i \in I} \alpha(S_i)$
and $\alpha(\bigcap_{i \in I} S_i) \subseteq \bigcap_{i \in I} \alpha(S_i)$.
\end{lemma}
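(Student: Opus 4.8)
The plan is to read both parts directly off the defining formula for $\alpha$, exploiting the fact that $\alpha$ is built by an existential quantification that ranges over the input set. Unfolding the definition, for a fixed abstract state $(s_{int}, w) \in \Nat^r \times \Sigma_{m}^*$ we have
\[
    (s_{int}, w) \in \alpha(S_C) \iff \exists s \in S_C.\ \Big( s|_{\voc_{int}} = s_{int}\ \wedge\ \size{w} = \max_{x \in \voc_{par}} s(x)\ \wedge\ s \models \Phi_{w}^{\pred{P}}\Big),
\]
where $s|_{\voc_{int}}$ is the restriction of $s$ to $\voc_{int}$. Writing $\Psi(s, s_{int}, w)$ for the predicate in parentheses, the key structural observation is that $\Psi$ constrains only the \emph{single} state $s$ together with the candidate $(s_{int},w)$; it makes no reference whatsoever to $S_C$. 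Thus membership in $\alpha(S_C)$ is decided purely by whether some witness $s$ lying in $S_C$ satisfies $\Psi$, which already signals that $\alpha$ behaves like an existential-image operator.

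For the union identity I would simply commute the existential quantifier with the union of sets over which it ranges. Fixing $(s_{int},w)$, the condition $(s_{int},w) \in \alpha(\bigcup_{i} S_i)$ expands to $\exists s \in \bigcup_i S_i.\ \Psi(s,s_{int},w)$, which is the same as $\exists i \in I.\ \exists s \in S_i.\ \Psi(s,s_{int},w)$, i.e.\ $\bigvee_i \big[(s_{int},w) \in \alpha(S_i)\big]$, i.e.\ $(s_{int},w) \in \bigcup_i \alpha(S_i)$. Since every step is an equivalence, both inclusions come out at once; note that countability of $I$ is never used and the argument works verbatim for an arbitrary index set.

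For the intersection containment I would argue via a shared witness. If $(s_{int},w) \in \alpha(\bigcap_i S_i)$, then there is a single $s \in \bigcap_i S_i$ with $\Psi(s,s_{int},w)$; for each $i \in I$ this same $s$ lies in $S_i$ and hence witnesses $(s_{int},w) \in \alpha(S_i)$, giving $(s_{int},w) \in \bigcap_i \alpha(S_i)$. There is no genuine obstacle here: the whole content of the lemma is the recognition that $\alpha$ is an existential image map. The one point worth flagging — and the reason intersection is stated as an inclusion rather than an equality — is that this step is \emph{not} reversible: a point of $\bigcap_i \alpha(S_i)$ only supplies, for each $i$, some witness $s_i \in S_i$ with $\Psi(s_i,s_{int},w)$, and these witnesses need not coincide, so in general no single $s \in \bigcap_i S_i$ realizes $(s_{int},w)$. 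I would keep the write-up to the two short quantifier manipulations above.
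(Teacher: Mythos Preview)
Your proposal is correct and is essentially the same argument as the paper's: both rest on the observation that $\alpha$ is an existential-image operator, which the paper phrases as $\alpha(S) = \bigcup_{s\in S}\alpha(s)$ and you phrase via the predicate $\Psi$. Your write-up is somewhat more explicit about the quantifier commutation, but the content is identical.
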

\begin{proof}
Note that $\alpha$ distributes over union by definition, i.e., $\alpha(S) = \bigcup_{s\in S} \alpha(s)$.
Also, note that $s \in \bigcap_{i \in I} S_i$ implies $\alpha(s) \subseteq \bigcap_{i \in I}\alpha(S_i)$. Thus,
$\alpha(\bigcap_{i \in I} S_i) = \bigcup \{ \alpha(s) : s \in \bigcap_{i\in I} S_i \} \subseteq \bigcap_{i \in I} \alpha(S_i)$.
\end{proof}

\begin{lemma}
\label{lemma:galois}
For any abstract state set $S_A$, it holds that $\gamma(S_A) = \{ s : \alpha(s) \subseteq S_A \}$.
\end{lemma}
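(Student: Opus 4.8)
The plan is to establish the set equality by proving, for each fixed concrete state $s \in \assign{\voc}$, the biconditional $s \in \gamma(S_A) \Leftrightarrow \alpha(s) \subseteq S_A$; ranging over all $s$ then yields the claim. Both $\alpha$ and $\gamma$ are defined purely in terms of the same satisfaction relation $s \models \Phi_w^{\pred{P}}$ together with the same length constraint $\size{w} = \max_{x \in \voc_{par}} s(x)$, so the argument is a direct unfolding of definitions. I would begin by fixing $s$ and abbreviating $N := \max_{x \in \voc_{par}} s(x)$ for the length that the word component is forced to take.

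First I would specialize the definition of $\alpha^{\pred{P}}$ to the singleton $\{s\}$, obtaining $\alpha(s) = \{ (s_{int}, w) : \size{w} = N \text{ and } s \models \Phi_w^{\pred{P}} \}$. The salient feature to record here is that every abstract state produced lies over the same integer component $s_{int}$; only the word $w$ ranges. I would then unfold the inclusion $\alpha(s) \subseteq S_A$: since each member of $\alpha(s)$ has the form $(s_{int}, w)$ with $\size{w} = N$ and $s \models \Phi_w^{\pred{P}}$, the inclusion holds precisely when, for every word $w$ of length $N$ with $s \models \Phi_w^{\pred{P}}$, one has $(s_{int}, w) \in S_A$. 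This is word-for-word the membership condition defining $\gamma^{\pred{P}}(S_A)$, so the biconditional drops out immediately and the set equality follows.

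There is no substantive obstacle; the only place demanding care is the bookkeeping of the integer component. I would stress that $\alpha$ only ever emits abstract states whose first coordinate equals $s_{int}$, so that testing $\alpha(s) \subseteq S_A$ amounts to quantifying over the \emph{words} alone — exactly mirroring the clause $(s_{int}, w) \in S_A$ in the definition of $\gamma$; conflating this with a quantification over arbitrary integer tuples would break the correspondence. Finally, although it is not needed for the proof, I would note that by the characterization of $\Phi_w^{\pred{P}}$ recalled immediately before the definition of $\alpha$, there is exactly one word $w_s$ of length $N$ with $s \models \Phi_{w_s}^{\pred{P}}$; hence $\alpha(s)$ is in fact the singleton $\{(s_{int}, w_s)\}$, and both sides of the claimed equivalence collapse to the single test $(s_{int}, w_s) \in S_A$, which serves as a convenient sanity check.
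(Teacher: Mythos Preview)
Your proof is correct and follows essentially the same approach as the paper: fix a concrete state $s$, unfold the definitions of $\alpha(s)$ and $\gamma(S_A)$, and observe that the inclusion $\alpha(s) \subseteq S_A$ spells out exactly the membership condition for $s \in \gamma(S_A)$. The paper splits this into an explicit $\Rightarrow$/$\Leftarrow$ argument while you collapse both directions into a single equivalence, but the content is identical.
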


\begin{proof}
   It suffices to fix a concrete state $s$ and
   show that $s \in \gamma(S_A) \iff \alpha(s) \subseteq S_A$.
   For the ``$\Rightarrow$'' direction, assume that $s \in \gamma(S_A)$, and
   let $(s_{int},w)$ be an abstract state in $\alpha(s)$. %
   Then we have $s \models \Phi_{w}^{\pred{P}}$~by the definition of $\alpha$,
   which implies that $(s_{int}, w) \in S_A$ by the definition of $\gamma$.
   Since $(s_{int}, w)$ is arbitrary, %
   we conclude that $\alpha(s) \subseteq S_A$.
   For the ``$\Leftarrow$'' direction, assume that $\alpha(s) \subseteq S_A$.
   Let $W_s := \{ w \in \Sigma_{m}^* : \size{w}=\max_{x \in \voc_{par}} s(x) \}$.
   By the definition of $\alpha$, $S_A$ contains
   all abstract states $(s_{int}, w)$ such that $w \in W_s$
   and $s \models \Phi_{w}^{\pred{P}}$.
   However, this implies that for any $w \in W_s$, if
   $s \models \Phi_{w}^{\pred{P}}$ holds,
   then $(s_{int}, w) \in S_A$ also holds.
   It follows that $s \in \gamma(S_A)$ by the definition of $\gamma$.
\end{proof}

The following result %
is a consequence of Lemma~\ref{lemma:dist-law}, Lemma~\ref{lemma:galois}, and Proposition 7 of \cite{cousot1977abstract}.
We provide a proof here for the sake of completeness.

\begin{proposition}
    \label{prop:galois}
    For any concrete state set $S_C$ and abstract state set $S_A$, it holds that
    $$\alpha(S_C) \subseteq S_A \iff S_C \subseteq \gamma(S_A).$$
\end{proposition}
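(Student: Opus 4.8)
The plan is to establish the biconditional as a short chain of equivalences, leaning entirely on the two preceding lemmas, which have already isolated the two structural facts that make a Galois connection work: that $\alpha$ is completely additive (distributes over arbitrary unions) and that $\gamma$ admits a pointwise characterization. Concretely, I expect the whole statement to fall out by unfolding both sides to a common ``for all $s \in S_C$'' condition.

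First I would rewrite the left-hand side using Lemma~\ref{lemma:dist-law}. Since $\alpha$ distributes over union, we have $\alpha(S_C) = \bigcup_{s \in S_C} \alpha(s)$, and a union of sets is contained in $S_A$ precisely when each summand is. Hence $\alpha(S_C) \subseteq S_A$ is equivalent to the condition that $\alpha(s) \subseteq S_A$ holds for every $s \in S_C$. This is the only place where additivity of $\alpha$ is used, and it is exactly what reduces a set-level containment to a pointwise one.

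Next I would transport the right-hand side to the same pointwise form using Lemma~\ref{lemma:galois}, which gives $\gamma(S_A) = \{\, s : \alpha(s) \subseteq S_A \,\}$. By definition of set inclusion, $S_C \subseteq \gamma(S_A)$ means that every $s \in S_C$ lies in $\gamma(S_A)$, i.e.~satisfies $\alpha(s) \subseteq S_A$. Chaining the two reformulations,
\[
\alpha(S_C) \subseteq S_A
\iff \forall s \in S_C.\ \alpha(s) \subseteq S_A
\iff \forall s \in S_C.\ s \in \gamma(S_A)
\iff S_C \subseteq \gamma(S_A),
\]
which is the claim.

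I do not anticipate a genuine obstacle here: all the conceptual work has been front-loaded into Lemmas~\ref{lemma:dist-law} and~\ref{lemma:galois}, and what remains is purely the bookkeeping of converting a set inclusion into a universally quantified membership condition on both sides. The only point requiring a modicum of care is to invoke the \emph{additivity} direction of Lemma~\ref{lemma:dist-law} (the equality $\alpha(\bigcup_i S_i) = \bigcup_i \alpha(S_i)$, not the merely-$\subseteq$ intersection clause), since it is precisely the equality that licenses splitting the containment $\bigcup_{s} \alpha(s) \subseteq S_A$ into the per-state containments in both directions.
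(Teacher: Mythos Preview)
Your proof is correct and takes essentially the same approach as the paper: both rely on Lemma~\ref{lemma:dist-law} (additivity of $\alpha$) and Lemma~\ref{lemma:galois} (the pointwise description of $\gamma$) to reduce both sides to the common per-state condition $\forall s\in S_C.\ \alpha(s)\subseteq S_A$. The paper argues the two directions separately (invoking monotonicity of $\alpha$ for ``$\Rightarrow$'' and distributivity for ``$\Leftarrow$''), whereas you fold them into a single chain of equivalences via the equality $\alpha(S_C)=\bigcup_{s\in S_C}\alpha(s)$; this is a slightly cleaner presentation but not a substantively different argument.
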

\begin{proof}
    For the ``$\Rightarrow$'' direction, assume that $\alpha(S_C) \subseteq S_A$.
    Since $\alpha$ is monotonic,
    i.e., $E\subseteq F$ implies $\alpha(E) \subseteq \alpha(F)$,
    we have $\alpha(s) \subseteq S_A$ for every $s \in S_C$.
    By Lemma~\ref{lemma:galois}, we have $\gamma(S_A) = \{s : \alpha(s) \subseteq S_A\}$.
    It then follows that $S_C \subseteq \gamma(S_A)$.
    For the ``$\Leftarrow$'' direction, assume that $S_C \subseteq \gamma(S_A)$.
    Again by Lemma~\ref{lemma:galois}, we have $\gamma(S_A) = \{s : \alpha(s) \subseteq S_A\}$.
    The assumption $S_C \subseteq \gamma(S_A)$ thus implies that
    $\alpha(s) \subseteq S_A$ for every $s \in S_C$.
    This in turn implies that $\alpha(S_C) \subseteq S_A$
    by Lemma~\ref{lemma:dist-law}.
\end{proof}

\subsection{Abstract Safety and Liveness Analysis}
\label{sec:abstract-analysis}
For an array system $\structT := \TranSystem$,
applying predicate abstraction for safety and liveness verification
involves performing reachability analysis for the abstract system of $\structT$.
In each step of the analysis, we concretize the current
set of reachable abstract states via the concretization operation $\gamma$,
apply the concrete next-step function
\begin{equation}
    N_C(S_C) := \{ t \in \assign{\voc} : \mbox{there exists $s \in S_C$ such that $s\+t' \models \phi_T$} \}
\end{equation}
in the concrete state space,
and map the result back to the abstract state space via the abstraction operation $\alpha$.
We can view this process as performing
state exploration in an abstract transition system
with a set $\alpha({\phi_I})$ of initial states
and an abstract next-step function
\begin{equation}
\label{def:abstract-next-step-function}
    N_A(S_A) := \alpha(N_C(\gamma(S_A))).
\end{equation}
Proposition~\ref{prop:galois}, together with the fact that $\alpha$, $\gamma$, and $N_C$ are monotonic,
establishes the soundness of abstract reachability analysis using $N_A$ (cf.~\cite{cousot1977abstract}).
For example,
let $\Pi(I, T)$ denote the set of paths starting from states in $I$ through transition relation $T$, and let $st(\pi)$ denote the set of states in path $\pi$.
Given a set $Init$ of initial states and a set $Bad$ of bad states, one can show (e.g.~by induction on the length of a counterexample path) that the safety property
\begin{equation}
    \label{formula:safety}
    \forall \pi \in \Pi(Init, N_C).~st(\pi) \cap Bad = \emptyset
\end{equation}
holds in the concrete domain if
$\forall \pi \in \Pi(\alpha(Init), N_A).~st(\pi) \cap \alpha(Bad) = \emptyset$
holds in the abstract domain.
Similarly, given a set $Fin$ of final states, the liveness property
\begin{equation}
\label{formula:liveness}
    \forall \pi \in \Pi(Init, N_C).~ st(\pi)\cap Fin \neq \emptyset
\end{equation}
holds in the concrete domain if
$\forall \pi \in \Pi(\alpha(Init),N_A).~st(\pi) \cap \alpha(Fin^\mathtt{c})^\mathtt{c} \neq \emptyset$
holds in the abstract domain,%
where $S^\mathtt{c}$ denotes the complement of the concrete (resp.~abstract) state set $S$
with respect to the concrete (resp.~abstract) state space.
These facts allow us to reduce the verification of concrete safety/liveness properties to that of their counterparts in the abstract domain.

\subsection{Approximation with Regular Languages}
\label{sec:regular-approximation}
Since indexed array logic is undecidable,
the abstraction $\alpha({\phi})$ of an indexed array formula $\phi$ is generally not computable.
Therefore, we propose to overapproximate these formulae in the abstract domain using regular languages.
As before, 
fix a set $\voc$ of variables and
a set $\pred{P}$ of indexed predicates over $\voc$,
and let $r := \size{\voc_{int}}$ and $m := \size{\pred{P}}$.
Then $\voc$ and $\pred{P}$ induce
an abstract domain $\Nat^r \times \Sigma_{m}^*$.
Recall that we regard a relation $R$ as regular when $R$'s language representation $\lang{R}$ is regular.

\begin{definition} [Regular abstraction for state formulae]
    Let $\phi$ be an indexed array formula over variables $\voc$.
    Then a tuple $({R}, \pred{P})$ is a \emph{regular abstraction} of $\phi$ if
    ${R} \subseteq \Nat^r \times \Sigma_{m}^*$ is a regular relation such that
    $\alpha^{\pred{P}}({\phi}) \subseteq {R}$.
\end{definition}

For $\pred{P} := \iset{ P_1, \ldots, P_m }$
over $\voc$, we use $\pred{P} * \pred{P}' :
= \iset{ P_1, \ldots, P_m, P_{m+1}, \dots, P_{2m} }$
to denote the set of predicates over $\voc \uplus \voc'$
such that for $k \in \NatInt{1}{m}$,
$P_{m+k}$ is obtained from $P_k$
by replacing $v$ with $v'$ for each variable $v \in \voc$.
Recall that $s\+ t$ denotes the union of interpretations $s$ and $t$ over disjoint variables,
and that $u \odot v$ denotes the convolution of words $u$ and $v$.

\begin{definition}[Regular abstraction for transition formulae]
    Let $\phi$ be an indexed array formula over variables $\voc \uplus \voc'$.
    Then a tuple $({R}, \pred{P})$ is a \emph{regular abstraction} of $\phi$ if
    ${R} \subseteq (\Nat^r \times \Sigma_{m}^*) \times (\Nat^r \times \Sigma_{m}^*)$
    is a regular relation such that $\alpha^{\pred{P} * \pred{P}'}({\phi}) \subseteq \tau({R})$,
    where $\tau$ %
    is an isomorphic mapping defined by
    $\tau((s, u), (t, v)) = (s \+ t, u \odot v)$.
\end{definition}

A regular abstraction $({R},\pred{P})$ of a transition formula
induces a next-step function $N_R$, defined by $N_R(S_A) := \{ t :$ there exists $s \in S_A$ such that~$(s,t) \in R\}$, over the abstract state sets.
It can be shown that $N_R$ provides an \emph{abstract interpretation}~\cite{cousot1977abstract} of the concrete array system.

\begin{theorem}
    Let $\structT := \TranSystem$ be an array system and $\pred{P}$ be a set of indexed predicates over $\voc$.
    If $({R}, \pred{P})$ is a regular abstraction of $\phi_T$,
    then $N_R$ provides an abstract interpretation of $\:\structT$.
\end{theorem}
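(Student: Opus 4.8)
The plan is to read the phrase ``$N_R$ provides an abstract interpretation of $\structT$'' in the sense of Cousot and Cousot~\cite{cousot1977abstract}: relative to the Galois connection $(\alpha,\gamma)$ of Proposition~\ref{prop:galois}, the induced map $N_R$ is a sound abstraction of the concrete transformer $N_C$, i.e.\ it overapproximates the best abstract transformer $N_A = \alpha\circ N_C\circ\gamma$ pointwise. So the goal I would establish is the single inclusion
\[
    N_A(S_A) \;=\; \alpha\big(N_C(\gamma(S_A))\big) \;\subseteq\; N_R(S_A)\qquad\text{for every abstract state set } S_A,
\]
after which Proposition~7 of~\cite{cousot1977abstract} (invoked exactly as for $N_A$ in Section~\ref{sec:abstract-analysis}) yields soundness of $N_R$. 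The inclusion itself I would prove by an element chase that converts a concrete witnessing transition into a pair in $R$.

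The workhorse is a decomposition lemma for the product abstraction. Writing $L_s := \max_{x\in\voc_{par}} s(x)$, I would first note that because system parameters are immutable across $\phi_T$, every transition $s\cdot t'\models\phi_T$ has $s(x)=t(x)$ for $x\in\voc_{par}$, hence $L_s = L_t =: L$, and every word occurring in $\alpha^{\pred{P}}(s)$, $\alpha^{\pred{P}}(t)$, or $\alpha^{\pred{P}*\pred{P}'}(s\cdot t')$ has length $L$. I would then prove that for $u,v\in\Sigma_m^*$ with $\size{u}=\size{v}=L$,
\[
    s\cdot t'\models\Phi_{u\odot v}^{\pred{P}*\pred{P}'}
    \iff
    \big(\, s\models\Phi_{u}^{\pred{P}}\ \text{ and }\ t\models\Phi_{v}^{\pred{P}}\,\big).
\]
This is a routine unfolding of the definitions: $\pred{P}*\pred{P}'$ lists the $m$ predicates over $\voc$ followed by their primed copies over $\voc'$; when $\size{u}=\size{v}$ the convolution leaves the bitvector $(u\odot v)[i]$ equal to the concatenation $u[i]::v[i]$ with no padding; and an atom $P_{m+k}[i/\ii]$ read on $s\cdot t'$ has the same truth value as $P_{k}[i/\ii]$ read on $t$. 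Consequently $\tau$ is a bijection between $\alpha^{\pred{P}*\pred{P}'}(s\cdot t')$ and the length-$L$ pairs in $\alpha^{\pred{P}}(s)\times\alpha^{\pred{P}}(t)$---this is precisely the sense in which $\tau$ is ``isomorphic''.

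Granting the lemma, the chase is short. Fix $S_A$ and take $t_A\in N_A(S_A)$. By the definition of $\alpha$ there is a concrete $t\in N_C(\gamma(S_A))$ with $t_A=(t_{int},v)\in\alpha(t)$, and by the definition of $N_C$ there is $s\in\gamma(S_A)$ with $s\cdot t'\models\phi_T$. Lemma~\ref{lemma:galois} gives $\alpha(s)\subseteq S_A$; pick any $s_A=(s_{int},u)\in\alpha(s)$, so $s_A\in S_A$ and $\size{u}=\size{v}=L$. Since $s\models\Phi_{u}^{\pred{P}}$ and $t\models\Phi_{v}^{\pred{P}}$, the lemma gives $s\cdot t'\models\Phi_{u\odot v}^{\pred{P}*\pred{P}'}$, i.e.\ $\tau(s_A,t_A)\in\alpha^{\pred{P}*\pred{P}'}(s\cdot t')$. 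Monotonicity of $\alpha$ together with $s\cdot t'\models\phi_T$ then give $\tau(s_A,t_A)\in\alpha^{\pred{P}*\pred{P}'}(\phi_T)\subseteq\tau(R)$, and injectivity of $\tau$ on these length-consistent tuples yields $(s_A,t_A)\in R$. As $s_A\in S_A$, the definition of $N_R$ gives $t_A\in N_R(S_A)$, and since $t_A$ was arbitrary we obtain $N_A(S_A)\subseteq N_R(S_A)$.

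The hard part is not the bit-level splitting but the length bookkeeping that makes $\tau$ behave like an isomorphism. The decomposition is valid only when $\size{u}=\size{v}$: with unequal lengths, $u\odot v$ pads the shorter word with $\zero^m$, $\Phi_{u\odot v}^{\pred{P}*\pred{P}'}$ then imposes spurious atoms $\neg P_k$ at the padded positions, and, worse, $\tau$ stops being injective (a convolution of two length-$L$ words can also arise from words of different lengths when the trailing letters are $\zero^m$), so the preimage of $\tau(s_A,t_A)$ in $R$ need no longer be $(s_A,t_A)$. I would neutralise this by working throughout on length-consistent abstract states---those whose word length equals the maximal system parameter they encode---which is exactly what the abstraction $\alpha$ produces and what the immutability of parameters guarantees along a transition; restricting $R$ to such states (a regular constraint, hence harmless) makes every relevant preimage length-consistent, so $\size{u}=\size{v}=L$ is forced and the injectivity step goes through. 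Everything else is definitional unfolding plus the Galois connection of Proposition~\ref{prop:galois}.
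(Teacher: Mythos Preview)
Your approach is essentially the paper's, though you target a different (and equivalent) soundness inequality: the paper checks the simulation condition $\alpha(N_C(S_C))\subseteq N_R(\alpha(S_C))$ over concrete sets, while you check $\alpha(N_C(\gamma(S_A)))\subseteq N_R(S_A)$ over abstract sets. Each follows from the other via the Galois connection of Proposition~\ref{prop:galois} together with monotonicity of $N_R$, $N_C$, and $\alpha$. The heart of both arguments is identical: from a concrete transition $s\cdot t'\models\phi_T$ one lands in $\alpha^{\pred{P}*\pred{P}'}(\phi_T)\subseteq\tau(R)$ and then pulls back through $\tau$. Your explicit decomposition lemma (that $\Phi_{u\odot v}^{\pred{P}*\pred{P}'}$ splits into $\Phi_u^{\pred{P}}$ on $s$ and $\Phi_v^{\pred{P}}$ on $t$ when $\size{u}=\size{v}$) just spells out what the paper compresses into the single clause ``the inclusion \ldots\ follows from the definition of regular abstraction and the stipulation that parameters are immutable.''

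Your scrutiny of $\tau$'s injectivity is warranted---the paper simply calls $\tau$ an ``isomorphic mapping'' and never justifies the pull-back---but your repair does not close the gap as written. Intersecting $R$ with the set of length-consistent pairs yields some $R'\subseteq R$, and to run your chase for $N_{R'}$ you would need $(R',\pred{P})$ to remain a regular abstraction, i.e.\ $\alpha^{\pred{P}*\pred{P}'}(\phi_T)\subseteq\tau(R')$. Nothing in the hypothesis rules out that the \emph{only} $\tau$-preimage in $R$ of some element of $\alpha^{\pred{P}*\pred{P}'}(\phi_T)$ is a pair with unequal word lengths, in which case that element drops out of $\tau(R')$ and the inclusion fails. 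The honest reading---and the one under which both your proof and the paper's go through unchanged---is that the definition of regular abstraction for transition formulae implicitly takes $\tau$ on the subdomain of pairs with equal word length, where convolution is genuinely a bijection.
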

\begin{proof}
    To show that $N_R$ is an abstract interpretation, we need to check that
    (1) $N_R$ is monotonic;
    (2) $N_R$ is null-preserving, namely, $N_R(\emptyset) = \emptyset$;
    (3) $N_R$ simulates $N_C$ w.r.t.~a simulation relation defined by $\alpha$, namely, $\alpha(N_C(S_C)) \subseteq N_R(\alpha(S_C))$.
    The first two conditions directly follow from the definition of regular abstraction,
    so it suffices to check the last condition.
   Suppose first that $S_C = \emptyset$.
   Since $N_C(\emptyset) = \alpha(\emptyset) = N_R(\emptyset) = \emptyset$, the condition holds trivially.
   Now suppose that $S_C \neq \emptyset$ and consider an arbitrary state $s \in S_C$.
   Observe that
   \begin{align*}
       \alpha(N_C(s)) & = \alpha(\{ t \in \assign{\voc} : s\+t' \models \phi_T \}) \\
            & = \{ (t_{int},v) : s\+t' \models \phi_T,\ \size{v}=\max\nolimits_{x \in \voc_{par}} t(x),\ t\models \Phi_v^\pred{P}\, \}\\
            & \subseteq {N_R}(\{ (s_{int}, u) : \size{u}=\max\nolimits_{x \in \voc_{par}} s(x),\ s\models \Phi_u^\pred{P}\, \})\\
            & = {N_R}(\alpha(s)),
   \end{align*}
   where the inclusion in the third line follows from the definition of regular abstraction
   and the stipulation that parameters are immutable.
   Note that $N_C$, $N_R$, and $\alpha$ are all distributive over union. %
   Since $s$ is arbitrary in $S_C$, we have 
   $\alpha(N_C(S_C)) = \bigcup_{s \in S_C} \alpha(N_C(s)) \subseteq \bigcup_{s \in S_C} {N_R}(\alpha(s)) = {N_R}(\alpha(S_C))$.
\end{proof}

Finally, a regular abstraction can be computed by composing smaller regular abstractions.
This fact is an immediate consequence of Lemma~\ref{lemma:dist-law}.

\begin{proposition}
    \label{prop:composition}
    Let $\phi_1,\dots,\phi_n$ be indexed array formulae, and
    $({R}_i,\pred{P})$ be a regular abstraction of $\phi_i$ for $i \in \NatInt{1}{n}$.
    Then $(\bigcup_{i=1}^n R_i, \pred{P})$ is a regular abstraction of $\bigvee_{i=1}^n \phi_i$,
    and $(\bigcap_{i=1}^n R_i, \pred{P})$ is a regular abstraction of $\bigwedge_{i=1}^n \phi_i$.
\end{proposition}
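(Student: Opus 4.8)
The plan is to reduce both claims to the distribution laws for $\alpha$ recorded in Lemma~\ref{lemma:dist-law}, combined with the elementary semantic identities $\sem{\bigvee_{i=1}^n \phi_i} = \bigcup_{i=1}^n \sem{\phi_i}$ and $\sem{\bigwedge_{i=1}^n \phi_i} = \bigcap_{i=1}^n \sem{\phi_i}$, together with the closure of regular relations under finite union and intersection. Recall that $(R,\pred{P})$ is a regular abstraction of $\phi$ exactly when $R$ is regular and $\alpha^{\pred{P}}(\phi) \subseteq R$; since all the $R_i$ share the same $\pred{P}$ (hence the same arity $r = \size{\voc_{int}}$ and $m = \size{\pred{P}}$), they live in a common abstract domain $\Nat^r \times \Sigma_{m}^*$. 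So for each part I must establish (i) that the combined relation is regular, and (ii) the corresponding inclusion of $\alpha$-images, and no arity mismatch can arise.

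For the disjunction, I would first note that $\bigcup_{i=1}^n R_i$ is regular: via Proposition~\ref{prop:logic-automata} each $R_i$ is first-order definable over $\struct_m$ by some formula $\psi_i$, whence $\bigvee_{i=1}^n \psi_i$ defines $\bigcup_{i=1}^n R_i$, which is therefore again regular. For the inclusion I would chain $\alpha(\bigvee_i \phi_i) = \alpha(\bigcup_i \sem{\phi_i}) = \bigcup_i \alpha(\phi_i) \subseteq \bigcup_i R_i$, where the middle equality is precisely the union half of Lemma~\ref{lemma:dist-law} and the final containment applies the hypothesis $\alpha(\phi_i) \subseteq R_i$ termwise.

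For the conjunction, regularity of $\bigcap_{i=1}^n R_i$ follows dually (the conjunction $\bigwedge_{i=1}^n \psi_i$ defines it over $\struct_m$). The inclusion is $\alpha(\bigwedge_i \phi_i) = \alpha(\bigcap_i \sem{\phi_i}) \subseteq \bigcap_i \alpha(\phi_i) \subseteq \bigcap_i R_i$: here the decisive step instantiates Lemma~\ref{lemma:dist-law} with $S_i = \sem{\phi_i}$ to get only the containment $\alpha(\bigcap_i S_i) \subseteq \bigcap_i \alpha(S_i)$, and the last step pushes the hypotheses termwise inside the intersection.

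The only genuine subtlety, and the step I would treat most carefully, is that $\alpha$ merely \emph{sub}-distributes over intersection rather than commuting with it. This is harmless precisely because regular abstraction is a one-sided overapproximation requirement ($\alpha(\phi) \subseteq R$, not equality): the containment direction supplied by Lemma~\ref{lemma:dist-law} points the right way, so the conjunctive abstraction stays sound even though $\bigcap_i R_i$ may be strictly coarser than $\alpha(\bigwedge_i \phi_i)$. The remaining bookkeeping — that $\lang{\cdot}$ behaves well under finite union and intersection of relations of the fixed shape $\Nat^r \times \Sigma_{m}^*$ — I would dispatch entirely through the first-order definability characterisation of Proposition~\ref{prop:logic-automata}, under which closure under $\vee$ and $\wedge$ is immediate and avoids any direct reasoning about the convolution encoding.
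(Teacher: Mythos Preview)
Your proposal is correct and matches the paper's own justification, which simply states that the proposition ``is an immediate consequence of Lemma~\ref{lemma:dist-law}.'' You have in fact spelled out more detail than the paper provides---in particular the closure of regular relations under finite union and intersection via Proposition~\ref{prop:logic-automata}, and the observation that only the sub-distributivity direction of $\alpha$ over intersection is needed---so there is nothing to add.
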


\section{Computation of regular abstractions}
\label{sec:computation-of-regular-abstractions}

We have proposed to overapproximate indexed predicate abstractions
with regular abstractions. However, it remains unclear how to find
such approximations that are sufficiently precise for a verification task.
In this section, we address this issue by providing an automated procedure to
compute regular abstractions for a fragment of
array logic called \emph{singly indexed array formulae}.
In a nutshell, give such a formula $\phi := \delta x_1 \cdots \delta x_n.\,\varphi$,
we compute an overapproximation of $\phi$ by replacing the matrix $\varphi$ of $\phi$
with a quantifier-free formula $\Tilde{\varphi}$
such that $\varphi \Rightarrow \Tilde{\varphi}$ is valid.
We shall choose $\Tilde{\varphi}$ in such a way
that the resulting formula $\delta x_1 \cdots \delta x_n.\,\Tilde{\varphi}$,
called an~\emph{abstraction formula},
can be faithfully encoded as a first-order formula over an automatic structure.
This encoding is equally expressive to regular languages,
and can be conveniently manipulated using tools such as Mona~\cite{klarlund2001mona,klarlund2002mona}
and Gaston~\cite{fiedor2017lazy} for abstract analysis.

\subsection{Singly Indexed Array Formulae}
An atomic indexed array formula is a~\emph{data expression}
if it contains an array variable; otherwise it is an~\emph{index expression}.
A~\emph{singly indexed array (SIA)} formula
is an indexed array formula in which every data expression
contains \emph{at most} one quantified index variable.
The satisfiability problem of SIA formulae
is already undecidable when the element theory is instantiated
to Difference Arithmetic~\cite[Lemma~5]{habermehl2008logic}.
Note that an SIA formula can be syntactically rewritten to a
logically equivalent SIA formula where every data expression has
\emph{exactly} one quantified index variable ---
hence the name ``singly indexed''. For example,
a data expression $a[1] \neq b[{n}]$ in an SIA formula can be
replaced with $\exists i.\, i = 1 \wedge a[i] \neq b[n]$
to obtain an equivalent SIA formula.
We shall assume an SIA formula to have this form
when we are computing a regular abstraction for the formula.

\subsection{A Constraint-Based Abstraction Procedure}
\label{sec:abstraction-procedure}

We need a couple more definitions before we introduce our abstraction method.
As before, fix a set $\voc$ of variables and %
a set $\pred{P} := \iset{ P_1, \ldots, P_m }$ of indexed predicates over $\voc$. %
A formula over $\voc$ is \emph{expressible in $\pred{P}$}
if each atom of the formula is either an index expression
or an expression of form $P[t/\ii]$ or $\neg P[t/\ii]$,
where $t$ is an index term and $P$ is a predicate in $\pred{P}$.
For a formula $\phi$ expressible in $\pred{P}$,
we shall use $\abs^\pred{P}(\phi)$ to denote the formula obtained by
replacing each atom $P_k[t/\ii]$ (resp.~$\neg P_k[t/\ii]$) in $\phi$
with $\psi \Rightarrow \Delta_k(t,w)$ (resp.~$\psi \Rightarrow \neg\Delta_k(t,w)$),
where $\psi$ encodes the boundary check for $P_k[t/\ii]$.
For example, suppose that $\phi = \forall i.\:a[i] \le b[p]$
and that $P_1 = a[\ii] > b[p]$ is a predicate in $\pred{P}$. Then
$
    \abs^\pred{P}(\phi) = \forall i.\:(1 \le i \wedge i \le \size{a} \wedge 1 \le p \wedge p \le \size{b} \Rightarrow \neg \Delta_1(i, w)).
$
Note that $\abs^\pred{P}(\phi)$ is defined over structure $\struct_m$ and variables $\voc_{int}\uplus\{w\}$.
Specifically, the free variables of $\abs^\pred{P}(\phi)$ comprise
the index variables and parameters of $\phi$, as well as a fresh word variable $w$ over domain $\Sigma_m^*$.
We may simply write $\abs^\pred{P}(\phi)$ as $\abs(\phi)$ when $\pred{P}$ is clear from the context.

Now we are ready to describe our abstraction method.
Consider a singly indexed formula $\phi$ over $\voc$.
Suppose w.l.o.g.~that the matrix of $\phi$ is given in disjunctive normal form
$\bigvee_j (g_j \wedge h_j)$,
where each $g_j$ is a conjunction of index expressions,
and each $h_j$ is a conjunction of data expressions.
Our procedure computes a regular abstraction for $\phi$ in two steps:
\begin{steps}
    \item Replace each $h_j$ in $\phi$ with a formula $\constr{h_j}$,
    where $\constr{\psi}$ denotes the conjunction of disjunctive clauses defined by
    \begin{align}
    \label{def:abstraction-constraint}
        \constr{\psi} := \bigwedge \left\{C \subseteq A(\psi,\pred{P}) : \,\psi \Rightarrow C~\,\text{is valid in}~\theory\,\right\},
    \end{align}
    and
    $A(\psi,\pred{P}) := \{ P_k[t/\ii], \neg P_k[t/\ii] : P_k \in \pred{P}$, and $t$ is an index term of $\psi$ or $\pred{P}\}$.
    Intuitively, $\constr{\psi}$ attempts to express in $\pred{P}$ the necessary conditions for
    a concrete state to satisfy $\psi$. Note that $A(\psi,\pred{P})$ is finite,
    and hence $\constr{\psi}$ is computable by our assumption on $\theory$.
    \item Let $\phi^*$ denote the formula obtained by
    replacing each $h_j$ with $\constr{h_j}$.
    This formula $\phi^*$, which we shall refer to as the \emph{abstraction formula} of $\phi$,
    is expressible in $\pred{P}$.
    We then define $\phi^\pred{P} = \phi^\pred{P}(\sigma,w) := \abs^\pred{P}(\phi^*)$,
    and use $(\sem{\phi^\pred{P}}, \pred{P})$ as a regular abstraction for $\phi$.
\end{steps}

Note that when computing $\constr{\psi}$, we can exclude a clause
$C$ if $\abs^\pred{P}(C)$ is valid in $\struct_m$
(e.g.~when $\{f,\neg f\}\subseteq C$ for some atom $f$),
since such a clause imposes no restrictions on the abstract states.
Furthermore, it suffices to consider the \emph{minimal} clauses $C$
satisfying $\psi \Rightarrow C$.
Thus, we can reduce the computation of $\constr{\psi}$ to enumerating the
\emph{minimal unsatisfiable cores (MUCs)} of $\psi \wedge \bigwedge A(\psi,\pred{P})$,
as negating such a core essentially leads to a minimal feasible clause of $\constr{\psi}$.
There are rich tool supports for MUC enumeration in the literature,
see~e.g.~\cite{liffiton2016fast,bendik2020must,bendik2018recursive}.
In practice, it is often possible to compute $\constr{\psi}$ incrementally
and obtain a precise enough abstraction without including all MUCs
in the abstraction formula.
We shall discuss these optimizations in Section~\ref{sec:evaluation}.

\begin{example}
    \label{ex:state-formula-abstraction-2}
    As an illustration, let us use our abstraction method to prove that 
    there always exists at least one privileged process in Dijkstra's self-stabilizing algorithm (cf.~Example~\ref{eg:dijkstra}). Consider the singly indexed array formula
    \[
        \phi := \forall i.\: (i=1 \wedge i < n \wedge a[i] \neq a[n]) \vee (i>1 \wedge i\le n \wedge a[i] = a[i-1]) \vee (i>n \wedge n \ge 2),
    \]
    which expresses that no process is privileged in a system containing $n \ge 2$ processes.
    We shall compute a regular abstraction of $\phi$ using
    indexed predicates $\pred{P} := \iset{a[\ii] = a[n],\, a[\ii] = a[\ii-1]}$, which comprises the atomic formulae of $\phi$.
    The first step computes the abstraction formula $\phi^*$ as
    \[
        \forall i.\: (i=1 \wedge i < n \wedge \constr{a[i] \neq a[n]}) \vee (i>1 \wedge i\le n \wedge \constr{a[i] = a[i-1]}) \vee (i>n \wedge n \ge 2).
    \]
    The nontrivial minimal feasible clauses for $\constr{a[i] \neq a[n]}$
    are $\{a[n]=a[n]\}$, $\{a[i] \neq a[n]\}$, and $\{a[i]\neq a[i-1], a[i-1]\neq a[n]\}$.
    (We call a clause $C$ trivial if $\abs(C)$ is valid in $\struct_m$.)
    Similarly, the nontrivial minimal feasible clauses for $\constr{a[i] = a[i-1]}$
    are $\{a[n] = a[n]\}$, $\{a[i] = a[i - 1]\}$, $\{a[i] = a[n], a[i - 1] \neq a[n]\}$, and $\{a[i] \neq a[n], a[i - 1] = a[n]\}$.
    From these clauses, we then compute a regular abstraction of $\phi$ as
    $\phi^\pred{P} = \phi^\pred{P}(n,w) := \abs(\phi^*) =
    \forall i.\: (i=1 \wedge i < n \wedge \abs(\constr{a[i] \neq a[n]})) \vee (i>1 \wedge i\le n \wedge \abs(\constr{a[i] = a[i-1]})) \vee (i>n \wedge n \ge 2).
    $
    For example,
    \begin{align*}
        \abs(\constr{a[i] \neq a[n]}) & =
        \abs(a[n]=a[n] \wedge a[i] \neq a[n] \wedge (a[i]\neq a[i-1] \vee a[i-1]\neq a[n])) \\
        & = (1 \le n \wedge n \le n \Rightarrow \Delta_1(n,w)) \wedge
             (1 \le i \wedge i \le n \Rightarrow \neg\Delta_1(i, w))~\wedge \\
        & \ \quad ((1 \le i \wedge i \le n \Rightarrow \neg\Delta_2(i, w)) \vee (1 \le i-1 \wedge i-1 \le n \Rightarrow \neg\Delta_1(i-1, w))).
    \end{align*}
    We can effectively check that $\phi^\pred{P}$ is unsatisfiable in ${\struct_2}$ by Proposition~\ref{prop:logic-automata}.
    Since regular abstractions are overapproximations,
    this implies that $\phi$ is unsatisfiable in $\theory$.
    In this way, with the provided indexed predicates,
    our tool can prove that $\phi$ is unsat in a second,
    while solvers like Z3 and cvc5 fail to handle the formula (i.e.~both of them output ``unknown'' for checking satisfiability of $\phi$.)
    \qed
\end{example}

\paragraph{Transition formulae}

Given a singly indexed transition formula $\phi$ over variables $\voc \uplus \voc'$,
we can compute a regular abstraction for $\phi$ by first extending the predicates
$\pred{P}$ to $\pred{P} \uplus \pred{P}'$,
where $\pred{P}'$ is obtained from $\pred{P}$ by replacing each $x\in \voc$ with $x'$.
We then compute an abstract transition formula
$\phi^\pred{P}((\sigma,w),(\sigma',w'))$ analogously to how we compute an abstract state formula $\phi^{\pred{P}}(\sigma,w)$.
However, if we only consider transition formulae induced by our guarded command language,
we can compute an abstract transition formula directly from a command as follows.
Define $\mathsf{copy\_except}(x) := \bigwedge\nolimits_{z \in \voc\setminus\{x\}} (z' = z)$.
Consider w.l.o.g.~a guarded command $\psi \allows \beta$, where $\beta$
is in form of either $x := e$ or $a[t] := e$.
For $x := e$, the corresponding transition formula is
\begin{equation}
    \label{eq:trans1}
    \psi \wedge \mathsf{copy\_except}(x) \wedge x' = e.
\end{equation}
For $a[t] := e$, the corresponding formula is
\begin{equation}
    \label{eq:trans2}
    \psi \wedge \mathsf{copy\_except}(a) \wedge a' = a\{t \leftarrow e\},
\end{equation}
where $a\{t \leftarrow e\}$ is a notation
from the theory of extensional arrays~\cite{mccarthy1993towards},
denoting the array obtained by assigning $e$ to $a[t]$.
(We may set $e$ to a fresh variable when $e$ is the nondeterministic symbol ``$*$''.)
Note that the abstraction formulae of (\ref{eq:trans1}) and (\ref{eq:trans2})
are computable. Indeed, for an array formula $\tau$ and a clause $C$,
the validity check of $(\tau \wedge a' = a\{t \leftarrow e\}) \Rightarrow C$
can be compiled into a decidable array theory supported by most SMT solvers
(cf.~\cite{kroening-book}).

\subsection{Correctness of the Abstraction Procedure}

Let $\phi := \delta x_1 \cdots \delta x_n.\,\varphi$ be a singly indexed array formula,
where $\varphi$ is quantifier-free.
Recall that, to compute a regular abstraction of $\phi$,
our procedure first computes an abstraction formula
$\phi^* = \delta x_1 \cdots \delta x_n.\,\Tilde{\varphi}$ by replacing
$\varphi$ with a quantifier-free formula $\Tilde{\varphi}$
such that
(1) $\Tilde{\varphi}$ is in negation normal form (NNF),
(2) $\Tilde{\varphi}$ is expressible in $\pred{P}$, and
(3) $\varphi \Rightarrow \Tilde{\varphi}$ holds in $\theory$.
Our procedure then outputs the formula $\phi^\pred{P} = \abs(\phi^*)$,
where $\abs(\cdot)$ replaces each atom $P_k[t/\ii]$ (resp.~$\neg P_k[t/\ii]$) in the formula
with {$\psi \Rightarrow \Delta_k(\sigma,w)$ (resp.~$\psi \Rightarrow \neg\Delta_k(t,w)$),
using $\psi$ to encode the boundary check of $P_k[t/\ii]$}.
Now we show that this procedure is sound, that is, $(\sem{\phi^\pred{P}}, \pred{P})$ is indeed a regular abstraction of $\phi$.

\begin{lemma}
  Let $\pred{P}$ be a set of indexed predicates, and $\phi$ be a singly indexed array formula
  that is expressible in $\pred{P}$. If $\phi$ is satisfiable, then $\abs(\mathsf{NNF}(\phi))$ is also satisfiable,
  where $\mathsf{NNF}(\phi)$ denotes the negation normal form of $\phi$. 
  \label{lemm:expressibel-sat}
\end{lemma}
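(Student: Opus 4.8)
The plan is to transport a concrete model of $\phi$ directly to a model of $\abs(\mathsf{NNF}(\phi))$ over $\struct_m$, thereby proving satisfiability preservation by an explicit witness construction. So let $\sigma \in \assign{\voc}$ satisfy $\phi$ in $\theory$, set $N := \max_{x \in \voc_{par}} \sigma(x)$, and let $w \in \Sigma_m^*$ be the word of length $N$ whose $k$-th bit at position $i$ is $\one$ exactly when $\sigma \models P_k[i/\ii]$, for $i \in \NatInt{1}{N}$ and $k \in \NatInt{1}{m}$. By the characterization of $\Phi_w^{\pred{P}}$ recalled after its definition, $w$ is then the canonical word with $\sigma \models \Phi_w^{\pred{P}}$, so that for $1 \le i \le N$ the truth value of $P_k[i/\ii]$ under $\sigma$ coincides both with the $k$-th bit of $w[i]$ and with whether $\Delta_k(i,w)$ holds in $\struct_m$. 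I take as candidate abstract model the interpretation $\hat\sigma$ that agrees with $\sigma$ on $\voc_{int}$ and maps the fresh word variable to $w$, and I claim $\hat\sigma \models \abs(\mathsf{NNF}(\phi))$.

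The argument is a structural induction on $\mathsf{NNF}(\phi)$, generalized to all subformulas $\chi$ and to arbitrary assignments $\eta$ of the enclosing quantified index variables, establishing the one-directional statement: $\sigma[\eta] \models \chi$ in $\theory$ implies $\hat\sigma[\eta] \models \abs(\chi)$ in $\struct_m$. The decisive feature is that $\mathsf{NNF}$ places every negation at an atom, leaving only $\wedge$, $\vee$, $\forall$, $\exists$ above the leaves; satisfaction is monotone in the truth values of the leaves under these connectives, which is precisely what allows a one-sided implication at each atom to propagate to the whole formula, and is the reason the statement would fail without the $\mathsf{NNF}$ hypothesis. The cases $\wedge$ and $\vee$ are immediate; for $\forall x$ and $\exists x$ I transfer the chosen value of $x$ and invoke the induction hypothesis, reconciling the integer index domain of $\theory$ with the $\Nat$ domain of $\struct_m$ via the convention that indices start at $1$ (so the concrete and abstract quantifier ranges agree).

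The heart of the proof is the atom base case, which splits three ways. An index atom contains no predicate, so $\abs$ leaves it unchanged, and since it mentions only variables of $\voc_{int}$ on which $\hat\sigma$ copies $\sigma$ and index terms are interpreted identically, concrete and abstract satisfaction coincide exactly. For a positive predicate atom $P_k[t/\ii]$, with image $\psi \Rightarrow \Delta_k(t,w)$ under $\abs$, I assume $\sigma[\eta] \models P_k[t/\ii]$: if the boundary check $\psi$ fails the implication is vacuous, and if it holds then $\psi$ forces the value $\ell := \hat\sigma[\eta](t) = \sigma[\eta](t)$ into $\NatInt{1}{N}$, so $\Delta_k(t,w)$ reduces to the $k$-th bit of $w[\ell]$, which by construction records the truth of $P_k[\ell/\ii]$, i.e.\ of $P_k[t/\ii]$ under $\sigma[\eta]$, which holds by assumption. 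The negated atom $\neg P_k[t/\ii] \mapsto \psi \Rightarrow \neg\Delta_k(t,w)$ is symmetric.

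I expect the main obstacle to be pinning down this atom-level correspondence cleanly. One must verify that when the boundary check succeeds the substituted index term $t$ evaluates (identically in $\theory$ and $\struct_m$) to a legal position of $w$, using the alignment $\size{w} = N \ge \size{a}$ so that the range check against $\size{a}$ suffices for indexing $w$; and one must invoke the singly-indexed hypothesis to ensure $t$ depends on at most one (now-instantiated) quantified index variable, so that $\Delta_k(t,w)$ is a well-formed $\struct_m$-atom accessing a single position. Once this base case and the quantifier-domain reconciliation are secured, the monotonicity lift through the $\mathsf{NNF}$ skeleton is routine, and $\hat\sigma$ witnesses satisfiability of $\abs(\mathsf{NNF}(\phi))$.
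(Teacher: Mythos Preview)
Your proposal is correct and follows essentially the same approach as the paper: construct the canonical word $w$ from a concrete solution $\sigma$, take $(\sigma_{int},w)$ as the abstract witness, and prove by induction that it satisfies $\abs(\mathsf{NNF}(\phi))$, using the NNF monotonicity to propagate the one-sided implication at the atoms upward through the connectives. The only cosmetic difference is that the paper organizes the induction by quantifier rank (handling all propositional structure in the rank-$0$ base case via DNF and a contradiction argument) whereas you do a full structural induction with atoms as the base case; both arguments hinge on the same atom-level correspondence between $\Delta_k(t,w)$ and $P_k[t/\ii]$.
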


\begin{proof}
Fix a set of predicates $\pred{P} = \iset{ P_1, \ldots, P_m }$.
For an SIA formula $\phi$ over $\voc$, suppose that $\sigma$ is a solution of $\phi$.
Then $\sigma$ uniquely determines a word $w = w(\sigma) \in \Sigma_m^*$ such that
$\sigma \models \Phi^{\pred{P}}_w$ and
$\size{w} = \max \{\sigma(x) : x$ is a parameter in $\voc\}$.
Particularly, for each $i \in \NatInt{1}{\size{w}}$ and $k \in \NatInt{1}{m}$,
$w \models \Delta_k(i,w)$ in $\struct_m$
if and only if
$\sigma \models P_k[i/\ii]$ in $\theory$ (cf.~Section~\ref{sec:indexed-predicate-abstraction}).
Now we argue that $(\sigma_{int},w)$ is a solution of $\abs(\phi)$,
where $\sigma_{int}$ denotes the restriction of $\sigma$ to index variables and parameters in $\voc$.
We shall prove this by induction on the quantifier rank of $\phi$, denoted by $rank(\phi)$.

For the base case, we show that $\sigma_{int}, w \models \abs(\phi)$ when $rank(\phi) = 0$ (i.e.~$\phi$ is quantifier-free).
Since $\phi$ is in NNF, we can assume w.l.o.g.~that $\phi = g\wedge h$ and $\abs(\phi) = g \wedge \abs(h)$ by writing $\phi$ in DNF.
Here, $g$ is a conjunction of index expressions, and $h = \bigwedge_j h_j$ is a conjunction of data expressions.
Assume to the contrary that 
$\sigma \models \phi$ but 
$\sigma_{int},w \not\models \abs(\phi)$.
Since $\sigma_{int} \models g$,
we have $\sigma_{int},w \models \neg\abs(h_j)$ for some $j$.
Since $h_j$ is expressible in $\pred{P}$,
suppose w.l.o.g.~that $h_j = P_k[t/\ii]$ for some $P_k \in \pred{P}$ and index term $t$.
Then $\sigma_{int},w \models 1 \le t \wedge t \le \size{w} \wedge \neg \Delta_k(t, w)$ holds by the definition of $\abs$.
But this implies $\sigma \models \neg h_j$ (and hence $\sigma \not\models \phi$) by the definition of $w$, leading to a contradiction.
It follows that $\sigma_{int},w \models \abs(\phi)$, i.e., the hypothesis is true when $rank(\phi) = 0$.

Now suppose that $rank(\phi)=n+1$, and that the hypothesis holds for formulae of rank equal to $n$.
If $\sigma \models \exists i.\,\psi$,
we have $\sigma \models \psi[p/i]$ for some $p$.
Then $\sigma_{int},w \models \abs(\psi[p/i])$ by the induction hypothesis,
which implies that $\sigma_{int},w \models \abs(\exists i.\,\psi)$.
Similarly, if $\sigma \models \forall i.\,\psi$,
we have $\sigma \models \psi[p/i]$ for all $p$.
Then $\sigma_{int},w \models \abs(\psi[p/i])$ for all $p$ by the induction hypothesis,
which implies that $\sigma_{int},w \models \abs(\forall i.\,\psi)$.
The statement therefore holds for formulae of all ranks by induction.
\end{proof}

\begin{theorem}[Correctness]
  \label{thm:correctness}
  Let $\pred{P}$ be a set of indexed predicates, and $\phi$ be a singly indexed array formula.
  Then %
  $\sem{\phi^\pred{P}}$ is regular and $\alpha({\phi}) \subseteq \sem{\phi^\pred{P}}$.
  Namely, $\phi^\pred{P}\!$ yields a regular abstraction of $\phi$.
\end{theorem}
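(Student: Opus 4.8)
The plan is to establish the two assertions of the theorem separately: that $\sem{\phi^\pred{P}}$ is regular, and that $\alpha(\phi) \subseteq \sem{\phi^\pred{P}}$. Regularity is immediate and costs nothing: by construction $\phi^\pred{P} = \abs^\pred{P}(\phi^*)$ is a first-order formula over the structure $\struct_m$ with free variables $\voc_{int} \uplus \{w\}$, so Proposition~\ref{prop:logic-automata} directly yields a finite automaton recognizing $\lang{\sem{\phi^\pred{P}}}$, whence $\sem{\phi^\pred{P}}$ is a regular relation. All the content therefore lies in the inclusion.

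For the inclusion I would first record that $\phi$ entails its abstraction formula $\phi^*$. Writing $\phi = \delta x_1 \cdots \delta x_n.\,\varphi$ and $\phi^* = \delta x_1 \cdots \delta x_n.\,\Tilde{\varphi}$, the construction of the matrix guarantees that $\varphi \Rightarrow \Tilde{\varphi}$ is valid in $\theory$: each data conjunct $h_j$ of the DNF matrix is replaced by the conjunction $\constr{h_j}$ of clauses it implies, while the index conjuncts $g_j$ are left untouched, so $g_j \wedge h_j \Rightarrow g_j \wedge \constr{h_j}$ holds disjunct by disjunct. Since both $\exists$ and $\forall$ are monotone with respect to validity of implication, a routine induction on the quantifier prefix $\delta x_1 \cdots \delta x_n$ lifts this to $\phi \Rightarrow \phi^*$, i.e.\ $\sem{\phi} \subseteq \sem{\phi^*}$.

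Next I would unwind the definition of $\alpha$ and appeal to Lemma~\ref{lemm:expressibel-sat}. Fix an abstract state $(s_{int}, w) \in \alpha(\phi)$. By definition there is a full interpretation $\sigma \models \phi$ with $\sigma_{int} = s_{int}$, $\size{w} = \max_{x \in \voc_{par}} \sigma(x)$, and $\sigma \models \Phi_w^\pred{P}$; these last two conditions determine $w$ uniquely from $\sigma$ (the bits of $w$ are fixed position-by-position by the truth values of the predicates, as noted after the definition of $\Phi_w^\pred{P}$), so $w$ is exactly the word $w(\sigma)$ constructed in the proof of Lemma~\ref{lemm:expressibel-sat}. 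From the previous paragraph $\sigma \models \phi^*$, and $\phi^*$ is in NNF and expressible in $\pred{P}$ by construction, so $\mathsf{NNF}(\phi^*) = \phi^*$ and $\abs(\phi^*) = \phi^\pred{P}$. The argument inside Lemma~\ref{lemm:expressibel-sat} is in fact \emph{pointwise} --- it shows that any solution $\sigma$ of an expressible NNF formula yields the solution $(\sigma_{int}, w(\sigma))$ of its $\abs$-image --- so applying it to $\phi^*$ gives $(\sigma_{int}, w) \models \phi^\pred{P}$, that is $(s_{int}, w) \in \sem{\phi^\pred{P}}$. As $(s_{int}, w)$ was arbitrary, $\alpha(\phi) \subseteq \sem{\phi^\pred{P}}$.

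The only delicate point, and the step I would spell out most carefully, is the identification of the word $w$ arising from $\alpha$ with the word $w(\sigma)$ produced in Lemma~\ref{lemm:expressibel-sat}, together with the observation that that lemma's proof actually establishes the required pointwise implication rather than the weaker satisfiability transfer stated in its conclusion. Everything else --- regularity via Proposition~\ref{prop:logic-automata} and the entailment $\phi \Rightarrow \phi^*$ --- is routine, since the substantive combinatorial work of matching predicate truth values against the bit-level access predicates $\Delta_k$ position by position has already been discharged in Lemma~\ref{lemm:expressibel-sat}.
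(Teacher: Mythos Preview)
Your proof is correct and shares the paper's overall skeleton: regularity via Proposition~\ref{prop:logic-automata}, the entailment $\phi \Rightarrow \phi^*$, and a transfer through Lemma~\ref{lemm:expressibel-sat}. The one genuine difference lies in how that lemma is invoked. You open up its proof and extract the stronger \emph{pointwise} statement---that any solution $\sigma$ of an expressible NNF formula yields the solution $(\sigma_{int}, w(\sigma))$ of its $\abs$-image---and apply it directly to $\phi^*$ after identifying the word $w$ coming from $\alpha$ with $w(\sigma)$. The paper instead treats the lemma as a black box delivering only satisfiability transfer, and forces the specific pair by conjoining the auxiliary formula $\Psi_{\sigma,w} := (\bigwedge_i x_i = \sigma(x_i)) \wedge \Phi_w^{\pred{P}}$: since $\phi^* \wedge \Psi_{\sigma,w}$ is satisfiable, in NNF, and expressible in $\pred{P}$, the lemma makes $\abs(\phi^* \wedge \Psi_{\sigma,w})$ satisfiable, and the $\abs(\Psi_{\sigma,w})$ conjunct then pins the index part and the first $\size{w}$ letters so that $(\sigma,w)$ itself must be a solution. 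Your route is shorter and avoids the auxiliary construction, at the cost of reaching into the lemma's internals; the paper's route is more modular, using only the lemma's stated interface. Both are sound.
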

\begin{proof}
$\sem{\phi^\pred{P}}$ is regular by Proposition~\ref{prop:logic-automata}.
Since $\alpha$ is monotonic and $\phi \Rightarrow \phi^*$,
it suffices to show that $\alpha({\phi^*}) \subseteq \sem{\phi^\pred{P}}$.
Consider an arbitrary abstract state $(\sigma,w) \in \alpha({\phi^*})$,
with $\sigma$ being an interpretation of index variables and system parameters $x_1,\ldots,x_r$ in $\voc$.
Define
$\Psi_{\sigma,w} := (\bigwedge_{i=1}^r x_i = \sigma(x_i)) \wedge \Phi_w^\pred{P}$.
Then $\phi^* \wedge \Psi_{\sigma,w}$ is satisfiable by the definition of $\alpha$.
Since $\phi^* \wedge \Psi_{\sigma,w}$ is in NNF and expressible in $\pred{P}$,
$\abs(\phi^* \wedge \Psi_{\sigma,w})$ is satisfiable by Lemma~\ref{lemm:expressibel-sat}.
However, observe that $(\sigma,w)$ is a solution of $\abs(\phi^* \wedge \Psi_{\sigma,w})$ when $\abs(\phi^* \wedge \Psi_{\sigma,w})$ is satisfiable,
as the formula only constrains its free word variable on the first $\size{w}$ letters.
Therefore,
we have $(\sigma,w)\in \sem{\abs(\phi^* \wedge \Psi_{\sigma,w})} = \sem{\abs(\phi^*) \wedge \abs(\Psi_{\sigma,w})}
\subseteq \sem{\abs(\phi^*)} = \sem{\phi^\pred{P}}$.
This allows us to conclude that $\alpha({\phi^*}) \subseteq \sem{\phi^\pred{P}}$,
and therefore $\alpha({\phi}) \subseteq \sem{\phi^\pred{P}}$.
\end{proof}

\subsection{A Closure Property}

Consider a finite set $\Phi$ of singly indexed array formulae.
Note that we can always select a set of indexed predicates that expresses all formulae in $\Phi$.
In such cases, if $\Phi$ is closed under Boolean operations, then
the abstractions our procedure computes for $\Phi$ are also closed under Boolean operations,
as is indicated by the following proposition.

\begin{proposition}
For indexed predicates $\pred{P}$
and singly indexed array formulae $\phi$ and $\psi$,
it holds that $(\phi \vee \psi)^\pred{P}\! \equiv \phi^\pred{P}\! \vee \psi^\pred{P}\!$.
Furthermore, we have $(\phi \wedge \psi)^\pred{P}\! \equiv \phi^\pred{P}\! \wedge \psi^\pred{P}\!$
and $(\neg \phi)^\pred{P}\! \equiv \neg \phi^\pred{P}\! \wedge \mathsf{true}^{\pred{P}}$\footnote{
One may regard $\sem{\mathsf{true}^{\pred{P}}}$ as the abstract state space induced by $\pred{P}$
in our regular abstraction framework.
}
when $\phi$ and $\psi$ are expressible in $\pred{P}$.
\label{prop:closure}
\end{proposition}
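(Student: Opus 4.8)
The three identities are proved by tracking the two stages of the construction, $\phi \mapsto \phi^*$ (Step~1: put $\phi$ in prenex/DNF form and replace each data conjunct $h_j$ by $\constr{h_j}$) and $\phi^* \mapsto \phi^{\pred{P}} = \abs(\phi^*)$, and by exploiting that $\abs(\cdot)$ is a \emph{homomorphism} for the connectives and quantifiers: it only rewrites an atom $P_k[t/\ii]$ (resp.\ $\neg P_k[t/\ii]$) into the guarded literal $\mathsf{bc}_{k,t} \Rightarrow \Delta_k(t,w)$ (resp.\ $\mathsf{bc}_{k,t} \Rightarrow \neg\Delta_k(t,w)$), leaving index expressions and the logical skeleton intact. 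Hence each claim reduces to comparing Step-1 outputs and then pushing $\abs$ through. Throughout I would read $\equiv$ as equality of the \emph{induced abstractions}, i.e.\ of $\sem{\cdot}$ intersected with the abstract state space $\sem{\true^{\pred{P}}}$, which is the only regime relevant to the abstraction and exactly where the boundary checks $\mathsf{bc}_{k,t}$ for in-range positions are in force, so that the guarded implications produced by $\abs$ collapse to plain literals. The disjunction needs no expressibility: after renaming bound variables apart, prenexing $\phi\vee\psi$ is a meaning-independent first-order equivalence, and the DNF of a disjunction is the disjoint union of the DNF clauses of $\phi$ and of $\psi$, so \emph{no} data conjunct is merged. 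Since $\constr{h_j}$ depends only on $h_j$ and $\pred{P}$ (its atom set $A(h_j,\pred{P})$ is unaffected by the surrounding formula), Step~1 yields $\phi^*\vee\psi^*$ up to the prenex rearrangement, and the $\abs$-homomorphism gives $(\phi\vee\psi)^{\pred{P}} \equiv \abs(\phi^*)\vee\abs(\psi^*) = \phi^{\pred{P}}\vee\psi^{\pred{P}}$.

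The conjunction is the substantive case and is where expressibility is essential. Now the DNF of $\phi\wedge\psi$ multiplies clauses, so a typical data conjunct becomes $h_j\wedge h'_l$ and Step~1 installs $\constr{h_j\wedge h'_l}$, whereas $\phi^{\pred{P}}\wedge\psi^{\pred{P}}$ carries $\constr{h_j}\wedge\constr{h'_l}$; in general the former is strictly stronger, because combining two data constraints can expose a consequence invisible to either alone, which is exactly why the identity fails for arbitrary SIA formulae. Under expressibility each $h$ is a conjunction of predicate literals over $A(h,\pred{P})$, and the key lemma I would prove is $\abs(\constr{h_j\wedge h'_l}) \equiv \abs(\constr{h_j})\wedge\abs(\constr{h'_l})$ on $\sem{\true^{\pred{P}}}$. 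The argument is a resolution/deductive-closure computation: $\constr{h}$ is the set of minimal $A$-clauses $\theory$-entailed by $h$, so it contains both the unit clauses of $h$ and the background $\cB$ of $\theory$-valid $A$-clauses; since $A(h_j\wedge h'_l,\pred{P}) = A(h_j,\pred{P})\cup A(h'_l,\pred{P})$ and $\cB$ sits inside both $\constr{h_j}$ and $\constr{h'_l}$, the resolution closure of $\constr{h_j}\cup\constr{h'_l}$ recovers every clause of $\constr{h_j\wedge h'_l}$. Passing to $\struct_m$ the bits $\Delta_k(t,w)$ are unconstrained, so clausal entailment becomes propositional entailment, and on $\sem{\true^{\pred{P}}}$ the boundary guards of the resolved atoms hold, so each resolution step survives the $\abs$-wrapping. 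Distributing $\wedge$ over $\vee$ and matching the index guards then yields $(\phi\wedge\psi)^{\pred{P}} \equiv \phi^{\pred{P}}\wedge\psi^{\pred{P}}$.

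For negation I would use that $\mathsf{NNF}(\neg\phi)$ is the De Morgan dual of $\mathsf{NNF}(\phi)$: the same skeleton with $\wedge\leftrightarrow\vee$, $\forall\leftrightarrow\exists$ and every literal flipped. Thus $\abs(\mathsf{NNF}(\neg\phi))$ and $\neg\abs(\mathsf{NNF}(\phi))$ share their connective/quantifier structure and differ only at the literal level, where $\abs$ of a flipped atom gives $\mathsf{bc}_{k,t}\Rightarrow\neg\Delta_k$ while negating $\abs$ of the original gives $\mathsf{bc}_{k,t}\wedge\neg\Delta_k$; these coincide exactly when $\mathsf{bc}_{k,t}$ holds and diverge only on boundary-violating valuations. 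Conjoining $\true^{\pred{P}}$ restricts to the abstract state space, where the relevant boundary checks hold, so both literal translations collapse to $\neg\Delta_k$ and $(\neg\phi)^{\pred{P}} \equiv \neg\phi^{\pred{P}}\wedge\true^{\pred{P}}$ follows. Expressibility is what guarantees the flipped atoms are again predicate literals, so that $\abs$ and $\constr{\cdot}$ act literal-wise. As a consistency check I would verify the three identities are mutually compatible under De Morgan, computing e.g.\ $(\neg(\phi\wedge\psi))^{\pred{P}}$ two ways.

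The step I expect to be the main obstacle is the conjunction lemma: establishing that $\constr{h_j}\wedge\constr{h'_l}$ recovers the full deductive closure $\constr{h_j\wedge h'_l}$ after the $\abs$-translation — that is, that the predicate abstraction is \emph{precise} for conjunctions of expressible constraints — together with the attendant boundary-check bookkeeping, so that the single restriction to $\sem{\true^{\pred{P}}}$ uniformly justifies every resolution step and every literal flip across all quantified positions.
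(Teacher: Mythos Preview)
Your disjunction argument matches the paper's. For conjunction you correctly isolate the key lemma $\abs(\constr{h_i\wedge h_j}) \Leftrightarrow \abs(\constr{h_i})\wedge\abs(\constr{h_j})$, but your resolution-closure sketch is vaguer than what the paper does. The paper's argument for the hard direction is concrete: given a clause $C$ feasible for $\constr{h_i\wedge h_j}$, expressibility of $h_i,h_j$ means $\neg h_j\vee C$ is itself a predicate-literal clause and is feasible for $\constr{h_i}$ (since $h_i\Rightarrow(\neg h_j\vee C)$ iff $h_i\wedge h_j\Rightarrow C$); symmetrically $\neg h_i\vee C$ is feasible for $\constr{h_j}$. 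Together with the unit clauses encoding $h_i$ and $h_j$, a single chain
\[
\abs(\constr{h_i}\wedge\constr{h_j})\Rightarrow\abs\bigl(h_i\wedge h_j\wedge(\neg h_i\vee C)\wedge(\neg h_j\vee C)\bigr)\Rightarrow\abs(C)
\]
finishes it. Your appeal to ``resolution closure recovers every clause'' would still need exactly this observation to go through.

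The negation case is where your plan diverges substantively, and it has a gap. You analyse $\abs(\mathsf{NNF}(\neg\phi))$ versus $\neg\abs(\mathsf{NNF}(\phi))$ literal by literal, but $(\neg\phi)^{\pred{P}}$ is $\abs((\neg\phi)^*)$, not $\abs(\mathsf{NNF}(\neg\phi))$: Step~1 replaces each data conjunct $h$ by $\constr{h}$, which in general adds clauses beyond the unit literals of $h$ even when $h$ is expressible. Your boundary-check comparison therefore addresses the wrong pair of formulae; you would need an auxiliary lemma that $\abs(\phi^*)$ and $\abs(\mathsf{NNF}(\phi))$ agree on $\sem{\true^{\pred{P}}}$ for expressible $\phi$. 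The paper sidesteps all of this by deriving negation \emph{algebraically} from the two identities already established: from $(\neg\phi\wedge\phi)^{\pred{P}}\equiv\false^{\pred{P}}\equiv\false$ and $(\neg\phi\vee\phi)^{\pred{P}}\equiv\true^{\pred{P}}$ one gets that $(\neg\phi)^{\pred{P}}$ and $\phi^{\pred{P}}$ are disjoint with union $\true^{\pred{P}}$, hence $(\neg\phi)^{\pred{P}}\equiv\neg\phi^{\pred{P}}\wedge\true^{\pred{P}}$. No literal-level or boundary-check analysis is needed. Relatedly, your decision to reinterpret $\equiv$ as equality restricted to $\sem{\true^{\pred{P}}}$ changes the statement; the paper proves plain logical equivalence throughout.
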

\begin{proof}
Suppose w.l.o.g.~that
$\phi = \delta x_1 \cdots \delta x_p.\bigvee_{i\in I} (g_i \wedge h_i)$ and 
$\psi = \delta y_1 \cdots \delta y_q.\bigvee_{j\in J} (g_j \wedge h_j)$,
where (i) $I \cap J = \emptyset$,
and (ii) for each $k \in I \uplus J$, $g_k$ is a conjunction of index expressions,
and $h_k$ is a conjunction of data expressions.
Then we have
\[
(\phi \vee \psi)^\pred{P} \equiv \delta x_1 \cdots \delta x_p\,\delta y_1 \cdots \delta y_q.
\bigvee\nolimits_{k \in\, I\,\uplus\,J} g_k \wedge \abs(\constr{h_k}) \equiv \phi^\pred{P}\! \vee \psi^\pred{P}\!.
\]
Furthermore, notice that
\begin{align*}
(\phi \wedge \psi)^\pred{P} & \equiv \delta x_1 \cdots \delta x_p\,\delta y_1 \cdots \delta y_q.
\bigvee\nolimits_{i \in I}\bigvee\nolimits_{j \in J} g_i \wedge g_j \wedge \abs(\constr{h_i \wedge h_j})\\
\phi^\pred{P}\! \wedge \psi^\pred{P} & \equiv 
\delta x_1 \cdots \delta x_p\,\delta y_1 \cdots \delta y_q.
\bigvee\nolimits_{i \in I}\bigvee\nolimits_{j \in J} g_i \wedge g_j \wedge \abs(\constr{h_i}) \wedge \abs(\constr{h_j})
\end{align*}
Hence, it suffices to show that
$\abs(\constr{h_i \wedge h_j}) \Leftrightarrow \abs(\constr{h_i}) \wedge \abs(\constr{h_j})$ 
holds in $\struct_m$.
For the ``$\Rightarrow$'' direction,
note that a feasible clause for $\constr{h_i}$ \emph{or} $\constr{h_j}$
is also feasible for $\constr{h_i \wedge h_j}$.
Thus, a solution of
$\abs(\constr{h_i \wedge h_j})$
is also a solution of
$\abs(\constr{h_i}) \wedge \phi(\constr{h_j})$.
For the ``$\Leftarrow$'' direction, suppose that $C$ is a feasible clause for
$\constr{h_i \wedge h_j}$, and thus
$\abs(\constr{h_i \wedge h_j}) \Rightarrow \abs(C)$ holds in $\struct_m$. 
Since $h_i$, $h_j$ are expressible in $\pred{P}$,
$\neg h_j \vee C$ and
$\neg h_i \vee C$ are feasible clauses for
$\constr{h_i}$ and $\constr{h_j}$, respectively.
Similarly, each atom of $h_i$ and $h_j$
induces a singleton feasible clause
for $\constr{h_i}$ and $\constr{h_j}$, respectively.
Hence,
$\abs(\constr{h_i} \wedge \constr{h_j}) \Rightarrow
\abs(h_i \wedge h_j \wedge (\neg h_i \vee C) \wedge (\neg h_j \vee C))
\Rightarrow \abs(h_i \wedge h_j \wedge C) \Rightarrow \abs(C)$ holds in $\struct_m$.
Thus, a solution of
$\abs(\constr{h_i}) \wedge \abs(\constr{h_j})$
is also a solution of
$\abs(\constr{h_i \wedge h_j})$.
This completes the proof for both directions, leading to $(\phi \wedge \psi)^\pred{P}\! \equiv \phi^\pred{P}\! \wedge \psi^\pred{P}\!$.

Finally, note that 
$(\neg \phi)^\pred{P}\! \wedge \phi^\pred{P}\! \equiv (\neg\phi \wedge \phi)^\pred{P}\! \equiv \mathsf{\false}^{\pred{P}}\! \equiv \mathsf{\false}$,
and $(\neg \phi)^\pred{P} \vee \phi^\pred{P}\! \equiv (\neg\phi \vee \phi)^\pred{P}\! \equiv \mathsf{true}^{\pred{P}}\!$.
This allows us to deduce that $(\neg \phi)^\pred{P}\! \equiv \neg\phi^\pred{P}\! \wedge \mathsf{true}^{\pred{P}}\!$.
\end{proof}

Proposition~\ref{prop:closure} makes it possible to compute precise regular abstractions of
complex formulae by composing abstractions of simple formulae.
For example, we may compute the regular abstraction of (\ref{eq:trans2})
as $\phi^\pred{P} = \psi^\pred{P} \wedge (\mathsf{copy\_except}(a) \wedge a' = a\{t \leftarrow e\})^\pred{P}$.
The two regular abstractions on the right-hand side can then be
reused for computing abstractions of other formulae.
Even if $\phi$ is not expressible in $\pred{P}$,
we can still compute $\phi^\pred{P}$ using composition
since regular abstractions are closed under conjunction (Proposition~\ref{prop:composition}),
but at the price of potentially losing precision.

We conclude this section with the following observation.

\begin{proposition}
Let $\pred{P}$ be a set of indexed predicates, and $\phi$ be a singly indexed array formula that is expressible in $\pred{P}$.
Then it holds that $\gamma(\phi^\pred{P}) = \gamma(\alpha(\phi))$.
\end{proposition}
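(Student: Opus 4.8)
The plan is to establish the set equality $\gamma(\phi^\pred{P}) = \gamma(\alpha(\phi))$ by two inclusions, exploiting the characterization $\gamma(S_A) = \{s : \alpha(s) \subseteq S_A\}$ from Lemma~\ref{lemma:galois}; note first that this characterization makes $\gamma$ monotone. The inclusion $\gamma(\alpha(\phi)) \subseteq \gamma(\phi^\pred{P})$ is then immediate: Theorem~\ref{thm:correctness} gives $\alpha(\phi) \subseteq \sem{\phi^\pred{P}}$, and monotonicity of $\gamma$ yields the claim. It is convenient to recast the whole statement in terms of realizable abstract states. Writing $\mathrm{Im}(\alpha)$ for the set of abstract states $(\sigma,w)$ that occur in $\alpha(s)$ for some concrete $s$, and observing that $\alpha(s) \subseteq \mathrm{Im}(\alpha)$ for every $s$, Lemma~\ref{lemma:galois} shows that $\gamma(S_A)$ depends only on $S_A \cap \mathrm{Im}(\alpha)$. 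Hence it suffices to prove the single equality $\sem{\phi^\pred{P}} \cap \mathrm{Im}(\alpha) = \alpha(\phi)$, whose ``$\supseteq$'' half is again Theorem~\ref{thm:correctness}.

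For the remaining half, $\sem{\phi^\pred{P}} \cap \mathrm{Im}(\alpha) \subseteq \alpha(\phi)$, I would first record that expressibility makes the constraint step lossless. Writing the matrix of $\phi$ as $\bigvee_j (g_j \wedge h_j)$, each $h_j$ is a conjunction of $\pred{P}$-literals $P_k[t/\ii]$ or $\neg P_k[t/\ii]$; for each such literal $\ell$ the singleton clause $\{\ell\}$ satisfies $h_j \Rightarrow \ell$ and is therefore feasible, so $\constr{h_j} \Rightarrow h_j$, while $h_j \Rightarrow \constr{h_j}$ holds by construction. Thus $\constr{h_j} \equiv h_j$ in $\theory$ and $\phi^* \equiv \phi$, so that $\phi^\pred{P} = \abs(\phi^*)$ is a faithful $\struct_m$-encoding of $\phi$. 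It now remains to take a realizable $(\sigma,w) \in \sem{\phi^\pred{P}}$ --- realized by some concrete state $s$ with $s_{int} = \sigma$, $\size{w} = \max_{x \in \voc_{par}} s(x)$, and $s \models \Phi_w^\pred{P}$ --- and produce a concrete model $s^\star$ of $\phi$ that realizes the same $(\sigma,w)$, which is exactly the requirement $(\sigma,w) \in \alpha(\phi)$.

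The construction of this witness is the crux of the argument, and I expect it to be the main obstacle. It amounts to a converse of Lemma~\ref{lemm:expressibel-sat}: for $\phi$ expressible in $\pred{P}$, satisfaction of $\abs(\phi^*)$ at a realizable $(\sigma,w)$ entails concrete satisfiability of $\phi$ compatibly with $(\sigma,w)$. I would prove this by induction on the quantifier rank of $\phi$, mirroring the proof of Lemma~\ref{lemm:expressibel-sat} but running the correspondence $w \models \Delta_k(i,w) \Leftrightarrow s \models P_k[i/\ii]$ in the reverse direction at in-range positions $i \in \NatInt{1}{\size{w}}$. The delicate point is two-fold: one must satisfy all data conjuncts simultaneously, and one must cope with literals whose index term leaves the array bounds, where the boundary guard inside $\abs$ is vacuously true and $w$ records nothing. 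The resolution is that such out-of-range predicate valuations are unconstrained by $w$, so the witness $s^\star$ may keep $s$'s in-range array values (preserving $s \models \Phi_w^\pred{P}$) while freely choosing out-of-range entries to satisfy $\phi$; and the definition of $\constr{\cdot}$, which enumerates \emph{all} $\pred{P}$-entailed clauses --- in particular the empty clause whenever some $h_j$ is unsatisfiable --- guarantees that $\abs(\phi^*)$ already rules out any genuine inconsistency, so no conflicting demands on $s^\star$ can survive. Combining this witness lifting with the easy inclusion yields $\sem{\phi^\pred{P}} \cap \mathrm{Im}(\alpha) = \alpha(\phi)$, and hence $\gamma(\phi^\pred{P}) = \gamma(\alpha(\phi))$.
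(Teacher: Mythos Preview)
Your easy inclusion $\gamma(\alpha(\phi)) \subseteq \gamma(\phi^\pred{P})$ via monotonicity of $\gamma$ and Theorem~\ref{thm:correctness} matches the paper exactly. The reduction to the pointwise statement $\sem{\phi^\pred{P}} \cap \mathrm{Im}(\alpha) \subseteq \alpha(\phi)$ is also sound and is a slight strengthening of what the paper proves.

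Where you diverge substantially is in the hard direction. The paper does \emph{not} attempt to build a witness $s^\star$ at all. Instead it argues by contradiction and invokes Proposition~\ref{prop:closure}: if $s \in \gamma(\phi^\pred{P}) \setminus \gamma(\alpha(\phi))$, then $\alpha(s) \subseteq \sem{\phi^\pred{P}}$ while $\alpha(s) \not\subseteq \alpha(\phi)$; the latter forces $s \models \neg\phi$, whence $\alpha(s) \subseteq \alpha(\neg\phi) \subseteq \sem{(\neg\phi)^\pred{P}}$, and the closure identity $(\neg\phi)^\pred{P} \equiv \neg\phi^\pred{P} \wedge \true^\pred{P}$ gives $\alpha(s) \cap \sem{\phi^\pred{P}} = \emptyset$, a contradiction since $\alpha(s)$ is nonempty. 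Note what this actually establishes in your language: whenever $(\sigma,w) \in \alpha(s) \cap \sem{\phi^\pred{P}}$, the realizing state $s$ \emph{itself} satisfies $\phi$. So the witness you are labouring to construct is simply $s^\star = s$ --- no modification at out-of-range positions is needed.

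Your direct construction, by contrast, is where the real gap lies. Your plan to keep $s$'s in-range values and freely adjust out-of-range entries does not obviously succeed: an out-of-range literal such as $P_k[t/\ii]$ can couple an out-of-range position to in-range ones (e.g.\ $a[t] = a[1]$ with $t$ out of range), so satisfying several such literals simultaneously may impose constraints on the in-range data that $s$ does not meet. You correctly intuit that the extra clauses in $\constr{h_j}$ should rule this out, but making that rigorous is essentially tantamount to re-proving the closure property inside your argument. The paper's route packages all of that into Proposition~\ref{prop:closure} and finishes in three lines; your route, if completed, would be a more self-contained but considerably longer proof of the same fact.
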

\begin{proof}
Since $\alpha({\phi}) \subseteq \sem{\phi^\pred{P}}$ by Theorem~\ref{thm:correctness}, we have $\gamma(\alpha({\phi})) \subseteq \gamma({\phi^\pred{P}})$.
Suppose to the contrary that there exists $s \in \gamma({\phi^\pred{P}}) \setminus \gamma(\alpha({\phi}))$.
Since $s \in \gamma(\phi^\pred{P})$, we have $\alpha(s) \subseteq \sem{\phi^\pred{P}}$ by Proposition~\ref{prop:galois}.
Similarly, since $s \notin \gamma(\alpha(\phi))$, we have $\alpha(s) \not\subseteq \alpha(\phi)$ by Proposition~\ref{prop:galois}.
The latter fact implies that $s \not\models \phi$, that is, $s \models \neg \phi$.
Thus, $\alpha(s) \subseteq \alpha(\neg\phi) \subseteq \sem{(\neg\phi)^\pred{P}} = \sem{\true^\pred{P}} \setminus \sem{\phi^\pred{P}}$
by Theorem~\ref{thm:correctness} and Proposition~\ref{prop:closure}.
It follows that $\alpha(s) \cap \sem{\phi^\pred{P}} = \emptyset$, a contradiction. This completes the proof.
\end{proof}

\section{Verification of temporal array properties}
\label{sec:rmc}
In this section, we discuss how to verify linear-time array system properties
by combining our abstraction techniques with regular model checking.
We shall first introduce the notion of temporal array properties and abstractable specifications. 
We then present two verification methods for typical safety and liveness array properties.
Finally, we describe a generic technique to verify
a syntactic fragment of array properties called the index-bounded monodic properties.

\subsection{Temporal Array Property}

To express the temporal properties of an array system,
we provide a specification language combining the indexed array logic with LTL.
This can be seen as a restriction of FO-LTL \cite{hodkinson1999decidable,abadi1989power} to the indexed array logic.
For ease of exposition,
we shall only consider the ``globally'' and ``eventually'' connectives in the sequel.
Our approach however can be extended to handle
other standard connectives such as ``next'' and ``until''
in a straightforward manner.

A~\emph{temporal array property} is a formula $\phi$ constrained by
\[
    \phi ::= \psi \mid \neg \phi \mid \phi \vee \phi \mid \phi \wedge \phi \mid \exists i.\,\phi \mid \forall i.\,\phi \mid \G \phi \mid \F \phi
\]
where $\psi$ is an indexed array formula, $i$ is an index variable, and $\G\!$ and $\F\!$ are
the standard ``globally'' and ``eventually'' temporal connectives, respectively, in LTL.
As usual, we extend the logic with logical operators such as $\Rightarrow$ (implies) and $\Leftrightarrow$ (if and only if).
The semantics of a temporal array property is standard (see e.g.~\cite{hodkinson1999decidable}).
Intuitively, %
temporal connectives offer means of relating two states at different time points in a path, whereas the first-order quantifiers allow one to relate different array positions.
For example, both $\G (\forall i.\,1 \leq i \wedge i \leq |a| \Rightarrow a[i] = 0)$ and
$\forall i.\,(1 \leq i \wedge i \leq |a| \Rightarrow \G a[i] = 0)$
assert that the array $a$ has only 0 as elements throughout the path.
We say that a temporal array property $\phi$ holds for an array system if the system does not have a path that satisfies $\neg\phi$.

\subsection{Abstraction of Array System Specifications}

\begin{definition}[Abstractable specification]
An array formula is \emph{abstractable}
if it is of form $\exists \args{i}.\: \phi$
for some (possibly quantified) SIA formula $\phi$.
An array system
is \emph{abstractable} if it is specified with abstractable array formulae.
A temporal array property $\phi$ is \emph{abstractable}
if it is constrained by
\begin{align*}
    \phi & ::= \psi \mid \phi \vee \phi \mid \phi \wedge \phi \mid \exists i.\,\phi \mid \G \phi \mid \F \phi
\end{align*}
for (possibly quantified) SIA formulae $\psi$.
\end{definition}

Abstractable formulae slightly generalize SIA formulae
by allowing multiple existentially indexed/quantified variables. 
In fact, a formula is abstractable if and only if it can be transformed into an SIA formula using \emph{Skolemization}.
Thus, an abstractable first-order array system $\structT$
can be formally transformed into a set of SIA formulae.
More precisely, suppose that $\phi := \exists \args{i}.\, \psi$
is an abstractable array formula over variables $\voc$.
Define an SIA formula $\psi^\star := \psi [\args{c}/\args{i}]$
over variables $\voc \uplus \{\args{c}\}$, where $\args{c}$ are fresh system parameters.
We can then add the new parameters $\args{c}$ to the vocabulary of $\structT$
and substitute $\psi^\star$ for $\phi$ in the specification. %
Clearly, a temporal array property holds for $\structT$
if and only if the property holds for the Skolemized version of $\structT$.

An abstractable temporal array property can
furthermore be converted to an equisatisfiable temporal array property
wherein all maximal non-temporal subformulae are singly indexed.
The conversion is essentially the same as Skolemization,
except that here we replace an indexed variable quantified inside
a $\G\!$ operator with a fresh index variable, and
replace an indexed variable that is \emph{only} quantified inside
an $\F\!$ operator with a fresh system parameter.
The maximal non-temporal subformulae of the obtained temporal array property are all singly indexed,
meaning that their semantics can be overapproximated by regular abstractions.
In other words, an abstractable temporal array property can be abstracted
to a propositional temporal formula with atoms being regular languages.
This fact makes it possible to leverage
existing verification techniques in regular model checking
to perform abstract analysis for abstractable specifications and properties.

\subsection{Safety Verification}

A \emph{safety array property} is a temporal array property in form of
$\forall \args{i}.\, (\psi_1 \Rightarrow \G \psi_2)$.
Fix a safety array property $\phi$ and an
array system $\structT := (\voc, \phi_I, \phi_T)$.
Suppose that both $\structT$ and
$\neg \phi \equiv \exists \args{i}.\, (\psi_1 \wedge \F \neg \psi_2)$
are abstractable.
Define $\psi_1^\star := \psi_1[\args{c}/\args{i}]$ and
$\psi_2^\star := \psi_2[\args{c}/\args{i}]$
with fresh system parameters $\args{c}$.
Moreover, define a transition system
$\structT^\star := (\voc^\star, \phi_I^\star, \phi_T, \phi_B)$
with
$\voc^\star := \voc \uplus \{\args{c}\}$,
$\phi_I^\star := \phi_I \wedge \psi_1^\star$,
and $\phi_B := \neg \psi_2^\star$.
It is clear that $\structT^\star$ is abstractable.
Given a set $\pred{P}$ of indexed predicates over $\voc^\star$,
we can compute regular abstractions
$(I, \pred{P})$, $(T, \pred{P})$, and $(B, \pred{P})$
of $\phi_I^\star$, $\phi_T$, and $\phi_B$,
respectively, as described in Section~\ref{sec:computation-of-regular-abstractions}.
We then obtain a regular safety property $(I, T, B)$
that holds iff $T^*(I) \cap B = \emptyset$,
where $T^*$ denotes the transitive closure of $T$.
It is clear that $(I, T, B)$
holds only if $\structT$ satisfies the safety array property $\phi$.
With Proposition~\ref{prop:logic-automata}, we can summarize this result as follows.

\begin{theorem}
  \label{thm:regulation-abstraction-for-safety}
    Let $\structT$ be an array system,
    and $\phi$ be a safety array property.
    Suppose that $\structT$ and $\neg \phi$ are abstractable.
    Given a set of indexed predicates,
    we can effectively compute finite automata $\cI$, $\cT$, and $\cB$
    such that $\phi$ holds for $\structT$ if the safety property $(\cI, \cT, \cB)$ holds.
\end{theorem}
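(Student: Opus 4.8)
The plan is to assemble the statement from three ingredients developed earlier: the Skolemization-based reduction described just above the theorem, the correctness of the constraint-based abstraction procedure (Theorem~\ref{thm:correctness}) together with the logic--automata correspondence (Proposition~\ref{prop:logic-automata}), and the soundness of abstract reachability analysis recorded at~(\ref{formula:safety}). The overall structure is a reduction followed by a routine composition of these results, with one genuinely delicate justification.

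First I would fix the reduction. By definition, $\phi$ holds for $\structT$ exactly when $\structT$ has no path satisfying $\neg\phi \equiv \exists \args{i}.\,(\psi_1 \wedge \F \neg \psi_2)$. I would argue that such a path exists if and only if, for some valuation of the fresh parameters $\args{c}$, the augmented system $\structT^\star = (\voc^\star, \phi_I^\star, \phi_T, \phi_B)$ admits a concrete path that starts in $\sem{\phi_I \wedge \psi_1^\star}$ and reaches $\sem{\neg \psi_2^\star}$. The top-level $\exists \args{i}$ becomes a choice of parameter values, and because the $\args{c}$ are \emph{immutable} along every path of $\structT^\star$, the same witness is used to evaluate $\psi_1$ at the initial state and $\neg\psi_2$ at the reached state, faithfully mirroring the single existential lying outside the temporal operators. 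Consequently, ``$\phi$ holds for $\structT$'' is equivalent to the concrete safety statement~(\ref{formula:safety}) for $\structT^\star$ with $Init = \sem{\phi_I^\star}$, $Bad = \sem{\phi_B}$, and $N_C$ the concrete next-step function of $\phi_T$.

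Next, since $\structT^\star$ is abstractable and its defining formulae are singly indexed after Skolemization, I would invoke the procedure of Section~\ref{sec:abstraction-procedure} and Theorem~\ref{thm:correctness} to compute regular abstractions $(I,\pred{P})$, $(T,\pred{P})$, and $(B,\pred{P})$ of $\phi_I^\star$, $\phi_T$, and $\phi_B$, so that $\alpha(\phi_I^\star)\subseteq I$, $\alpha^{\pred{P}*\pred{P}'}(\phi_T)\subseteq\tau(T)$, and $\alpha(\phi_B)\subseteq B$. By Proposition~\ref{prop:logic-automata} each of these regular relations is effectively presented as a finite automaton over $\struct_m$; these are the desired $\cI$, $\cT$, $\cB$, and the regular safety property $(\cI,\cT,\cB)$ holds iff $T^*(I)\cap B=\emptyset$. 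For soundness I would then chain the abstractions through the abstract-interpretation machinery. The theorem establishing that $N_R$ provides an abstract interpretation gives the simulation $\alpha(N_C(S_C))\subseteq N_R(\alpha(S_C))$; combined with the Galois connection (Proposition~\ref{prop:galois}) and monotonicity, this yields $N_A(S_A)\subseteq N_R(S_A)$ for all $S_A$, whence $T^*(I)\supseteq \mathrm{Reach}_{N_A}(\alpha(\phi_I^\star))$. Therefore, if $T^*(I)\cap B=\emptyset$, then using $\alpha(\phi_B)\subseteq B$ the abstract bad set is unreachable from $\alpha(\phi_I^\star)$ under $N_A$, and the implication at~(\ref{formula:safety}) transfers this to the concrete domain; by the reduction above, $\phi$ holds for $\structT$, which is exactly the soundness claim. (Equivalently, by contraposition: given a concrete counterexample path $s_0\cdots s_k$, each $\alpha(s_j)$ is nonempty, $\alpha(s_0)\subseteq I$ and $\alpha(s_k)\subseteq B$, and a backward selection using $\alpha(s_{j+1})\subseteq N_R(\alpha(s_j))$ produces an abstract run witnessing $T^*(I)\cap B\neq\emptyset$.)

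I expect the main obstacle to be the faithfulness of the Skolemization step in the first paragraph: one must verify that turning the leading existential index block $\args{i}$ of $\neg\phi$ into \emph{immutable} system parameters $\args{c}$ neither adds nor removes counterexample paths. The immutability is essential — it is what forces a single witness to be shared between the guard $\psi_1$, evaluated at the initial state, and the target $\neg\psi_2$, evaluated later along the path, matching the semantics of $\exists\args{i}.\,(\psi_1\wedge\F\neg\psi_2)$; this is precisely the case distinction in the conversion discussed before the theorem, under which variables quantified outside $\G$ become parameters. Once this reduction is granted, the remaining steps are a straightforward composition of Theorem~\ref{thm:correctness}, Proposition~\ref{prop:logic-automata}, and the already-established soundness of abstract reachability, so I would keep their treatment brief.
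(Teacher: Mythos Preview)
Your proposal is correct and follows essentially the same route as the paper: the paper does not give a separate proof environment for this theorem but instead treats it as a summary of the construction in the preceding paragraph (Skolemize $\neg\phi$ into $\structT^\star$, compute regular abstractions of $\phi_I^\star$, $\phi_T$, $\phi_B$ via Section~\ref{sec:computation-of-regular-abstractions}, and invoke Proposition~\ref{prop:logic-automata}), with the soundness step dismissed as ``it is clear.'' Your write-up simply makes that soundness step explicit by unpacking the abstract-interpretation machinery (the simulation $\alpha\circ N_C\subseteq N_R\circ\alpha$ and the Galois connection) and by spelling out why immutability of the Skolem parameters $\args{c}$ is what makes the reduction faithful; this is more detail than the paper supplies, but the underlying argument is the same.
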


\subsection{Liveness Verification}
A \emph{liveness array property} is a temporal array property
in form of $\forall \args{i}.\, (\psi_1 \Rightarrow \F \psi_2)$.
(We pick this ``eventuality'' property for simplicity and sufficiency for our benchmarks, although our technique can easily be adapted to other liveness properties such as recurrence).
Fix a liveness array property $\phi$ and an
array transition system $\structT := (\voc, \phi_I, \phi_T)$.
Suppose that both $\structT$ and
$\neg \phi \equiv \exists \args{i}.\,(\psi_1 \wedge \G \neg \psi_2)$
are abstractable.
Define $\psi_1^\star := \psi_1[\args{c}/\args{i}]$
and $\psi_2^\star := \psi_2[\args{c}/\args{i}]$
with fresh system parameters $\args{c}$,
and an array transition system
$\structT^\star := (\voc^\star, \phi_I^\star, \phi_T, \phi_F)$,
where
$\voc^\star := \voc \uplus \{\args{c}\}$,
$\phi_I^\star := \phi_I \wedge \psi_1^\star$,
and $\phi_F := \psi_2^\star$.
It is clear that $\structT^\star$ is abstractable.
Given a set $\pred{P}$ of indexed predicates over $\voc^\star$,
we compute regular abstractions
$(I, \pred{P})$, $(T, \pred{P})$, and $(E, \pred{P})$ for
$\phi_I^\star$, $\phi_T$, and $\neg\phi_F$, respectively.
Now, following the reduction described in Section~\ref{sec:abstract-analysis},
we let $F := \Sigma_m^*\setminus E$ and define a regular liveness property $(I, T, F)$
that holds if and only if
$st(\pi) \cap F \neq \emptyset$ for every path $\pi \in \Pi(I,T)$.
It is not hard to see that the abstract transition system $\iset{\Sigma_m^*,I,T}$ preserves
\emph{infinite} counterexample paths from $\structT$. That is,
if there exists an infinite concrete path $s_0\, s_1 \cdots$
such that $s_0 \models \psi_I^\star$
and $s_i \not\models \phi_F$ for all $i \ge 0$,
then there exists an infinite abstract path $t_0\,t_1 \cdots$
such that $t_0 \in I$ and $t_i \notin F$ for all $i \ge 0$.

We say that a transition system is~\emph{$\neg\phi$-progressing}
if every maximal system path satisfying $\neg \phi$ is an infinite path.
Observe that when $\structT$ is $\neg\phi$-progressing,
$(I, T, F)$ holds only if $\structT$ satisfies the liveness array property $\phi$.
We hence have the following result.

\begin{theorem}
  \label{thm:regulation-abstraction-for-liveness}
  Let $\structT$ be an array system,
  and $\phi$ be a liveness array property.
  Suppose that $\structT$ and $\neg \phi$ are abstractable,
  and that $\structT$ is $\neg \phi$-progressing.
  Then given a set of indexed predicates,
  we can effectively compute finite automata $\cI$, $\cT$, and $\cF$
  such that $\phi$ holds for $\structT$
  if the liveness property $(\cI, \cT, \cF)$ holds.
\end{theorem}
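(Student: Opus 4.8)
The plan is to assemble two facts that are already in place—effective computability of the regular abstractions and preservation of infinite counterexamples—into a contrapositive soundness argument, the only genuinely new content being to make the single-step lifting pointwise.

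\emph{Effective computability.} By abstractability, the Skolemized system $\structT^\star := (\voc^\star, \phi_I^\star, \phi_T, \phi_F)$ is specified entirely by SIA formulae. Applying the constraint-based procedure of Section~\ref{sec:computation-of-regular-abstractions} to $\phi_I^\star$, $\phi_T$, and $\neg\phi_F$ yields, by Theorem~\ref{thm:correctness}, regular abstractions $(I, \pred{P})$, $(T, \pred{P})$, and $(E, \pred{P})$. Proposition~\ref{prop:logic-automata} converts each into a finite automaton, and since regular languages are closed under complement we obtain an automaton $\cF$ for $F := \Sigma_m^* \setminus E$; setting $\cI, \cT$ to the automata for $I, T$ completes this part.

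\emph{Soundness by contraposition.} Suppose $\phi$ fails for $\structT$, so $\structT$ has a path satisfying $\neg\phi \equiv \exists \args{i}.(\psi_1 \wedge \G \neg\psi_2)$. Since $\structT$ is $\neg\phi$-progressing, this path is infinite; instantiating the existential witnesses by the fresh parameters $\args{c}$ gives an infinite run $s_0 s_1 \cdots$ of $\structT^\star$ with $s_0 \models \phi_I^\star$ and $s_i \models \neg\phi_F$ for all $i \ge 0$. Here $\neg\phi$-progressing is essential: the regular liveness property also constrains finite maximal paths, which the lifting below does not cover. I would then lift this run to an infinite abstract path $\pi = t_0 t_1 \cdots$ with $t_0 \in I$ and $t_i \notin F$ for all $i$; such a path satisfies $st(\pi) \cap F = \emptyset$, contradicting the regular liveness property $(\cI, \cT, \cF)$. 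Taking the contrapositive yields the theorem.

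\emph{The crux: the single-step lifting.} The one point needing care is producing, for the \emph{specific} $t_i$ already chosen on the abstract path, a successor $t_{i+1} \in \alpha(s_{i+1})$ with $(t_i, t_{i+1}) \in T$; the simulation underlying the abstract-interpretation theorem of Section~\ref{sec:regular-approximation} only supplies the set inclusion $\alpha(s_{i+1}) \subseteq N_R(\alpha(s_i))$, naming \emph{some} predecessor rather than $t_i$. I would close this gap by observing that for a single concrete transition $s_i \+ s_{i+1}' \models \phi_T$ the abstraction factors as a product: because parameters are immutable (so the two halves have equal length and convolve cleanly) and each predicate in $\pred{P}$ (resp.\ $\pred{P}'$) is evaluated solely on the unprimed part $s_i$ (resp.\ the primed part $s_{i+1}$), one has $\alpha^{\pred{P} * \pred{P}'}(s_i \+ s_{i+1}') = \alpha(s_i) \times \alpha(s_{i+1})$ under the isomorphism $\tau$. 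As this set lies in $\tau(T)$, the full product $\alpha(s_i) \times \alpha(s_{i+1})$ is contained in $T$, so \emph{any} $t_i \in \alpha(s_i)$ admits \emph{any} element of $\alpha(s_{i+1})$ as a $T$-successor. The abstract path is then built by a trivial induction: start from any $t_0 \in \alpha(s_0)$, which lies in $I$ since $\alpha(s_0) \subseteq \alpha(\phi_I^\star) \subseteq I$, and at each step pick any $t_{i+1} \in \alpha(s_{i+1})$; every $t_i$ satisfies $t_i \in \alpha(s_i) \subseteq \alpha(\neg\phi_F) \subseteq E$, hence $t_i \notin F$. This product decomposition is the main obstacle; without it one would instead invoke König's lemma on the finitely-branching tree of $F$-avoiding abstract prefixes tracking $s_0 s_1 \cdots$ (finite branching because each $\alpha(s_i)$ is finite), but showing that tree has arbitrarily long branches meets exactly the same predecessor-matching issue, so the product identity is the cleanest route.
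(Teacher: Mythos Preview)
Your proposal is correct and follows the same route the paper takes: build the Skolemized system $\structT^\star$, compute regular abstractions $(I,\pred{P})$, $(T,\pred{P})$, $(E,\pred{P})$ via Section~\ref{sec:computation-of-regular-abstractions}, set $F := \Sigma_m^*\setminus E$, and argue soundness by lifting an infinite concrete counterexample path (guaranteed by $\neg\phi$-progressing) to an infinite abstract path avoiding $F$. The paper does not give a formal proof; it simply asserts ``it is not hard to see that the abstract transition system $\iset{\Sigma_m^*,I,T}$ preserves infinite counterexample paths'' and then states the theorem.

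Where you go beyond the paper is in making the pointwise lifting rigorous. You correctly observe that the abstract-interpretation theorem of Section~\ref{sec:regular-approximation} only gives the set-level inclusion $\alpha(N_C(s_i)) \subseteq N_R(\alpha(s_i))$, which a priori matches each $t_{i+1}$ with \emph{some} predecessor in $\alpha(s_i)$ rather than the specific $t_i$ already chosen. Your product identity $\alpha^{\pred{P}*\pred{P}'}(s_i \cdot s_{i+1}') = \tau(\alpha(s_i) \times \alpha(s_{i+1}))$ is the right way to close this: since the predicates in $\pred{P}$ depend only on the unprimed part and those in $\pred{P}'$ only on the primed part, and since immutable parameters force the two words to have the same length, the convolution factors. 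This identity is in fact implicit in the paper's proof of the abstract-interpretation theorem (the inclusion step there works for \emph{every} $u$ with $s \models \Phi_u^{\pred{P}}$, not just one), but you are right to isolate it, as it is exactly what is needed for the inductive path construction.
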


\subsubsection*{Progress verification}
Theorem \ref{thm:regulation-abstraction-for-safety}
and \ref{thm:regulation-abstraction-for-liveness}
exploit the same technique that
overapproximates the target array property
with a counterpart property of the abstract system.
Generally, this technique works for any temporal array property
with an abstractable negated formula,
as long as a false property remains false
after new transitions are introduced to the system.
For a safety property, this requirement is met automatically
since extensions of a counterexample path are still counterexample paths.
For a liveness array property $\phi$,
this requires the array system to be $\neg\phi$-progressing,
i.e., the system does not have a finite maximal counterexample path.

We may formulate and verify the progress condition as a safety array property as follows.
Given an array system $\structT := (\voc, \phi_I, \phi_T)$
and a liveness array property
$\phi := \forall \args{i}.\, (\psi_1 \Rightarrow \F \psi_2)$,
we aim to check if $\structT$ is $\neg\phi$-progressing.
As before, define $\psi_1^\star := \psi_1[\args{c}/\args{i}]$
and $\psi_2^\star := \psi_2[\args{c}/\args{i}]$
with fresh system parameters $\args{c}$.
We then specify an array system
$\structT^* := (\voc^\star, \phi_I^\star, \phi_T^\star)$
with $\voc^\star := \voc \uplus \{\args{c}\}$,
$\phi_I^\star := \phi_I \wedge \psi_1^\star$, and
$\phi_T^\star := \phi_T \wedge \neg \psi_2^\star$.
It is easy to see that the system $\structT^*$ is safe
w.r.t.~the safety array property
$\phi^* := \forall \args{i}.\,\G \exists \args{i}'\!.\,\phi_T^\star$
if and only if $\structT$ is $\neg\phi$-progressing.
When the transition formula $\phi_T$
is specified with guarded commands,
say with $n$ commands $(A_1 \allows B_1), \ldots, (A_n \allows B_n)$,
the safety property $\phi^*$ can be written as
$\forall \args{i}.\,\G (A_1^\star \vee \cdots \vee A_n^\star \vee \psi_2^\star)$.
Notably, when both $\structT$ and $\neg\phi$
are abstractable (which is precisely the assumption of Theorem~\ref{thm:regulation-abstraction-for-liveness}),
the induced system $\structT^*$ and safety property $\phi^*$ are also abstractable.
Consequently, the progress condition of $\structT$ with respect to $\neg\phi$
can be checked formally using the safety verification method stated in
Theorem~\ref{thm:regulation-abstraction-for-safety}.

\subsection{Liveness Verification under Fairness Requirements}
\label{sec:fairness}
We briefly discuss how to perform liveness verification of a subclass of array systems in the presence of fairness requirements. Such requirements are essential for specifying reactive and concurrent process systems (cf.~\cite{demri2016temporal,manna2012temporal}).

\begin{definition}[Index-bounded array system]
    An array system $(\voc, \phi_I, \phi_T)$ is~\emph{index-bounded}
    if for each index variable $x \in \voc$, there exist two parameters $l_x, h_x \in \voc$
    such that $s(l_x) \le s(x) \le s(h_x)$ holds in any reachable state $s$ of the system.
    In other words, the valuation of $x$ is bounded between $l_x$ and $h_x$ during system executions.
\end{definition}

Intuitively, index-boundedness requires the range of index variables to be finite at runtime,
while leaving the range of array elements unrestricted.
This assumption is reasonable since in practice, index variables are mostly used to access elements of a finite array.
Index-boundedness can be specified formally in the array system specification,
or checked against a given array system by formulating the assumption as a safety array property.

\begin{definition}[Fairness specification]
    An~\emph{(abstractable) fairness specification}
    is a temporal array property $\lambda := \forall i.\,\eta$ constrained by the grammar
    \begin{align*}
        \eta & ::= \GF \psi \mid \FG \psi \mid \eta \vee \eta \mid \eta \wedge \eta,
    \end{align*}
    where $\psi$ is an abstractable indexed array formula.
    We say that a path $\pi$ is \emph{fair with respect to~$\lambda$},
    or simply \emph{fair} when $\lambda$ is clear,
    if $\pi$ satisfies $\lambda$.
    A temporal array property $\phi$ holds for an array system under a fairness specification
    if the system does not have a fair path satisfying $\neg\phi$.
\end{definition}

We note that our formalism of fairness is expressive enough to capture typical fairness
requirements for reactive and concurrent systems, including
\emph{process fairness} (in form of $\GF \phi$),
\emph{weak fairness} (in form of $\GF \phi \vee \GF \psi$),
and \emph{strong fairness} (in form of $\GF \phi \vee \FG \psi$).
See e.g.~\cite{manna2012temporal} for a detailed discussion of these fairness requirements.

Given fairness specification $\lambda$ and indexed predicates $\pred{P}$,
we can compute a regular abstraction $(E,\pred{P})$
overapproximating $\lambda$ in the abstract domain.
Namely, if there is a fair concrete path $s_0\, s_1 \cdots$ with respect to $\lambda$,
then there is a fair abstract path $t_0\, t_1 \cdots$ with respect to $E$,
and $t_i \in \alpha(s_i)$ for each $i \ge 0$.
When the abstract system is \emph{weakly finite}~\cite{esparza2012proving},
that is, every infinite path eventually enters a cycle,
checking the existence of a fair counterexample path essentially amounts to
searching for a reachable fair cycle,
which can in turn be reduced to a safety verification problem~\cite{schuppan2006liveness,daniel2016infinite}.
Indeed, for weakly finite systems, we can formally
translate the abstract fairness condition $(E,\pred{P})$
to a formalism similar to the so-called \emph{B\"uchi regular transition system (BRTS)}~\cite{abdulla2012regular},
which generalizes the translation from linear temporal properties to
B\"uchi automata in finite-state model checking~\cite{vardi1986automata}.
We refer the interested reader to~\cite{hong2022symbolic} for the technical details of our translation procedure.

It is easy to see that regular abstractions of an index-bounded array system
are weakly finite. Thus, together with Theorem~\ref{thm:regulation-abstraction-for-liveness},
we can summarize our results in this section as follows.
\begin{theorem}
 \label{thm:regulation-abstraction-for-liveness-2}
  Let $\structT$ be an index-bounded array system, $\phi$ be a liveness array property, and $\lambda$ be a fairness specification.
  Suppose that $\structT$ and $\neg \phi$ are abstractable,
  and that $\structT$ is $\neg \phi$-progressing.
  Then we can compute finite automata $\cI$, $\cT$, and $\cF$,
  as well as a regular fairness requirement $\Lambda$,
  such that $\phi$ holds for $\structT$ under the fairness specification $\Lambda$ if the liveness property $(\cI, \cT, \cF)$ holds under the fairness requirement $\Lambda$.
\end{theorem}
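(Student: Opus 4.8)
The plan is to derive the result by assembling three ingredients already developed above: the liveness reduction of Theorem~\ref{thm:regulation-abstraction-for-liveness}, the regular overapproximation of the fairness specification from Section~\ref{sec:fairness}, and the weak finiteness that index-boundedness guarantees. I would prove soundness in contrapositive form, showing that a $\lambda$-fair concrete counterexample to $\phi$ lifts to a $\Lambda$-fair abstract counterexample to $(\cI,\cT,\cF)$.

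First I would compute the automata. Since $\structT$ and $\neg\phi$ are abstractable and $\structT$ is $\neg\phi$-progressing, Theorem~\ref{thm:regulation-abstraction-for-liveness} applies verbatim and yields finite automata $\cI$, $\cT$, $\cF$ for the Skolemized system $\structT^\star := (\voc^\star, \phi_I^\star, \phi_T, \phi_F)$, where $\phi_I^\star := \phi_I \wedge \psi_1^\star$ and $\phi_F := \psi_2^\star$. Separately, I would apply the fairness abstraction of Section~\ref{sec:fairness} to the fixed predicate set $\pred{P}$ over $\voc^\star$: compute the regular abstraction $(E,\pred{P})$ overapproximating $\lambda$ and package it as the regular fairness requirement $\Lambda$. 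The role of index-boundedness is to make this checkable: as noted just above, a regular abstraction of an index-bounded system is weakly finite, so the BRTS-style translation of~\cite{hong2022symbolic} applies and a $\Lambda$-fair counterexample search reduces to reachable-fair-cycle detection, hence to a safety query.

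For soundness, suppose $\phi$ fails for $\structT$ under $\lambda$, so there is a $\lambda$-fair path $\pi = s_0 s_1 \cdots$ with $\pi \models \neg\phi \equiv \exists\args{i}.(\psi_1 \wedge \G\neg\psi_2)$. Skolemizing the witnessing indices into the parameters $\args{c}$ gives $s_0 \models \phi_I^\star$ and $s_i \not\models \phi_F$ for every $i$, and since $\structT$ is $\neg\phi$-progressing this maximal counterexample is infinite. Combining the fairness abstraction with the simulation property underlying the abstract interpretation, I would lift $\pi$ to a single infinite path $t_0 t_1 \cdots$ of the abstract system $(\cI,\cT)$ that is $\Lambda$-fair and satisfies $t_i \in \alpha(s_i)$ for all $i$. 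Fairness transfers because each atom $\psi$ appearing in $\lambda$ is replaced by its regular abstraction, for which $\alpha(\psi) \subseteq \sem{\psi^{\pred{P}}}$ holds by Theorem~\ref{thm:correctness}; thus $s_i \models \psi$ forces $t_i \in \sem{\psi^{\pred{P}}}$, and every $\GF\psi$ and $\FG\psi$ conjunct satisfied by $\pi$ is inherited by $t_0 t_1 \cdots$.

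It remains to observe that this \emph{same} path is a counterexample. Here I would use that $\cF$ corresponds to $\Sigma_m^* \setminus E'$, where $E'$ is the regular abstraction of $\neg\phi_F$ with $\alpha(\neg\phi_F) \subseteq E'$ by Theorem~\ref{thm:correctness}; since $s_i \models \neg\phi_F$ and $t_i \in \alpha(s_i)$, we get $t_i \in E'$, i.e.\ $t_i \notin \cF$, for every $i$. Hence $t_0 t_1 \cdots$ is a $\Lambda$-fair path of $(\cI,\cT)$ that never visits $\cF$, so the liveness property $(\cI,\cT,\cF)$ fails under $\Lambda$, which is the contrapositive of the claim. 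I expect the main obstacle to be exactly this simultaneity: producing \emph{one} abstract path that witnesses both $\Lambda$-fairness and the avoidance of $\cF$, rather than two separately lifted paths. The resolution is that both properties are monotone under the $\alpha$-overapproximation and therefore hold for any tracking choice $t_i \in \alpha(s_i)$, so the single path supplied by the fairness lifting already does double duty.
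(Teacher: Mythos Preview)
Your proposal is correct and matches the paper's approach. In fact, the paper does not give a standalone proof of this theorem at all: it is stated as a direct summary of the preceding discussion (``together with Theorem~\ref{thm:regulation-abstraction-for-liveness}, we can summarize our results in this section as follows''), relying on the same three ingredients you identify---the liveness reduction of Theorem~\ref{thm:regulation-abstraction-for-liveness}, the regular overapproximation $(E,\pred{P})$ of the fairness specification, and weak finiteness of the abstract system from index-boundedness. Your contrapositive lifting argument, and in particular your observation that any abstract tracking path $t_i \in \alpha(s_i)$ simultaneously inherits $\Lambda$-fairness and $\cF$-avoidance because both are monotone under $\alpha$, is exactly the detail the paper leaves implicit when it asserts that a fair concrete path yields a fair abstract path with $t_i \in \alpha(s_i)$.
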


\subsection{Verification of Monodic Temporal Properties}
\label{sec:fair-reduction-for-monodic-formulae}

In this section, we briefly discuss a generic method to verify
a monodic fragment of the temporal array properties. %
This method directly extends the classical tableaux-theoretic
approach for LTL model checking (cf.~\cite{clarke2018model}) to infinite structures.
Similar techniques were employed in~\cite{abdulla2012regular} for verifying LTL(MSO) properties
of parameterized systems, and in~\cite{padon2021temporal} for verifying FO-LTL properties
of first-order transition systems.
The presentation here is essentially adopted from \cite{padon2021temporal}.

Consider a temporal array property $\phi$ over variables $\voc$.
We may eliminate $\F\!$ from $\neg\phi$
using the rule $\F\phi \equiv \neg \G \neg \phi$,
and assume that $\G\!$ is the only temporal connective contained in $\neg\phi$.
Let $sub(\neg\phi)$ denote the set of subformulae of $\neg\phi$. Define
\[
    \voc_{\neg\phi} := \voc \uplus \{\, g_{\varphi} : \G \varphi \in sub(\neg\phi) \},
\]
where each $g_{\varphi}$ is a fresh array variable.
In a nutshell, to check that an array system satisfies $\phi$,
we take the product of the system and a ``monitor'' of $\neg \phi$
over the extended set of variables $\voc_{\neg\phi}$.
The product system will be associated with a fairness condition
such that the projection of a \emph{fair} path from $\voc_{\neg\phi}$ to $\voc$
coincides with a path satisfying $\neg\phi$ in the original array system.
Thus, the temporal property $\phi$ holds for the original system
if and only if the product system has no fair path.

To apply the aforementioned reduction in our abstraction framework, however,
the temporal property $\phi$ needs to meet some technical conditions.
We say that $\phi$ is \emph{monodic}~\cite{hodkinson1999decidable}
if every \emph{temporal} subformula of $\phi$ contains at most one free variable outside $\voc$.
For instance, when $\voc = \{a,k\}$, the property $\forall j.\G a[j]>a[k]$ is monodic,
whilst $\exists i.\forall j.\G a[j]>a[i]$ is not.
We say that $\phi$ is \emph{index-bounded}
\cite{cimatti2022verification,emerson2003reasoning,abdulla2012regular,abdulla2016parameterized}
if there exists an index variable $h \in \voc$
such that each quantified variable of $\phi$ is bounded above by $h$.
This index-boundedness restriction may be imposed syntactically,
for example, by replacing each subformula $\exists i.\psi$
of $\phi$ with $\exists i.\,i \le h \wedge \psi$,
and each subformula $\forall i.\psi$ with
$\forall i.\,i \le h \Rightarrow \psi$.

For a monodic formula $\phi$ over $\voc$, we write $\phi(i)$
to indicate that $i$ is the only free variable of $\phi$ outside $\voc$,
if there is any. In the case that $\phi$ has no such free variable,
we just set $i$ to be an arbitrary index variable outside $\voc$.
We use $\op{FO}[\phi]$ to denote a first-order representation of $\phi$,
defined inductively as follows with $\delta \in \{ \exists, \forall\}$:
\begin{align*}
    \op{FO}[\phi] & := \phi\quad\mbox{($\phi$ is non-temporal)} & \op{FO}[\delta i.\,\phi] & := \delta i. \op{FO}[\phi]\\
    \op{FO}[\G \phi(i)] & := g_{\phi}[i] \neq 0 & \op{FO}[\phi_1 \vee \phi_2] & := \op{FO}[\phi_1] \vee \op{FO}[\phi_2]\\
    \op{FO}[\neg \phi] & := \neg \op{FO}[\phi]  & \op{FO}[\phi_1 \wedge \phi_2] & := \op{FO}[\phi_1] \wedge \op{FO}[\phi_2]
\end{align*}
Now, consider an index-bounded monodic property $\phi$ and an array system $\structT := (\voc, \phi_I, \phi_T)$.
Define a fair transition system $\structT_{f} := (\voc_{\neg\phi}, \psi_I, \psi_T)$
with fairness requirement $\lambda$, where
\begin{align*}
    \psi_I & := \phi_I \wedge \op{FO}[\neg \phi] \wedge \bigwedge\nolimits_{\G\!\varphi(i) \in sub(\neg \phi)} \size{g_{\varphi}} = h\\
    \psi_T & := \phi_T \wedge \forall i.\bigwedge\nolimits_{\G\!\varphi(i) \in sub(\neg \phi)}
        \left(i > h \vee (g_{\varphi}[i] \neq 0 \Leftrightarrow (\op{FO}[\varphi(i)] \wedge g_{\varphi}'[i] \neq 0))\right) \\
      \lambda & := \forall i.\bigwedge\nolimits_{\G\!\varphi(i) \in sub(\neg \phi)} \GF (i > h \vee g_{\varphi}[i] \neq 0 \vee \neg\op{FO}[\varphi(i)])
\end{align*}

The fair transition system $\structT_{f}$ is 
constructed as the product of the original system $\structT$
and a monitor of $\neg \phi$ over the extended variables $\voc_{\neg\phi}$.
Given a subformula $\G \varphi$ of $\neg \phi$,
the monitor updates the array $g_{\varphi}$ along a path of $\structT$
in accordance with whether or not $\G \varphi$ is satisfied by the current path.
More concretely, consider a path $\pi$ of $\structT$ and some $t \in \Nat$.
If $\pi \models \G \varphi(t)$, then the monitor
maintains the array value $g_{\varphi}[t]$ along the path $\pi$
to make sure $\varphi(t)$ is satisfied at every state of the path.
Otherwise, if $\pi \not\models \G \varphi(t)$,
then the monitor maintains the array value $g_{\varphi}[t]$
along the path $\pi$ to make sure $\varphi(t)$ is falsified at some state of the path.
In the second case, the fairness condition $\lambda$ guarantees
that the event of $\varphi(t)$ being falsified
will not be postponed forever.

For a fair path of $\structT_{f}$, %
the projection of the path to the original variables $\voc$ yields
a path of the original system $\structT$ satisfying $\neg \phi$.
Conversely, a path satisfying $\neg \phi$ in the original system $\structT$
can be lifted to the extended variables $\voc_{\neg\phi}$ to obtain a fair path of $\structT_{f}$.
As a consequence, one can verify the temporal array property $\phi$ for $\structT$
by checking fair termination of $\structT_{f}$. %

\begin{proposition}[\cite{abdulla2012regular,padon2021temporal}]
    Let $\phi$ be an index-bounded monodic array property, $\structT$ be an array system,
    and $\structT_{f}$ be the fair transition system induced by $\structT$ and $\phi$.
    Then $\phi$ holds for $\structT$ if and only if $\structT_{f}$ has no fair path.
\end{proposition}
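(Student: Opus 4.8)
The plan is to argue the contrapositive. Since $\phi$ holds for $\structT$ exactly when $\structT$ has no path satisfying $\neg\phi$, and since $\neg\phi$ has been rewritten so that $\G$ is its only temporal connective, it suffices to show that $\structT$ has a path satisfying $\neg\phi$ if and only if $\structT_{f}$ has a fair path. I would establish this by transferring paths across the projection $\voc_{\neg\phi}\to\voc$ that forgets the monitor arrays $\{g_{\varphi}\}_{\G\varphi\in sub(\neg\phi)}$, using the fact that $\psi_I\Rightarrow\phi_I$ and $\psi_T\Rightarrow\phi_T$, so that the projection of any path of $\structT_{f}$ is automatically a path of $\structT$.

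The technical core is a single invariant, proved by induction on the structure of $\neg\phi$: for a fair path $\tilde{\pi}=\tilde{s}_0\tilde{s}_1\cdots$ of $\structT_{f}$ with projection $\pi$, the value of $\op{FO}[\,\cdot\,]$ at $\tilde{s}_0$ agrees with the temporal semantics of the corresponding subformula along $\pi$, and for each $\G\varphi(i)\in sub(\neg\phi)$, each position $t$, and each index $p\le h$, one has $\tilde{s}_t(g_{\varphi})[p]\neq 0$ iff $\pi^t\models\G\varphi(p)$. The only nontrivial inductive clause is the $\G$-case, which splits into two halves. The \emph{soundness} half, $g_{\varphi}[p]\neq 0\Rightarrow\pi^t\models\G\varphi(p)$, follows from the expansion law in $\psi_T$ alone: $g_{\varphi}[p]\neq 0$ forces $\op{FO}[\varphi(p)]$ now and $g_{\varphi}[p]\neq 0$ at the next step, so a forward induction on positions (together with the induction hypothesis equating $\op{FO}[\varphi(p)]$ with $\pi\models\varphi(p)$) yields $\varphi(p)$ at all later times. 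The \emph{completeness} half, $\pi^t\models\G\varphi(p)\Rightarrow g_{\varphi}[p]\neq 0$, is where the fairness $\lambda$ is indispensable: if the flag were ever zero while $\G\varphi(p)$ held, the expansion law would propagate a zero flag forever against a perpetually true $\op{FO}[\varphi(p)]$, so the disjunct $p>h\vee g_{\varphi}[p]\neq 0\vee\neg\op{FO}[\varphi(p)]$ would fail from that point on, contradicting the requirement in $\lambda$ that it hold infinitely often.

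From this invariant the first direction is immediate: $\psi_I$ asserts $\op{FO}[\neg\phi]$ at position $0$, which by the agreement part of the invariant means $\pi\models\neg\phi$, so the projection is a $\neg\phi$-path of $\structT$. For the converse I would lift a given $\neg\phi$-path $\pi$ of $\structT$ by \emph{defining} the monitors so that the invariant holds by construction: set $\size{g_{\varphi}}=h$ and declare $\tilde{s}_t(g_{\varphi})[p]\neq 0$ exactly when $\pi^t\models\G\varphi(p)$ for $p\le h$ (other entries are irrelevant). Then $\psi_I$ holds since $\phi_I$ holds along $\pi$ and $\op{FO}[\neg\phi]$ reflects $\pi\models\neg\phi$; $\psi_T$ holds because the expansion law is just the semantic identity $\pi^t\models\G\varphi(p)\Leftrightarrow(\pi^t\models\varphi(p)\wedge\pi^{t+1}\models\G\varphi(p))$; and $\lambda$ holds because for each $p\le h$ either $\G\varphi(p)$ holds from some point on (flag nonzero cofinally) or $\varphi(p)$ fails infinitely often. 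I expect the completeness half of the invariant to be the main obstacle: isolating the fairness argument so that a spurious ``always-zero'' monitor guess is correctly excluded, and doing so uniformly in the single free index $i$ --- which is precisely where monodicity (so that the monitor state is one array $g_{\varphi}$ indexed by $i$) and index-boundedness (so that the relevant indices are confined to $p\le h$ and the per-index fairness is meaningful via the ``$p>h$'' disjunct) enter the argument.
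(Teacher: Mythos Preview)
Your proposal is correct and matches the paper's approach. The paper does not give a formal proof of this proposition---it is cited from \cite{abdulla2012regular,padon2021temporal}---but the preceding paragraph sketches exactly the argument you spell out: projecting a fair path of $\structT_f$ to $\voc$ yields a $\neg\phi$-path of $\structT$, and conversely a $\neg\phi$-path lifts by setting each $g_\varphi[p]$ to record whether $\G\varphi(p)$ holds on the suffix, with the fairness condition $\lambda$ ruling out monitor guesses that indefinitely postpone a required falsification of $\varphi$. Your structural invariant and the soundness/completeness split for the $\G$-case are the standard way to make this tableaux argument precise; the only minor wrinkle is that for nested $\G$'s you need the agreement between $\op{FO}[\varphi(p)]$ and the temporal semantics of $\varphi(p)$ at \emph{every} position (not just $t=0$), but this follows immediately from part two of your invariant applied to the inner $\G$-subformulae together with the fact that non-temporal atoms are preserved under projection.
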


Furthermore, when $\structT$ is $\neg\phi$-progressing and
the fair transition system $\structT_{f}$ is abstractable,
termination analysis of $\structT_{f}$ is amenable
to our liveness-to-safety reduction techniques in Section~\ref{sec:fairness}.
This fact, in tandem with Theorem~\ref{thm:regulation-abstraction-for-liveness-2}, leads to the following result:

\begin{theorem}
    Let $\phi$ be an index-bounded monodic array property,
    and $\structT$ be a $\neg\phi$-progressing index-bounded array system.
    If the fair transition system $\structT_{f}$ is abstractable,
    then we can compute finite automata $\cI$ and $\cT$
    as well as a fairness requirement $\Lambda$,
    such that $\phi$ holds for $\structT$ if the transition system
    $(\cI, \cT)$ always terminates under the fairness requirement $\Lambda$.
\end{theorem}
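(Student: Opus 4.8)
The plan is to reduce the statement to the fair-termination machinery already in place, via the preceding proposition. By that proposition, $\phi$ holds for $\structT$ if and only if the fair transition system $\structT_{f} = (\voc_{\neg\phi}, \psi_I, \psi_T)$ equipped with fairness requirement $\lambda$ has no fair path. Hence it suffices to exhibit a \emph{sound} regular abstraction of this fair-termination check: finite automata $\cI,\cT$ together with a regular fairness requirement $\Lambda$ such that termination of $(\cI,\cT)$ under $\Lambda$ entails that $\structT_{f}$ has no fair path.

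First I would verify the hypotheses needed to place $\structT_{f}$ within the scope of the liveness-under-fairness method of Section~\ref{sec:fairness} and Theorem~\ref{thm:regulation-abstraction-for-liveness-2}. Abstractability of $\structT_{f}$ is assumed outright. The crucial additional property is that $\structT_{f}$ is \emph{index-bounded}, so that its regular abstractions are weakly finite. This follows from the two index-boundedness assumptions: the index variables inherited from $\structT$ are bounded between their parameters $l_x,h_x$, while every index variable \emph{quantified} in $\phi$ (hence in $\neg\phi$, $\psi_I$, $\psi_T$, and $\lambda$) is syntactically bounded above by the witness $h$ of the index-boundedness of $\phi$. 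The monitor construction adds only the fresh \emph{array} variables $g_{\varphi}$, each sized to $h$; these contribute element-sort state but no new unbounded index variable. Consequently $\structT_{f}$ is index-bounded, and by the remark preceding Theorem~\ref{thm:regulation-abstraction-for-liveness-2} its regular abstractions are weakly finite.

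Next I would carry out the abstraction. Using the constraint-based procedure of Section~\ref{sec:computation-of-regular-abstractions} together with the guarantee of Theorem~\ref{thm:correctness}, I compute regular abstractions $(\cI,\pred{P})$ of $\psi_I$ and $(\cT,\pred{P})$ of $\psi_T$ for a chosen set of indexed predicates $\pred{P}$ over $\voc_{\neg\phi}$. The fairness requirement $\lambda = \forall i.\bigwedge_{\G\varphi(i)} \GF(\cdots)$ is exactly of the shape admitted by the definition of an abstractable fairness specification, so I may overapproximate it by a regular fairness requirement $\Lambda$ as in Section~\ref{sec:fairness}, with the property that every concrete $\lambda$-fair path lifts to a $\Lambda$-fair abstract path with $t_i \in \alpha(s_i)$ for all $i$. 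Weak finiteness then reduces the search for a $\Lambda$-fair infinite abstract run to a reachable-fair-cycle check, that is, to the assertion that $(\cI,\cT)$ \emph{always terminates} under $\Lambda$.

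Finally I would assemble the soundness implication in the single direction required. If $(\cI,\cT)$ always terminates under $\Lambda$, then there is no $\Lambda$-fair infinite abstract run; since the abstraction preserves infinite fair counterexamples (the same overapproximation argument used before Theorem~\ref{thm:regulation-abstraction-for-liveness}), $\structT_{f}$ has no fair \emph{infinite} path. The $\neg\phi$-progress assumption rules out finite maximal counterexamples — every maximal path satisfying $\neg\phi$, and hence the projection of every fair path of $\structT_{f}$, is infinite — so $\structT_{f}$ has no fair path at all, and the preceding proposition yields that $\phi$ holds for $\structT$. The hard part will be the bookkeeping of this last step: making the one-sided overapproximation line up so that \emph{absence} of abstract fair runs implies absence of concrete ones, and checking that $\Lambda$ is conservative in the correct direction (concrete fair $\Rightarrow$ abstract fair) rather than the reverse, together with confirming that the monitor's fresh arrays genuinely preserve index-boundedness and thus weak finiteness.
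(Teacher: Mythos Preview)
Your proposal is correct and follows essentially the same approach as the paper. The paper itself offers only a one-sentence justification---namely, that the result follows from combining the preceding proposition with Theorem~\ref{thm:regulation-abstraction-for-liveness-2}---so your write-up is a faithful and more detailed unpacking of that outline, including the checks (index-boundedness of $\structT_f$, abstractability of $\lambda$, the role of $\neg\phi$-progressing) that the paper leaves implicit.
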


\section{Case Studies}
\label{sec:cases}
We present four case studies in this section:
two array programs implementing selection sort and a simplified version of merge sort,
and two distributed algorithms
Dijkstra's self-stabilizing protocol
and Chang-Roberts leader election protocol.
To make our presentation succinct,
we shall often omit the index range constraint in a formula
when the range is clear from the context.
For example, we shall write $\forall i.\: a[i] = 0$ instead of
$\forall i.\, 1 \le i \wedge i \le \size{a} \Rightarrow a[i] = 0$.

\subsection{Selection Sort and Merge Sort}
\label{sec:selection-sort}
For selection sort, we consider a selection sort program as follows:

\begin{adjustwidth*}{-1em}{}
\resizebox{.98\linewidth}{!}{
\begin{minipage}{1.1\linewidth}
\begin{flalign*}
  & pc = 0 \allows low := 1,\: high := \size{a},\: pc := 1\\
  & pc = 1 \wedge low \ge high \wedge 1 < high \allows high := high - 1,\: low := 1\\
  & pc = 1 \wedge low < high \wedge a[low] \le a[high] \allows low := low + 1\\
  & pc = 1 \wedge low < high \wedge a[low] > a[high] \allows a[high] := a[low],\: a[low] := a[high],\: low := low + 1
\end{flalign*}
\end{minipage}}
\end{adjustwidth*}
\vspace{.8em}

\noindent The array transition system $\structT := \TranSystem$ induced by this program is
\[
     \voc := \{ pc, low, high, a, \size{a} \},\quad
     \phi_I := pc = 0,\quad
     \phi_T := \phi_1 \vee \phi_2 \vee \phi_3 \vee \phi_4,
\]
where $\phi_1,\phi_2,\phi_3,\phi_4$ are transition formulae derived from the four guarded commands.
We verify the typical correctness properties of a sorting algorithm as follows.

\begin{longtable}[]{@{}ll@{}ll@{}}
    \toprule
    Property & Specification & Explanation \tabularnewline
    \midrule
    \endhead
    $P_1$ & $\G \forall i.\forall j.\,(i<j \wedge high<j \Rightarrow a[i] \le a[j])$ & Sortedness \tabularnewline
    $P_2$ & $(\forall i.\: a[i] = a_0[i]) \Rightarrow \G ((\forall i.\exists j.\:a[i] = a_0[j]) \wedge (\forall i.\exists j.\:a_0[i] = a[j]))\quad $ & Permutation \tabularnewline
    $P_3$ & $\F high \le 1$ & Termination \tabularnewline
    \bottomrule
\end{longtable}
\noindent Here,
$P_1$ is a safety property stating that at any step,
the array segment after the pivot position $high$ is sorted;
$P_2$ is a safety property stating that the array produced by the program
has the same content as the input array modulo multiplicities;
$P_3$ is a liveness property stating that the program eventually terminates.
These three properties together establish the total correctness of the program
for arrays with distinct values.

As for merge sort, we consider a nondeterministic merge sort program as follows: %

\begin{adjustwidth*}{0.4em}{}
\resizebox{.98\linewidth}{!}{
\begin{minipage}{1.1\linewidth}
\begin{flalign*}
  & pc = 0 \wedge sorted(low,mid) \wedge sorted(mid,high) \wedge \neg sorted(low,high) \wedge high \ge high_1 \allows low_1 := low,\: pc := 1 \\
  & pc = 1 \wedge low_1 < mid \wedge mid < high \wedge a[low_1] \le a[mid] \allows low_1 := low_1 + 1 \\
  & pc = 1 \wedge low_1 < mid \wedge mid < high \wedge a[low_1] > a[mid] \allows a[n] = a[low_1],\: ptr := low_1,\: pc := 2 \\
  & pc = 1 \wedge \neg (low_1 < mid \wedge mid < high) \allows high_1 := high,\: low := *,\: mid := *,\: high := *,\: pc := 0 \\
  & pc = 2 \wedge ptr < mid \allows a[ptr+1] := a[n],\: a[n] := a[ptr+1],\: ptr := ptr + 1\\
  & pc = 2 \wedge ptr \ge mid \allows a[low_1] := a[n],\: mid := mid + 1,\: low_1 := low_1 + 1,\: pc := 1
\end{flalign*}
\end{minipage}}
\end{adjustwidth*}
\vspace{.8em}

\noindent Here, $sorted(l,h) := \forall i.~l \le i \wedge i < h-1 \Rightarrow a[i]\le a[i+1]$
is an auxiliary formula expressing that the array segment $a[l\ldots h)$ is sorted.
This program starts with $pc=0$ and $high_1=1$, and guesses the values of $low$, $mid$, and $high$ within the range $\NatInt{1}{n}$.
If the program spots a merge opportunity by this guess (i.e.~the guard of the first command is satisfied),
it proceeds to $pc=1$ and performs an in-place array merge.
After the merge, the program returns to $pc=0$ and guesses the values of $low$, $mid$, and $high$ again.
We verify the following eventuality property of the program:
\[
    P_1 := \F (pc=0 \wedge (\neg sorted(low,mid) \vee \neg sorted(mid,high) \vee sorted(low,high) \vee high < high_1)),
\]
which states that the program eventually fails to pinpoint a merge opportunity at $pc=0$.
When $P_1$ holds, any execution run that consistently spots a merge opportunity at $pc=0$
eventually reaches $pc=0$ with no further such opportunities, at which point the array segment $a[1\ldots n)$ is sorted.
In other words, $P_1$ indicates that the program eventually produces a sorted array on the ``proper'' execution runs,
namely, the runs where the program consistently attempts to merge partially sorted array segments.

\subsection{Dijkstra's Self-Stabilizing Algorithm}
\label{case:dijkstra-algorithm}

Dijkstra's self-stabilizing algorithm~\cite{dijkstra1982self},
as introduced in Section~\ref{sec:illustrative-example},
can be specified in our modeling language as follows:

\resizebox{\linewidth}{!}{
\begin{minipage}{1.1\linewidth}
\begin{flalign*}
& pid = 1 \wedge a[pid] = a[{n}] \wedge a[pid] < {k}-1 \allows a[pid] := a[pid] + 1,\: pid := *\\
& pid = 1 \wedge a[pid] = a[{n}] \wedge a[pid] = {k}-1 \allows a[pid] := 0,\: pid := *\\
& pid > 1 \wedge pid \le n \wedge a[pid] \neq a[pid-1] \allows a[pid] :=a[pid-1],\: pid := *
\end{flalign*}
\end{minipage}}
\vspace{.3em}

\noindent Here, we have identified $\size{a}$ with ${n}$ and used $pid$ to denote the ID of the process to be selected by the scheduler.
For simplicity, we assume that only privileged processes can be scheduled. %
Dijkstra's algorithm is progressing under this scheduling assumption,
since there always exists at least one privileged process in the system.
(Recall that we have proved this fact in Example~\ref{ex:state-formula-abstraction-2}.)

Dijkstra's algorithm can be initialized with arbitrarily many privileged processes.
When $n \le k$, the system will converge to a stable state containing exactly one privileged process.
For convenience, define an auxiliary formula
    $priv(i) :=  (i = 1 \wedge a[i] = a[{n}]) \vee (i \neq 1 \wedge a[i] \neq a[i-1])$.
We can then express the self-stabilizing property of Dijkstra's algorithm as
\[
    P := \FG (\exists i.\: priv(i) \wedge \forall j.\: j \neq i \Rightarrow \neg priv(j)),
\]
meaning that the system eventually converges to exactly one privileged process and remains so forever.
To prove the self-stabilizing property $P$,
we create subgoals $P_1, \ldots, P_4$ for the property
as listed in the following table.
    \begin{longtable}[]{@{}ll@{}ll@{}}
    \toprule
    Property & Specification & Explanation \tabularnewline
    \midrule
    \endhead
    $Q_1$ & $a[1] = a[{n}] \wedge pid=1$ & Process 1 is privileged and scheduled \tabularnewline
    $Q_2$ & $\forall i.\: i \neq 1 \Rightarrow a[i] \neq a[1]$ & $x_i \neq x_1$ for all $i\neq 1$ \tabularnewline
    $Q_3$ & $\forall i.\: i \neq 1 \Rightarrow a[i] = a[i-1]$ & Process $i$ is unprivileged for all $i \neq 1$ \tabularnewline
    $Q_4$ & $\exists i.\:priv(i) \wedge \forall j.\: j \neq i \Rightarrow \neg priv(j)\qquad$ & Exactly one process is privileged \tabularnewline
    \midrule
    $P_1$ & $\GF Q_1$ & Recurrence property \tabularnewline
    $P_2$ & $\GF Q_1 \Rightarrow \F Q_2$ & Liveness property under fairness condition\tabularnewline
    $P_3$ & $Q_2 \Rightarrow \F (Q_1 \wedge Q_3)$ & Liveness property \tabularnewline
    $P_4$ & $(Q_1 \wedge Q_3) \Rightarrow \G\, Q_4$ & Safety property \tabularnewline
    $P$ & $\FG\, Q_4$ & Self-stabilizing property \tabularnewline
    \bottomrule
    \end{longtable}
Intuitively,
$P_1$ states that process $p_1$ is privileged and scheduled infinitely often;
$P_2$ states that if process $p_1$ is privileged and scheduled infinitely often,
then eventually variable $x_1$ differs from all the other variables;
$P_3$ states that if $x_1$ differs from all the other variables at some point,
then eventually $p_1$ is the only privileged process;
$P_4$ states that if $p_1$ is the only privileged process, then the system has stabilized.
To verify that Dijkstra's algorithm is self-stabilizing,
it suffices to check the properties $P_1, \dots, P_4$ separately,
since the self-stabilizing property
$P$ is subsumed by $P_1 \wedge P_2 \wedge P_3 \wedge P_4$.

\subsection{Chang-Roberts Algorithm}

The Chang-Roberts algorithm~\cite{chang1979improved}
is a ring-based leader election protocol.
The algorithm assumes that each process has a unique ID,
and that messages can be passed on the ring in the clockwise direction.
At first, all processes are active.
An active process becomes passive after it emits or forwards a message.
The algorithm starts when an active process turns into an initiator
and emits a message tagged with its ID to the next process.
Let $id_i$ denote the ID of Process $i$.
When Process $i$ receives a message tagged with $id$,
it reacts in three cases:
\begin{itemize}
\item
  $id < id_i$ : Process $i$ will purge the message.
\item
  $id > id_i$ : Process $i$ will forward the message and become passive.
\item
  $id = id_i$ : Process $i$ will announce itself as a leader.
\end{itemize}
The rationale behind the protocol is that only the message tagged with
the largest ID will complete the round trip and make its sender the leader.
We specify the protocol as follows:

\begin{adjustwidth*}{-0em}{}
\resizebox{\linewidth}{!}{
\begin{minipage}{1.2\linewidth}
\begin{align*}
& id[pid] < msg[pid+1] \wedge st[pid] = 0 \allows st[pid] := 1,\: pid := * \\
& id[pid] \ge msg[pid+1] \wedge st[pid] = 0 \allows st[pid] := 1,\, msg[pid+1] := id[pid],\, pid := * \\
& msg[pid] \ge id[pid] \wedge msg[pid] < msg[pid+1] \allows msg[pid] := 0,\, st[pid] := 1,\, pid := *\\
& msg[pid] \ge id[pid] \wedge msg[pid] \ge msg[pid+1] \allows msg[pid] := 0,\, st[pid] := 1,\, msg[pid+1] := msg[pid],\, pid := *
\end{align*}
\end{minipage}}
\end{adjustwidth*}
\vspace{.8em}

\noindent These four commands correspond to four operations as follows:
an initiator emits its ID, which is purged by the successor process;
an initiator emits its ID, which is cached by the successor;
a process forwards a message, which is purged by the successor;
a process forwards a message, which is cached by the successor.
As before, we use $pid$ to denote the ID of the scheduled process.
We use $st[i]$ to represent the status of Process $i$, with 0 and 1 standing for active and passive, respectively.
We use $id[i]$ and $msg[i]$ to represent the ID and the message buffer, respectively, of Process $i$.
We stipulate that all process IDs are positive, and $msg[i]=0$ if and only if the message buffer of Process $i$ is empty.
When a process receives multiple messages, it keeps the one attached with the largest ID.
Finally, we modify these commands to distinguish the case $pid = n$:
for this case, we replace $pid+1$ with 1.
We consider two correctness properties of the Chang-Roberts algorithm:
\begin{longtable}[]{@{}ll@{}ll@{}}
\toprule
Formula\quad & Definition & Description \tabularnewline
\midrule
\endhead
$largest(i)$\hspace{1em} & $\forall j.\: id[i] \ge id[j]$ & Process $i$ has the largest ID \tabularnewline
$unique(i)$    & $\forall j.\: j \neq i \Rightarrow id[i] \neq id[j]$\hspace{1.6em} & Process $i$ has a unique ID  \tabularnewline
$elected(i)$  & $msg[i] = id[i]$ & Process $i$ is elected as a leader \tabularnewline
\bottomrule
\end{longtable}
\begin{longtable}[]{@{}ll@{}}
\toprule
Property & Specification \tabularnewline
\midrule
\endhead
$P_1$ &
$\forall i.\, \neg largest(i) \wedge unique(i) \Rightarrow \G \neg elected(i)$  \tabularnewline
$P_2$ &
$(\forall i.\, \GF pid=i) \Rightarrow (\forall i.\, largest(i) \wedge unique(i) \Rightarrow \F elected(i))$ \tabularnewline
\bottomrule
\end{longtable}
\noindent
Here,
$P_1$ is a safety property saying that a process not holding the largest ID is never elected as a leader;
$P_2$ is a liveness property saying that a process holding the largest ID is eventually elected as a leader under the fairness requirement
that each process is scheduled infinitely often.
The validity of these properties relies on the assumption that the process IDs are unique.
Thus, we need to specify this assumption in the properties.

\section{Optimization and evaluation}
\label{sec:evaluation}
\begin{table*}[tb]
\renewcommand\thetable{2}
\protect\caption{Results of applying the $L^*$ learning-based model checker
    \cite{chen2017learning} and the SLRP tool~\cite{lin2016liveness}
    in the abstract analysis of the case studies.
    In the table, $P_0$ denotes the progress conditions,
    and $P_1,\ldots,P_4$ are the temporal properties defined in Section~\ref{sec:cases}.
    The table also shows the indexed predicates we used to compute regular abstractions.
    Note that these predicates contain Skolem constants 
    produced in the formulae rewriting step of our method (see Section~\ref{sec:rmc}).
    For each property, we present the total runtime (Total), the computation time of
    the initial and refined abstract systems (Model), and the time consumed by the model checker (Solver).
    }
\label{tab:array-systems-safety} 
\begin{centering}
\scalebox{0.75}{
\begin{tabular}{|l|c||c|c|c|}
\hline 
\multicolumn{2}{|c||}{\textbf{Safety Properties}} & \multicolumn{3}{c|}{$L^{*}$ }\tabularnewline
\hline 
\textsf{Name}  & \textsf{Indexed Predicates}  & \textsf{Total}  & \textsf{Model} & \textsf{Solver}\tabularnewline
\hline 
\hline 
s.sort $P_{0}$  & $a[\ii]\le a[high],\,a[\ii]\le a[p]$  & 0.5s  & 0.0s & 0.4s\tabularnewline
\hline 
s.sort $P_{1}$  & $a[\ii]\le a[high],\,a[\ii]\le a[p]$  & 3.7s  & 0.1s & 3.4s\tabularnewline
\hline 
s.sort $P_{2}$  & $a[\ii]=a_{0}[p],\,a_{0}[\ii]=a[p]$  & 6.3s  & 0.3s & 5.7s\tabularnewline
\hline 
m.sort $P_{0}$  & $a[\ii]\ge a[n],\,a[\ii]\le a[\ii+1],\,a[\ii]\le a[mid]$  & 0.9s  & 0.1s & 0.8s\tabularnewline
\hline 
Dijk. $P_{4}$  & $a[\ii]=a[\ii-1],\,a[\ii]=a[{n}],\,a[\ii]=a[1]$  & 1.2s  & 0.1s & 1.0s\tabularnewline
\hline 
C.-R. $P_{1}$  & $id[\ii]=id[p],\,id[\ii]<id[q],\,msg[\ii]=id[p],\,msg[\ii]<id[q]$  & 3.3s  & 0.4s & 2.7s\tabularnewline
\hline 
C.-R. $P_{0}$  & $\,msg[\ii]\neq0,\,st[\ii]=0,\,id[\ii]<id[p],\,msg[\ii]<id[p],\,msg[\ii]=id[p]\,$  & 8.4s  & 0.9s & 7.2s\tabularnewline
\hline 
\end{tabular}} 

\vspace{0.5em}
 \scalebox{0.73}{ %
\begin{tabular}{|l|c||c|c|c||c|c|c|}
\hline 
\multicolumn{2}{|c||}{\textbf{Liveness Properties}} & \multicolumn{3}{c||}{$L^{*}$ } & \multicolumn{3}{c|}{SLRP}\tabularnewline
\hline 
\textsf{Name}  & \textsf{Indexed Predicates}  & \textsf{Total}  & \textsf{Model} & \textsf{Solver} & \textsf{Total}  & \textsf{Model} & \textsf{Solver}\tabularnewline
\hline 
\hline 
s.sort $P_{3}$  & no predicates  & 0.7s  & 0.4s & 0.2s & 1.5s & 0.1s & 1.3s\tabularnewline
\hline 
m.sort $P_{1}$  & $a[\ii]\ge a[n],\,a[\ii]\le a[\ii+1],\,a[\ii]\le a[mid]$  & 5m47s & 5m39s & 5s & 1.4s & 0.1s & 1.1s\tabularnewline
\hline 
Dijk. $P_{1}$  & $a[\ii]=a[\ii-1],\,a[\ii]=a[{n}],\,a[\ii]=a[1]$  & t.o. & -- & -- & 12m54s & 3s & 12m50s\tabularnewline
\hline 
Dijk. $P_{3}$  & $a[\ii]=a[\ii-1],\,a[\ii]=a[{n}],\,a[\ii]=a[1]$  & t.o. & -- & -- & 3m14s & 2s & 3m10s\tabularnewline
\hline 
Dijk. $P_{2}$  & $a[\ii]=z,\,a[\ii]=a[1]$  & 9m13s  & 4s & 9m06s & n/a & -- & -- \tabularnewline
\hline 
C.-R. $P_{2}$  & $~st[\ii]=0,\,id[\ii]<id[p],\,msg[\ii]<id[p],\,msg[\ii]=id[p]~$  & 3m04s  & 11s & 2m51s & n/a & -- & -- \tabularnewline
\hline 
\end{tabular}}
    \end{centering}
\end{table*}
\begin{figure}[ht!]
   {\centering
    \scalebox{.85}{
    \begin{mathpar}
        \inferrule* %
        { a' = a\{I \leftarrow a[J]\} \quad a[J] \bowtie a[K] \quad I \neq K }
        { a'[I] \bowtie a'[K] }
        \hspace{1em}
        \inferrule* %
        { a' = a\{J \leftarrow e\} \quad a[I] \bowtie a[K] \quad J\neq I \quad J \neq K}
        { a'[I] \bowtie a'[K] }
        \and
        \inferrule* %
        { x' = J \quad a[I] \bowtie a[K] }
        { a'[I] \bowtie a'[K] }
        \hspace{1em}
        \inferrule* %
        { a' = a\{I \leftarrow a[J]\} \quad a[J] \bowtie n }
        { a'[I] \bowtie n }
        \hspace{1em}
        \inferrule* %
        { a' = a\{J \leftarrow e\} \quad a[I] \bowtie n \quad I\neq J }
        { a'[I] \bowtie n }
        \and
        \inferrule* %
        { x' = J \quad a[I] \bowtie b[K] }
        { a'[I] \bowtie b'[K] }
        \hspace{.85em}
        \inferrule* %
        { a' = a\{I \leftarrow b[J]\} \quad b[J] \bowtie c[K]}
        { a'[I] \bowtie c'[K] }
        \hspace{.85em}
        \inferrule*
        { a' = a\{J \leftarrow e\} \quad a[I] \bowtie b[K] \quad I \neq J }
        { a'[I] \bowtie b'[K] }
    \end{mathpar}}}
    \caption{Example constraint templates for updating and copying array contents, where $\bowtie\; \in \{=, \neq, <, \le, >, \ge\}$.
    Each template consists of multiple \emph{premises} and a \emph{consequence}.
    The leftmost premise corresponds to an update (i.e.~the assignment part of a guarded command),
    and the rest of the premises come from the guard. The consequence copies a predicate value from the current state to the next state.
    These templates exploit a simplifying assumption that one guarded command updates precisely one index variable or array element.}
    \label{fig:constraint-templates}
\end{figure}

We have implemented a prototype to evaluate our approach over the case studies.
The verification of an array system consists of two stages:
(i) computing regular abstractions from the system specification,
and (ii) performing abstract analysis based on these regular abstractions.
When we compute a regular abstraction as described in Section~\ref{sec:abstraction-procedure},
a crucial step is to compute the constraint $\constr{\psi}$ at (\ref{def:abstraction-constraint}),
which could be obtained by enumerating the minimal unsatisfiable cores (MUCs).
For this purpose, we first compute MUCs for $\psi$ using the MUST tool~\cite{bendik2020must}.
Since the number of MUCs could be large,
including all clauses induced by these MUCs in $\constr{\psi}$ is generally impractical.
Instead, our tool computes the constraint in an incremental manner:
we sort these clauses by size, and include a clause
only when the constraints generated by the smaller clauses
lead to spurious counterexample or timeout in the abstract analysis.
As our experimental results have shown,
most of the properties in our case studies can be verified by including
a moderate number of clauses in the state formulae abstractions.

For transition formulae abstractions, %
we additionally use templates to capture constraints involving array updates (Figure~\ref{fig:constraint-templates}).
We syntactically populate these templates with predicates and index terms in $\psi$ and $\pred{P}$ to derive feasible clauses for $\constr{\psi}$.
To illustrate, suppose that $(a[\ii] \le a[n]) \in \pred{P}$ and $\psi := a[p] \le a[n] \wedge a' = a\{i \leftarrow a[p]\}$.
Then, the first template in Figure~\ref{fig:constraint-templates} syntactically derives a feasible clause $\{ i=n,~a'[i] \le a'[n] \}$ for $\constr{\psi}$.
(To see this, note that the first template is in the form of $A \wedge B \wedge C \Rightarrow D$,
which is equivalent to $A \wedge B \Rightarrow \{\neg C, D\}$.
By instantiating $A$ to $a' = a\{i \leftarrow a[p]\}$, $B$ to $a[p] \le a[n]$, $C$ to $i \neq n$, and $D$ to $a'[i] \le a'[n]$, we obtain the desired clause.)
Although such clauses may also be produced using the incremental method described earlier,
we choose to generate them directly using templates as they are essential for computing precise transition formulae abstractions. 

When computing an abstraction formula, we include singleton feasible clauses by default,
and incrementally inject other nontrivial feasible clauses in the refinement loop.
For transition formulae, we furthermore include clauses populated by templates.
We convert the abstraction formulae of an array system specification %
into finite-state automata mechanically, leveraging Mona~\cite{klarlund2001mona} and Z3~\cite{Z3}.
These automata comprise an abstract regular transition system for the concrete specification.
We then analyze this abstract system using suitable regular model checkers in the literature.
For safety properties, we perform the abstract analysis
using the safety verifier by~\cite{chen2017learning},
which employs the $L^*$ learning algorithm to generate inductive invariants.
For liveness properties, we apply two independent, fully automated techniques:
one combines liveness-to-safety reduction techniques (see Section~\ref{sec:fairness})
with the safety verifier by Chen et~al.;
the other exploits the liveness verifier SLRP~\cite{lin2016liveness},
which essentially uses an SAT solver to synthesize well-founded relations.
Note that our case studies contain two liveness properties with fairness requirements
(Dijkstra $P_2$ and Chang-Roberts~$P_2$).
SLRP can only verify liveness under arbitrary schedulers and does not apply to these two properties.

We conducted the experiments on a laptop computer
with a 3.6GHz Intel i7 processor, a 16GB memory limit, and a 20-minute timeout.
Table~\ref{tab:array-systems-safety} presents the results.
For each of these properties,
our prototype tool can compute sufficiently precise regular abstractions
using constraints generated by no more than 10 clauses from
the incremental and template-based methods.
As for the abstract analysis,
the safety properties turn out to be relatively easy:
all of them can be proved by the $L^*$ safety verifier in seconds.
Regarding liveness properties, SLRP successfully proves all four liveness properties not requiring fair schedulers,
while the $L^*$ safety verifier proves only one of them, namely selection sort $P_3$, within the timeout.
This property can be verified without using any indexed predicates; the resulting abstract system thus degenerates into an integer program.
Reducing a liveness property to a safety property could be expensive:
this operation accounts for the majority of the runtime for verifying merge sort $P_1$,
and is likely the main bottleneck in the verification of Dijkstra $P_1$ and $P_3$ by $L^*$.
Interestingly, the reduction takes a relatively short time for Dijkstra $P_2$ and Chang-Roberts $P_2$,
and hence allows the safety verifier to prove them in time.
Since these two properties hold only under fairness requirements, they cannot be handled by SLRP.

\section{Related work}
\label{sec:related}
There is a huge body of research on the verification of array programs and systems.
In this section, we provide some context for our work and discuss related work that has not yet been mentioned elsewhere in the paper.

Prior work exploiting predicate abstraction and interpolation exists in the context of verifying quantified inductive invariants for array programs
\cite{alberti2012safari,lahiri2007predicate,mcmillan2008quantified,jhala2007array,seghir2009abstraction,cimatti2016infinite}.
\emph{Indexed predicates}, which are essentially predicates containing free index variables,
were first introduced by Flanagan and Qadeer~\cite{flanagan2002predicate} to compose universally quantified inductive invariants.
Lahiri and Bryant later extended and formalized this notion in the framework of \emph{indexed predicate abstraction (IPA)} 
\cite{lahiri2004constructing,lahiri2004indexed,lahiri2007predicate}.
IPA encodes abstract state sets as propositional formulae over Boolean variables, and universally quantifies index variables in the predicates when performing concretization. Since post-images of abstract states are generally not computable in this setting, Lahiri and Bryant devised overapproximations of them using a quantifier instantiation heuristic, which enabled the authors to reduce abstract reachability analysis to solving quantified Boolean formulae. In comparison, our abstraction framework encodes abstract state sets as first-order formulae over word variables, and exploits regular languages to overapproximate sets and relations in the abstract domain. These apparatuses allow us to reason about quantified formulae and temporal properties beyond those considered by Lahiri and Bryant. On the other hand, our abstraction function induces infinite abstract systems, for which we employ infinite-state model checking techniques (i.e.~regular model checking~\cite{Parosh12,lin2022regular}).

For a class of array systems that are used to model \emph{multi-threaded programs},
specialized predicate abstraction techniques
have been developed to infer universally quantified inter-thread properties. %
Many techniques along this line have focused on symmetric systems
(i.e.~the system behaves correctly regardless of thread arrangement
\cite{donaldson2011symmetry,donaldson2012counterexample,pani2023thread,basler2009symbolic})
and monotonic systems
(i.e.~the system is equipped with a well-quasi-ordering 
\cite{ranise2010backward,alberti2012lazy,alberti2012safari}).
Some techniques further abstract the target system into a symmetric/monotonic system by
combining predication abstraction with some form of counter or monotonic abstraction
(e.g.~\cite{clarke2006environment,clarke2008proving,abdulla2010constrained,ganjei2016counting,kaiser2017lost}),
thereby improving the expressiveness and effectiveness of the abstraction methods.
Most of these techniques were designed for safety verification, and it is unclear whether they could be extended to handle liveness properties effectively.

Liveness verification of array systems is much more difficult
than safety verification, and therefore has relatively fewer automatic techniques and tool supports.
In the context of multi-threaded programs,
\emph{thread-modular analysis}
\cite{cook2007proving,malkis2007precise,popeea2012compositional,ketema2017termination,pani2021rely,pani2023thread}
is a popular verification methodology.
The analysis considers each thread in isolation
and overapproximates the behavior of the other threads
by assuming that their effects are passive or irrelevant.
Thread-modular analysis is therefore not suitable for verifying
liveness properties that require coordination among threads.
For example, in the case of Dijkstra's self-stabilizing algorithm,
proving that Process 1 eventually updates its local variable requires showing that the other processes collectively
make progress on updating their own variables.
To reason about liveness properties relying on thread coordination,
Farzan et al.~\cite{farzan2016proving} introduced \emph{well-founded proof spaces} and \emph{quantified predicate automata (QPAs)}
for parameterized verification. Like our approach, well-founded proof spaces take a language-theoretic view of termination
and reduce checking termination to showing the absence of lasso-shaped counterexample paths. 
The verification procedure provided by Farzan et al.~explores symmetric proof spaces (i.e.~proofs closed under permutation of thread identifiers).
Also, the procedure requires to iteratively check language inclusion of QPAs, which is itself an undecidable problem
and yet to have an effective implementation.

\emph{Ivy} \cite{mcmillan2018deductive,mcmillan2020ivy}
is a deductive verification system that offers
an automatic liveness-to-safety proof tactic for properties
specified in LTL and a decidable fragment of pure first-order logic
called \emph{Effectively Propositional Logic (EPR)}.
Specifically, Ivy reduces liveness to safety for infinite-state systems through
\emph{dynamic abstraction}~\cite{padon2021temporal,padon2017reducing},
which is an overapproximation of cycle detection by dynamically choosing a finite projection of an infinite path.
Despite the automatic reduction to safety, the resulting safety property still needs to be proved by the user.
Moreover, Ivy does not support theory reasoning inherently,
i.e., theories have to be encoded in EPR.
Notably, in Ivy's abstraction scheme, the concrete specification, the abstract system, and the safety proof are all expressed in EPR.
Our logical formalism is not limited to EPR and directly supports array logic with background element theories.
(Our presentation has used the theory of Difference Arithmetic for array elements,
as the theory already suffices to analyze our case studies.
But in principle, it is easy to adapt our techniques to other background theories such as Linear Integer Arithmetic.)

\section{Concluding remarks}
\label{sec:conc}

By combining indexed predicate abstraction, decision procedures, automatic structures, and regular model checking,
we present a novel framework to verify linear-time properties of array systems.
Given a first-order correctness specification,
our framework provides a systematic method to compute regular overapproximations of the specification as a string rewriting system, which allows us to exploit a wealth of regular model checking techniques for analyzing both safety and liveness properties.
Our experimental results show that this approach is able to verify non-trivial properties of array systems in several interesting case studies.

There are several immediate future research directions. Firstly, existing regular model checking techniques do not have good support for large alphabets. For this reason, the size of the resulting regular abstraction may grow exponentially in the number of predicates, especially with explicit representations of the letters (i.e.~bitvectors). This leads to the following question: {is it possible to extend regular model checking with symbolic representations of the alphabet symbols?} We believe that the answer to this question is positive, given promising work in symbolic automata learning~\cite{learning-symaut,learning-symaut2} and bitvector theory SMT solving \cite{yao2020fast,shi2021coqqfbv,peled2023smt}.
Secondly, our abstraction computation procedure currently uses simple incremental and template-based methods to search for constraints. We are confident that more sophisticated techniques such as the refinement-based search in IC3~\cite{ic3-array-bjorner,cimatti2014ic3} can be integrated with our procedure for constraint discovery.
Lastly, we have assumed in this work that appropriate indexed predicates have been supplied. 
Generating nontrivial indexed predicates for array properties is very challenging. In our case studies, we extract predicates from the atomic formulae of system and property specifications. 
Thus far, this suffices for our current goal, which has been to demonstrate the expressive power and viability of regular abstractions for array systems in interesting case studies. The next step of our research is, therefore, to study automatic generation and refinement of indexed predicates.

\begin{acks}
We thank the anonymous reviewers for their insightful comments and corrections.
Chih-Duo Hong is partly supported by the National Science and Technology Council, Taiwan, under grant number 112-2222-E004-001-MY3. 
Anthony Lin is supported by \grantsponsor{501100000781}{European Research Council}{http://dx.doi.org/10.13039/501100000781} under European Union’s Horizon
2020 research and innovation programme (grant agreement no \href{https://doi.org/10.3030/101089343}{\grantnum{501100000781}{101089343}}).
\end{acks}

\bibliography{references}
\end{document}